\newtheorem{pf}{Proof}
\newtheorem{prop}{Proposition}
\newtheorem{remark}{Remark}
\newtheorem{assumption}{Assumption}
\DeclareMathOperator*{\Dg}{\mathsf{diag}}
\renewenvironment{pf}{{ {\textbf{Proof}}$\;$}}{\par}
\crefname{hypothesis}{Hypothesis}{Hypotheses}
\title{General SIS diffusion process with indirect spreading pathways on a hypergraph
}
\author{Shaoxuan Cui\thanks{Bernoulli Institute, Faculty of Science and Engineering, University of Groningen, The Netherlands 
  (\email{s.cui@rug.nl}).}
\and Fangzhou Liu\thanks{Research Institute of Intelligent Control and Systems, School of Astronautics, Harbin Institute of Technology, China 
  (\email{fangzhouliu.ac@gmail.com}).}
\and Hildeberto Jard{\'o}n-Kojakhmetov \thanks{Bernoulli Institute, Faculty of Science and Engineering, University of Groningen, The Netherlands (\email{h.jardon.kojakhmetov@rug.nl}).}
\and
Ming Cao \thanks{ENTEG, Faculty of Science and Engineering, University of Groningen, The Netherlands (\email{m.cao@rug.nl}).}
}
\newcommand*{\addFileDependency}[1]{
  \typeout{(#1)}
  \@addtofilelist{#1}
  \IfFileExists{#1}{}{\typeout{No file #1.}}
}
\newcommand*{\myexternaldocument}[1]{%
    \externaldocument{#1}%
    \addFileDependency{#1.tex}%
    \addFileDependency{#1.aux}%
}
\begin{document}
\nolinenumbers
\maketitle

\begin{abstract}
Conventional graphs captures pairwise interactions; by contrast, higher-order networks (hypergraphs, simplicial complexes) are introduced to the interactions involving more parties, which has been rapidly applied to characterize a growing number of complex real systems. It becomes apparent that higher-order interactions give rise to new challenges for the rigorous analysis of dynamics on a higher-order network.
In this paper, we study a series of Susceptible-infected-susceptible-type (SIS-type) diffusion processes with both indirect and direct pathways on a directed hypergraph. We start with a concrete setup where we choose a polynomial interaction function to describe how several agents influence one another over a hyperedge. Then, under this interaction function, we further extend the system and propose a bi-virus competing model on a directed hypergraph by coupling two single-virus models together. Finally, the most general model in this paper considers a general interaction function for both the single-virus and bi-virus cases. For the single-virus model, we provide a comprehensive characterization of the healthy state and endemic equilibrium. For the bi-virus setting, we further give the analysis of the existence and stability of the healthy state, dominant endemic equilibria, and coexisting equilibria. All theoretical results are supported additionally by some numerical examples.

\end{abstract}

\begin{keywords}
  Susceptible-infected-susceptible (SIS) model, Vector transmission, Networked systems, Hypergraphs, Dynamical systems
\end{keywords}

\begin{AMS}
  05C65, 34D05, 34C12, 37N25, 92D30
\end{AMS}

\section{Introduction}
Mathematical modeling of epidemic spreading processes is crucial in epidemiology and has a long history of scientific study. Especially during the recent Covid-19 period, epidemic models have demonstrated even greater importance \cite{vespignani2020modelling,stella2022role}. The precise and timely  prediction \cite{achterberg2022comparing}, parameter identification \cite{sauer2021identifiability,ye2021parameter} and design of control strategies \cite{kohler2021robust
,carli2020model,kovacevic2022distributed,yi2022edge} using epidemic models have provided us with strong guidance to fight against the pandemic. Besides the scenario of the spreading of a virus, epidemic models also show great potential to describe very general diffusion processes, including information diffusion in social networks~\cite{li2017survey,zhu2014demographic,8954786}, computer virus dissemination \cite{lloyd2001viruses,kephart1992directed,yang2012towards} and others \cite{illner2016sis,rodrigues2016can}. 

Among various epidemic models, compartmental models are especially popular \cite{mei2017dynamics,zino2021analysis,pare2020modeling}. In such a framework, each compartment indicates a state, and transmissions between compartments show changes between the states of infections. For example, usually, $S$ denotes the susceptible, $I$ is the infected, and $R$ is the recovered states. By considering various combinations of state transitions, epidemic models, such as SIS, SIR, and SIRS, have been proposed. In the early stage of the development of compartmental epidemic models, a relatively simple modeling framework, called \emph{scalar models}, was proposed and studied. In this framework, a single population is studied and is assumed to be well-mixed, i.e., everyone within the population has the same chance to get infected. This simple model is detailed in, e.g., the survey \cite{hethcote2000mathematics}.

It is evident that such a simple framework with its assumption on homogeneity provides only a relatively rough approximation of the real complex diffusion process and fails to capture the heterogeneity in the real world (for example, age structures, spatial diversity, and social behaviors). In order to overcome this drawback, the population can be divided into several sub-populations, and a networked structure (hence referred to as \emph{network models}) is further introduced to describe the interactions among the sub-populations. For instance, the contributions to the networked SIS models include those in continuous-time \cite{van2008virus,van2013homogeneous,khanafer2014stability} and others in discrete-time \cite{pare2018analysis,prasse2019viral,liu2020stability}. By further considering that two or more viruses spread over the network, the bi- or multi-virus networked SIS models have also been considered, for example in \cite{ye2022convergence,liu2019analysis,gracy2022endemic,pare2021multi,pare2020analysis}. All these above-mentioned models assume that the interaction between sub-populations is linear and thus additive. Such an assumption may be violated under some situations since the infection process sometimes may have the saturation effect and thus slow down \cite{liu1987dynamical,yang2015impact,yuan2012modeling,ruan2003dynamical}. With the introduction of non-linear interaction functions, the networked bi-virus SIS model \cite{yang2017bi} is able to capture the saturation effect of the infection process.

\emph{Networked epidemic models} still use the conventional graphs, one edge of which just connects one node to another. This leads to the restriction that dynamical systems on conventional graphs only take pairwise interactions into account. In contrast, it can be frequently observed that people not only interact with each other one one one but also as a group. Recently, it has been shown that higher-order networks, including \emph{hypergraphs} and \emph{simplicial complexes}, provide a powerful modeling tool to characterize effects beyond pairwise interactions \cite{bick2021higher}. In this higher-order network structure, one hyperedge can contain more than two nodes and it is thus suitable to model group-wise multi-body interactions. Along this line, an SIS model on a simplicial complex is proposed and further studied via a mean-field approximation in \cite{iacopini2019simplicial}. In each simplex, the healthy agent can only get infected via the hyperedge when all other agents in the hyperedge are infected. Followed by this modeling framework and by further modifications, a similar SIS model on a hypergraph is proposed and studied in \cite{cisneros2021multi}. By calculating the probabilities of infection and using epidemic link equations \cite{matamalas2018effective}, another type of SIS models on a simplicial complex is studied in \cite{matamalas2020abrupt}. In \cite{de2020social}, an SIS model on a hypergraph is proposed, and most analytical results are achieved via adopting a special network structure (hyperstar or homogeneous hypergraph). Furthermore, the model they studied is a threshold model and one agent may get infected through the hyperedge when more than a certain amount of the agents in the hyperedge are infected. By considering that one may react differently in different surroundings, an appropriate model is further proposed and studied in \cite{st2021universal}. In \cite{li2022competing}, a bi-virus SIS model on a simplicial complex has been studied. However, under the mean-field approximation and homogeneous assumption, the model in \cite{li2022competing} is approximated by a planar system, and the coexistence results are established purely via simulations, which restricts the model from wider applications. 

Apart from discrepancies in network topology, all the aforementioned models only describe direct human-to-human spreading processes. However, in a real epidemic process, especially those concerning waterborne or insect-borne viruses, the \emph{pathogen} also spreads through different kinds of media. Analogously, for information diffusion on social networks, a post on Twitter or a video on Tiktok both facilitates information spreading. For this reason, it is also of great significance to capture the indirect spreading process through different types of media. This is usually realized by coupling the aforementioned epidemic models with extra pathogen dynamics on resources (or media). With this idea, the SIWS\footnote{Here, the $W$ stands for ``water'' but it is only indicative of the medium, and is by no means restricted to it.}\footnote{In this paper, SIWS means susceptible-infected-water-susceptible, and the use of $W$ is different from waning immunity, which has also been studied in \cite{chaves2007loss,lavine2011natural,jardon2021geometric}.} model on a graph, which is an SIS model coupled with pathogen dynamics, is proposed and studied by, for example, \cite{gracy2022analysis,janson2020networked,janson2020analysis,pare2022multi,liu2019networked,cui2022discrete}. However, the analysis of SIWS models on higher-order networks is, although extremely relevant, still missing.

The main purpose of this paper is to propose and study a general framework for spreading processes on hypergraphs. It is then very natural to think about an SIWS model on a higher-order network potentially with non-linear and non-polynomial interaction functions. 


Briefly, we list the contributions and organization of our paper as follows.
Firstly, in section \ref{sec:model}, making use of a specific choice of interaction function, we propose a single-virus SIWS model on a directed hypergraph, which incorporate both a direct human-to-human pathway and an indirect human-resource-human pathway. For the modeling, we introduce interaction functions, the spreading mechanism of different orders, and higher-order network structures (hypergraphs, simplicial complex). 

In section \ref{sec::main}, we give a full analysis of the model proposed in section \ref{sec:model}. We give the results regarding the healthy state and the endemic equilibrium. Especially, we find out that bistability may occur due to the introduction of higher-order interactions related to the hypergraph structure. We also confirm that the proposed model is an irreducible monotone system and presents no chaos.

In section \ref{sec:bi}, we further consider a bi-virus competing scenario and provide abundant analytical results. Particularly, we give the results regarding the healthy state, dominant endemic equilibrium, and coexisting equilibrium. The bi-virus system is also an irreducible monotone system and presents no chaos.

In section \ref{sec:abstract}, to further generalize our framework by considering an abstract interaction function and prove similar results, for both the single-virus and the bi-virus cases. This general system remains irreducible monotone.

Finally, in section \ref{sec:sim}, some numerical examples are given to highlight the theoretical contributions.

\section{Preliminaries}
{A tensor $T\in\mathbb{R}^{n_1\times n_2 \times \cdots \times n_k}$ is a multidimensional array, where the order is the number of its dimension $k$ and each dimension $n_i$, $i=1,\cdots,k$ is a mode of the tensor. A tensor is cubical if every mode has the same size, that is $T\in\mathbb{R}^{n\times n \times \cdots \times n}$. We further write $k$-th order cubical tensor as $T\in\mathbb{R}^{n\times n \times \cdots \times n}=\mathbb{R}^{[n,k]}$. A cubical tensor $T$ is called supersymmetric if $T_{j_1 j_2 \ldots j_k}$ is invariant under any permutation of the indices.}

The definition of a hypergraph that we consider in this paper follows from \cite{gallo1993directed}. Furthermore, we introduce a set of several tensors to collect the information of the weights. A weighted and directed hypergraph is a triplet $\mathbf{H}=(\mathcal{V},\mathcal{E}, A)$. The set $\mathcal{V}$ denotes a set of vertices and $\mathcal{E}=\{E_1, E_2, \cdots \}$ is the set of hyperedges. A hyperedge is an ordered pair $E=(\mathcal{X},\mathcal{Y})$ of disjoint subsets of vertices; $\mathcal{X}$ is the tail of $E$ while $\mathcal{Y}$ is the head. According to the physical meaning of the scenario of a diffusion process, we assume that each hyperedge has only one tail but one or multiple ($\geq 1$) heads. The set of tensors $A=\{A_2, A_3,\cdots\}$ collect the weights of all hyperedges, where $A_2=[A_{ij}]$ collects the weights of all second-order hyperedges, $A_3=[A_{ijk}]$ collects the weights of all third-order hyperedges, and so on. For instance, $A_{ijkl}$ denotes the weight of the hyperedge where $i$ is the tail and $j,k,l$ are the heads. For simplicity, in this paper, we also use the weight (for example, $A_{\bullet}$) to denote the corresponding hyperedge. If all hyperedges only have one tail and one head, then the network is a standard directed and weighted graph. As a special case, a weighted and undirected hypergraph is a triplet $\mathbf{H}=(\mathcal{V},\mathcal{E}, A)$, where $\mathcal{E}$ is now a finite collection of non-empty subsets of $\mathcal{V}$ \cite{bick2021higher}. {An undirected hypergraph can be represented as a supersymmetric tensor. For simplicity, we can use the notation $\pi(\bullet)$ , where $\pi(\bullet)$ denotes any permutation of $\bullet$, to indicate the undirected hyperedge.  For example, an undirected hyperedge can be written as $A_{\pi(ijk)}$ since $A_{ijk},A_{ikj},A_{kij},A_{kji},A_{jik},A_{jki}$ take the same value in an undirected graph.} An undirected simplicial complex is a special type of undirected hypergraph that contains all nonempty subsets of hyperedges as hyperedges.  If $A_{\pi(ijk)}$ is a hyperedge of the simplicial complex then $A_{\pi(ij)}, A_{\pi(ik)},A_{\pi(jk)}$ are also hyperedges of the simplicial complex. In order to capture the heterogeneity of the spreading process, we extend the definition of the simplicial complex into a directed simplicial complex. That is, a simplicial complex is a special type of directed hypergraph such that 
 for all $E=(\mathcal{X},\mathcal{Y})\in \mathcal{E}$ and $\tilde{E}=(\mathcal{\tilde{X}},\mathcal{\tilde{Y}}),\mathcal{\tilde{X}}\subset\mathcal{{X}},\mathcal{\tilde{Y}}\subset\mathcal{{Y}}$, then $\tilde{E}\in \mathcal{E}$;
For example, if $A_{ijk}$ is a hyperedge of the directed simplicial complex then $A_{ij}, A_{ik}$ are also hyperedges of the simplicial complex.

\emph{Notation:} $\mathbb{R}$ and $\mathbb{N}$ denote the set of real numbers and nonnegative integers, respectively. Given a square matrix $M \in \mathbb{R}^{n \times n}$, $\rho(M)$ is the spectral radius of $M$, which is the largest absolute value of the eigenvalues of $M$. By $s(M)$, we refer to the the largest real part among the eigenvalues of
$M$. For a matrix $M \in \mathbb{R}^{n \times r}$ and a vector $a \in \mathbb{R}^n$, $M_{ij}$ and $a_{i}$ denote the element in the $i$th row and $j$th column and the $i$th entry, respectively. For any two vectors $a, b \in \mathbb{R}^n$, $a \gg (\ll) b$ represents that $a_i >(<) b_i$, for all $i=1,\ldots,n$; $a > (<) b$ means that $a_i \geq (\leq) b_i$, for all $i=1,\ldots,n$ and $a \neq b$; and $a \geq (\leq) b$ means that $a_i \geq (\leq) b_i$, for all $i=1,\ldots,n$ or $a = b$. These component-wise comparisons are also applicable to matrices with the same dimension. The vector $\mathbf{1}$ ($\mathbf{0}$) represents the column vector or matrix of all ones (zeros) with appropriate dimensions. The matrix $I$ represents the identity matrix with the appropriate dimension.
\section{Model Set-up}\label{sec:model}
In this section, we introduce our model set-up and some important concepts (spreading mechanisms, interaction functions, indirect spreading and so on).

\subsection{General spreading mechanism and SIS model set-up} \label{subsection:GM}
We first consider the directed simplicial complex structure and explain the physical meaning of this network structure. The directed hypergraph structure allows us to study more general dynamics compared with the undirected one. We adopt the following setting based on the physical interpretation of a spreading process. {Firstly, we recall that there is only one tail agent and the other agents are all head agents in each hyperedge.} The hyperedge describes all head agents' joint influence on the tail agent as all agents gather in the group. For example, a weight of a hyperedge of a $2$-simplex is denoted as $A_{ijk}$ and is interpreted as the joint influence of $j$ and $k$ on $i$. Here, we always set the first index in the subscript as the tail agent and all other are heads. We also assume that if the agents appear in a hyperedge, then the hyperedge with other combinations of these agents must also exist. For example, if $A_{ijk}$ exists, then $A_{jik}$, $A_{kij}$, $A_{ikj}$, $A_{jki}$, $A_{kji}$ must also exist. Since usually $A_{ijk}$ and $A_{ikj}$ represent the same influence, we set them equal. Note also that the physical meaning of $A_{jik}$ and $A_{ikj}$ are usually different, they will take usually different values. Figure \ref{fig:hyperedge} shows examples of fourth-order hyperedges.


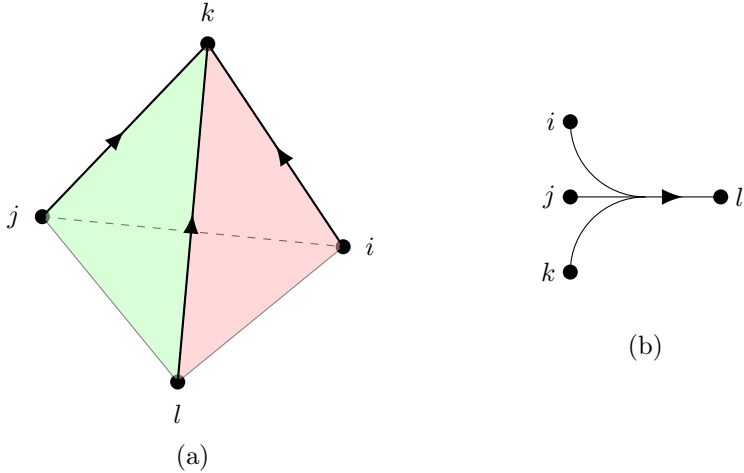
\begin{figure}
    \centering
    \begin{tikzpicture}
    \node at (-2,0){
    \begin{tikzpicture}
    \coordinate (i) at (2, -.2);
    \coordinate (j) at (-2, 0.2);
    \coordinate (k) at (0.2, 2.5);
    \coordinate (l) at (-.2, -2);
    \foreach \m in {i, j, k, l}{
        \fill[black] (\m) circle (0.1);
    }
    \draw[dashed] (j)--(i);
    \draw[-, fill = red!30, opacity = .5] (i) -- (l) -- (k) --cycle;
    \draw[-, fill = green!30, opacity = .5] (j) -- (l) -- (k) --cycle;

    \foreach \m in {i, j, k, l}{
        \fill[black] (\m) circle (0.05);
    }

    \begin{scope}[thick,decoration={
    markings,
    mark=at position 0.5 with {\pgftransformscale{1.5}\arrow{latex}}}
    ] 
    \draw[postaction={decorate}] (i)--(k);
    \draw[postaction={decorate}] (j)--(k);
    \draw[postaction={decorate}] (l)--(k);

    \node at (i) [label={[label distance=1pt]0:$i$}] {};
    \node at (j) [label={[label distance=1pt]180:$j$}] {};
    \node at (k) [label={[label distance=1pt]90:$k$}] {};
    \node at (l) [label={[label distance=1pt]270:$l$}] {};
\end{scope}
\node at (0,-3) {(a)};
    \end{tikzpicture}
    };
    \node at (4,0){
    \begin{tikzpicture}
        \fill (0,1) circle (0.1) node[left=2pt]{$i$}; 
        \fill (0,0) circle (0.1) node[left=2pt]{$j$}; 
        \fill (0,-1) circle (0.1) node[left=2pt]{$k$}; 
        \fill (2,0) circle (0.1) node[right=2pt]{$l$}; 

        \draw[-] (0,1) to[out=-90, in=180](1,0);
        \draw[-] (0,0) -- (1,0);
        \draw[-] (0,-1) to[out=90, in=180](1,0);
        \draw[postaction={decorate,decoration={
    markings,
    mark=at position 0.5 with {\pgftransformscale{2}\arrow{latex}}}}] (1,0)--(2,0);
    \node at (1,-2) {(b)};
    \end{tikzpicture}
    };
    \end{tikzpicture}
    \caption{Graphical representations of two fourth-order hyperedges. (a) A directed hyperedge $A_{kijl}$ in a simplicial complex. If $A_{kijl}$ is in the simplicial complex, then the (directed) $3$-hyperedges $A_{kjl}$, $A_{klj}$, $A_{kil}$, $A_{kli}$, $A_{kij}$ and $A_{kji}$ are also in the simplicial complex. Moreover, $A_{kil}$ further indicates that the directed $2$-hyperedges $A_{ki}$ and $A_{kl}$ are in the simplicial complex (which are, of course, conventional directed edges). The directions of $A_{jl}$, $A_{li}$, and $A_{ij}$ are not relevant for this example. (b) A (general) directed hyperedge $A_{lijk}$. We remind the reader that we only consider hyperedges with one tail, hence in this paper the first index always denotes the tail.}
    \label{fig:hyperedge}
\end{figure}

Next, in order to specify our model, we need to know about which mechanism the spreading process obeys. For example, a pairwise spreading mechanism is clear from e.g. \cite{cui2022discrete,pare2018analysis,van2008virus}. An agent can only get infected  {only if} its neighbor in the graph is infected. As for higher-order interactions, in \cite{iacopini2019simplicial}, it is the rule that if all the other (head) agents in the simplex are infected, then the (tail) agent also has a chance to get infected. However, inspired by some threshold models similar to \cite{de2020social}, we consider a more general rule of the spreading mechanism. That is, for an $n$-simplex (a hyperedge of $n$ agents), the tail agent may get infected if and only if at least $m$ of the $n-1$ head agents are infected, where $m\leq n-1$. We say the rule described above is an $m$-th-order rule of an $n$-body interaction. If $m=n-1$, we say that the interaction is a full-order $n$-body interaction. 

Now, after knowing the interaction rule, we can specify the interaction function of the spreading process. Firstly, we consider the simple case of pairwise interactions. If there is a directed edge with the weight $A_{ij}$, then the influence of $j$ on $i$ is described as an interaction function $\beta_i A_{ij} X_j(t)$, where $\beta_i$ is the intrinsic parameter of agent $i$ to describe how likely $i$ is infected when the spreading rule is satisfied; the weight $A_{ij}$ describes how close the contact between $i$ and $j$ is; $X_j(t)$ is a binary variable denoting whether $j$ is infected ($X_j=1$) or not ($X_j=0$). One can observe that this interaction is well-defined because it is non-negative and upper bounded by $1$, which means that the contribution of $j$ to the possibility of $i$ getting infected is in the interval $[0,1]$. Furthermore, as long as $j$ is healthy, $j$ will never infect $i$.

When it comes to an $m$-th order $n$-body interaction, we define the interaction function as
\begin{equation}\label{eq:interaction}
    \beta_{in} A_{ijk\dots} \frac{\Sigma^m_n}{C_{n-1}^{m}},
\end{equation}
where the subscript $n$ in $\beta_{in}$ denotes that the parameter is w.r.t $n$-body interaction, $\Sigma^m_n$ is the sum of all the polynomial combinations (ignoring the order) of $m$ binary variables of the head agents and $C_{n-1}^{m}$ is the combination $m$ out of $n-1$. For example, if we consider a second-order $4$-body interaction, we get the interaction function of 
\begin{equation}\label{eq:aijkl}
    \beta_{i4} A_{ijkl} \frac{X_{j}X_{k}+X_{j}X_{l}+X_{k}X_{l}}{C_{3}^{2}}.
\end{equation}

In the case of the full-order $(m+1)$-body interaction, the interaction function is  $\beta_{i(m+1)} A_{ijk\dots} X_j X_k \cdots$. Thus, one can use multiple lower-order hyperedges where the interaction is full-order to replace the higher-order hyperedge where the interaction is not full-order. 
\begin{lemma}\label{lem:eq}
    Any not-full-order ($l$-order, $l<m$) ($m+1$)-body interaction \eqref{eq:interaction} can be replaced by a composition of multiple full-order $(l+1)$-body interactions, where $l<m$.
\end{lemma}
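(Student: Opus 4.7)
The plan is to unpack the definition of $\Sigma^l_{m+1}$ in \eqref{eq:interaction} and recognize each summand as a full-order $(l+1)$-body interaction term, then absorb the prefactor into appropriately defined hyperedge weights. Write the $l$-order $(m+1)$-body interaction on the hyperedge $A_{ij_1 j_2 \cdots j_m}$ as
\begin{equation*}
\beta_{i(m+1)} A_{ij_1 j_2 \cdots j_m} \frac{1}{C_m^{l}} \sum_{\{k_1,\dots,k_l\}\subset\{j_1,\dots,j_m\}} X_{k_1} X_{k_2} \cdots X_{k_l},
\end{equation*}
where the sum is over the $C_m^{l}$ distinct $l$-subsets of the head agents. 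Each individual monomial $X_{k_1}\cdots X_{k_l}$ has exactly the algebraic shape of a full-order $(l+1)$-body interaction on the $(l+1)$ agents $\{i,k_1,\ldots,k_l\}$, namely $\beta_{i(l+1)}\tilde A_{ik_1\cdots k_l}X_{k_1}\cdots X_{k_l}$.

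Next I would construct the replacement explicitly. For every $l$-subset $S=\{k_1,\ldots,k_l\}$ of the head agents, introduce (or augment, if such a hyperedge is already present in the network) a full-order $(l+1)$-body hyperedge with tail $i$ and heads $S$, with weight
\begin{equation*}
\tilde A_{ik_1 k_2 \cdots k_l} \;=\; \frac{\beta_{i(m+1)}}{\beta_{i(l+1)}}\,\frac{A_{ij_1 j_2 \cdots j_m}}{C_m^{l}}.
\end{equation*}
Summing the resulting $C_m^{l}$ full-order interaction terms returns the original $l$-order $(m+1)$-body interaction by construction; this is the ``composition'' asserted in the lemma. Symmetry of the expression under permutations of $\{j_1,\ldots,j_m\}$ is automatic, since every $l$-subset is treated identically.

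The verification is then a direct algebraic identity and needs no estimate or analytic argument. The only issue to be careful about is bookkeeping: several of the newly created $(l+1)$-body hyperedges might coincide with pre-existing hyperedges of the network, or with hyperedges produced by the same reduction applied to a different $(m+1)$-body hyperedge that shares head agents. In that case the corresponding contributions are simply added to the existing weight, which remains non-negative since every term in the construction is non-negative. I expect this combinatorial bookkeeping, rather than any substantive computation, to be the only subtlety in the proof.
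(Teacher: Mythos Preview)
Your proposal is correct and follows essentially the same approach as the paper: observe that $\Sigma^l_{m+1}$ is a sum of degree-$l$ monomials, each of which is exactly the form of a full-order $(l+1)$-body interaction, and absorb the combinatorial prefactor into new hyperedge weights. Your version is in fact more detailed than the paper's, which merely says ``by rearranging the weights'' without writing down the explicit formula $\tilde A_{ik_1\cdots k_l}=\tfrac{\beta_{i(m+1)}}{\beta_{i(l+1)}}\tfrac{A_{ij_1\cdots j_m}}{C_m^l}$ or addressing the bookkeeping for coinciding hyperedges.
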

\begin{pf}
    Firstly, consider the equation \eqref{eq:interaction}. If the interaction is not in full order, $\Sigma^l_n$ is a composition of several products among $l$ variables. Each of these products corresponds to a full-order $(l+1)$-body interaction. By rearranging the weights, any not-full-order ($m+1$)-body interactions \eqref{eq:interaction} can be replaced by a composition of multiple full-order $(l+1)$-body interactions, where $l<m$.
\end{pf}
For instance, $A_{ijkl}$ in the interaction function \eqref{eq:aijkl} can be replaced by a combination of $A_{ijk}$, $A_{ikj}$, $A_{ikl}$, $A_{ilk}$, $A_{ijl}$, $A_{ilj}$. Due to the simplicial complex structure, if $A_{ijkl}$ exists, all the links $A_{ijk}$, $A_{ikj}$, $A_{ikl}$, $A_{ilk}$, $A_{ijl}$, $A_{ilj}$ exist too. By choosing again a proper weight, one only needs to consider full-order interactions under this choice of the interaction function. Note that, by doing so, the model originally based on a simplicial complex is now constructed on a hypergraph instead, because some lower-order interaction may be absent while the corresponding higher-order interaction may still exist after the replacement.

The choice of interaction function is not unique. In order to make sure it is well-defined, one only needs to ensure that
\begin{itemize}
    \item it is zero when the corresponding rule is not satisfied;
    \item otherwise, it is positive but upper bounded by $1$.
\end{itemize}
In this paper, we firstly focus on a concrete system where we use the polynomial form $\beta_{in} A_{ijk\dots} \frac{\Sigma}{C_{n}^{m}}$ as interaction functions. Afterwards, we further study a more general case when the interaction function takes an abstract form. If other types of interactions are considered, the equivalence between one higher-order interaction and multiple lower-order interactions may not be valid anymore, depending on the function.

By adopting the interaction function \eqref{eq:interaction} and inspired by \cite{iacopini2019simplicial}, we model an epidemic process as a continuous-time Markov chain. For simplicity, we only consider the full-order interaction by adopting the interaction function introduced above; however, bear in mind that it indeed contains all other lower-order interactions and is just replaced by the full-order interaction on lower-order hyperedges. Although our model is firstly developed on a simplicial complex structure, after the replacement, the model is built on a hypergraph instead. If every hyperedge originally incorporates only the full-order interaction, then the simplicial complex structure remains unchanged.

Our model is derived by considering all these interactions \eqref{eq:interaction} with different orders. The state transitions in the SIS model can be characterized by a Markov chain
with $2^n$ states, where $n$ is the number of nodes in the simplicial complex. Specifically, we denote the
random variable $X_i(t)$ as the state of node $i$ at time $t$. Moreover, $X_i = 0$ if node $i$ is in state $S$ (suceptible) and $X_i = 1$ if node $i$ is in state $I$ (infected). The infection process is described as the interaction above; whereas, the curing process is assumed to be passive with the rate $\delta_i$.
The local nodal dynamics can be modeled as a two-state Markov chain in continuous time. For a sufficiently small time $\Delta t$, it holds for the infection process that

\begin{equation}\label{eq::re1}
\begin{split}
    &\textbf{Prob}[X_i(t+\Delta t)=1|X_i(t)=0]=\sum_{j\in \mathbf{N}_2^{\text{i}}} \beta_i A_{ij} X_j(t)\Delta t\\ 
    &+\sum_{j,k\in \mathbf{N}_3^{\text{i}}} \beta_{i3} A_{ijk} X_j(t)X_k(t)\Delta t
    +\cdots+\sum_{j,k,l,\cdots \in \mathbf{N}_n^{\text{i}}} \beta_{in} A_{ijkl\cdots} X_j(t)X_k(t)X_l(t)\cdots \Delta t
\end{split}
\end{equation}
where the notation $\mathbf{N}_2^{\text{i}}$ denotes the set including all the conventional edges where the node $i$ is the tail, the notation $\mathbf{N}_n^{\text{i}}$ describes the set including all the $2$-simplexes where the node $i$ is the tail, and the $n$-body interaction within the $n$-simplex is of the full order.

For the curing process, it holds that
\begin{equation} \label{eq::deltanodal}
\textbf{Prob}[X_i(t+\Delta t)=0|X_i(t)=1]=\delta_i X_i(t) \Delta t.
\end{equation}

We now apply the mean-field approximation, leading to
\[\sum_{j\in \mathbf{N}_2^{\text{i}}} \beta_i A_{ij} X_j(t)= \sum_{j\in \mathbf{N}_2^{\text{i}}} \beta_i A_{ij} \mathbf{E}(X_j(t)),\] 
\[\sum_{j,k\in \mathbf{N}_3^{\text{i2}}} \beta_{i3} A_{ijk} X_j(t)X_k(t)= \sum_{j,k\in \mathbf{N}_3^{\text{i2}}} \beta_{i3} A_{ijk} \mathbf{E}(X_j(t)X_k(t)).\] This approximation is valid and accurate if the states of the neighbors are sufficiently independent and the number of the in-neighbors of node $i$ is large.

Thus, the central limit theorem \cite{lukacs2014probability} applies and it follows that when $\Delta t \rightarrow 0$, by adopting \eqref{eq::re1}, one gets
\begin{equation}
    \begin{split}
        & \frac{d\mathbf{E}(X_i(t))}{dt}
        = \mathbf{E} \Big( \Big[\sum_{j\in \mathbf{N}_2^{\text{i}}} \beta_i A_{ij} X_j(t) 
        +\sum_{j,k\in \mathbf{N}_3^{\text{i}}} \beta_{i3} A_{ijk} X_j(t)X_k(t) \\
        &+\sum_{j,k,l,\cdots\in \mathbf{N}_n^{\text{i}}} \beta_{in} A_{ijkl\cdots} X_j(t)X_k(t)X_l(t) \Big](1-X_i(t))-\delta_i X_i(t)
        \Big).
    \end{split}
\end{equation}

Notice that $\mathbf{E}(X_i(t)X_j(t))=\mathbf{E}(X_i(t))\mathbf{E}(X_j(t))-\mathbf{cov}(X_i(t),X_j(t))$. For the large-scale network, we adopt the independence assumption \cite{8954786,pare2018virus} so that \\ $\mathbf{cov}(X_i(t),X_j(t)))=0$. {The article \cite{van2015accuracy} indicates that the accuracy of the approximation increases with the network scale by using an SIS model on a conventional graph. It shows that this approximation is generally sufficiently accurate on large-scale networks and the error is usually tolerable on a network with several hundred nodes. In \cite{iacopini2019simplicial}, they show that a mean-field model captures quantitative features of the origin stochastic process using a simplicial complex setting.}
By ignoring all covariance and letting $x_i(t)= \textbf{Prob}[X_i(t)=1]$, we get
\begin{equation}\label{eq::syssimp1}
\begin{split}
\dot{x}_i=
&(1-x_i)\left(\sum_{j\in \mathbf{N}_2^{\text{i}}} \beta_i A_{ij} x_j\right) 
+(1-x_i)\left(\sum_{j,k\in \mathbf{N}_3^{\text{i}}} \beta_{i3} A_{ijk} x_jx_k\right)+\cdots\\
&+(1-x_i)\left(\sum_{j,k,l,\cdots\in \mathbf{N}_n^{\text{i}}} \beta_{in} A_{ijkl\cdots} x_jx_kx_l \cdots\right)-\delta_i x_i,
\end{split}
\end{equation}
which is similar to \textbf{the general higher-order SIS model} in \cite{cisneros2021multi} but now the parameter of $\beta_i,\beta_{i3},\cdots, \beta_{in}$ can be heterogeneous, and our model is actually based on a directed hypergraph. In contrast to \cite{cisneros2021multi}, we now provide an explanation, at the microscopic level, of the physical meaning of our model, the mechanism behind the modeling framework, and how it relates to the stochastic epidemic model. Furthermore, we explore the connection between the epidemic model on a general hypergraph and on a simplicial complex. The spreading process on a simplicial complex may become a process on a hypergraph when one assumes that the different spreading mechanisms occur on different hyperedges.

In \cite{iacopini2019simplicial}, the authors use a simplicial structure but with undirected and equally-weighted hyperedges to construct the stochastic process. They only consider a full-order interaction. If one assumes that the weights of the hypergraph are the same in a group of the same agents and chooses the same interaction function above \eqref{eq:interaction}, then our model is equivalent to the SCM (Social Contagion Model) model proposed in \cite{iacopini2019simplicial}. It is worthwhile mentioning that our model does consider any-order interaction, but because of the equivalence result we establish, they are implicit in the lower-order hyperedge. Because diverse kinds of interactions are considered, our model is eventually equivalent to a model on a general hypergraph according to the Lemma \ref{lem:eq}. {In \cite{iacopini2019simplicial}, it shows that the mean-field model is indeed quantitatively comparable to the original stochastic process using simplicial complexes with several hundred nodes.}

Since achieving analytical results for the arbitrary higher-order case is challenging and tedious, we first consider the hypergraph structure up to hyperedges of degree 2 (three nodes involved in a hyperedge). In such a case, the dynamics of \eqref{eq::syssimp1} read as
\begin{equation}\label{eq::syssimp2}
\begin{split}
\dot{x}_i=
&(1-x_i)\left(\sum_{j\in \mathbf{N}_2^{\text{i}}} \beta_i A_{ij} x_j\right) 
+(1-x_i)\left(\sum_{j,k\in \mathbf{N}_3^{\text{i}}} \beta_{i3} A_{ijk} x_jx_k\right)
-\delta_i x_i,
\end{split}
\end{equation}
which is similar to the \textbf{simplicial SIS model} in \cite{cisneros2021multi} but here the parameters of $\beta_i$ and $\beta_{i3}$ can be heterogeneous. Just to recall, we now only consider the hyperedge up to 3 bodies. However, the results based on a hypergraph with even higher-order hyperedges will be still provided later when we consider a general case in section \ref{sec:abstract}.


Finally, although the framework so far describes a direct spreading process reasonably well, it cannot characterize indirect spreading. For example, one can get infected not only because one gets in contact with infected neighbors but also because one uses a resource that carries the pathogen of the disease. A common way to overcome this drawback is to introduce coupled pathogen dynamics with a resource. In the next subsection, we will introduce how to couple the system that we have proposed so far with a pathogen evolution.


\subsection{Higher-order SIWS Model} \label{sec::SIWS}

Consider an SIS-type dynamics over a human contact hypergraph consisting of $n$ groups of individuals 
(agents) and the pathogen diffuses over $m$ different types of resources that are utilized by some agents in the human contact network. An individual can be infected by the disease either through its infected neighbor in the human contact network or the polluted resource the individual utilizes. In turn, the resource can be contaminated by the infected agents using the resource.

By adding a $W$ (Water) compartment, we propose a higher-order SIWS model. We emphasize that the compartment $W$  can refer to any other media. 

In this setting, the epidemic process and pathogen dynamics in the resource evolve as
\begin{align}
&\dot{x}_i= -\delta_i x_i 
+ (1-x_i) \left(\sum_{j=1}^{n} \beta_{ij} x_j
 +\sum_{j=1}^{m} \beta_{ij}^{w} w_j \right)  \label{eq::sys_sis_dt_x}\\
 &\qquad+ (1-x_i) \left(\sum_{j,k\in \mathbf{N}_3^{\text{i}}} \beta_{ijk} x_ix_j
 +\sum_{j,k\in \mathbf{N}_3^{\text{i}}} \beta_{ijk}^{wx} w_jx_k 
 +\sum_{j,k\in \mathbf{N}_3^{\text{i}}} \beta_{ijk}^{xw} x_jw_k
 \right), \notag\\
 &\dot{w}_j= -\delta_{j}^{w}w_{j}+\sum_{k=1}^{n}c_{jk}^{w}x_{k}+\sum_{i,k\in \mathbf{N}_3^{\text{i}}}c_{jik}^{xx}x_{i}x_{k},\label{eq::sys_sis_dt_w}
\end{align}
where $i=1,\ldots,n$, $j=1,\ldots,m$, and all notations and parameters are defined in table \ref{tab:notation}. We remind the reader that for now we consider interactions of up to 3 bodies.
\begin{table}[htbp]
\caption{Notations and parameters in equations \eqref{eq::sys_sis_dt_x}-\eqref{eq::sys_sis_dt_w}}
\label{tab:notation}
{\color{black}\begin{tabular}{@{}ll@{}}
\hline
$x_i=x_i(t)$ & Infection probability of agent $i$ \\ & or infection proportion of group $i$ at time $t$  \\ 
$w_{j}=w_j(t)$ & Pathogen concentration in the $j$-th resource at time $t$\\
$\delta_{i}$ & Healing rate of node $i$\\
$\delta_{j}^{w}$ & Decaying rate of the pathogen in resource $j$\\
$A_{ij}$ & Weight of the conventional edges (first-order hyperedges) \\ & in the corresponding $n$-node simplicial $\mathcal{G}(A)$  \\
$A_{ijk}$ & Weight of the second-order hyperedges \\ & in the corresponding $n$-node simplicial $\mathcal{G}(A)$\\
$\beta_{i}$ & First-order infection rate of node $i$\\
$\beta^{xx}_{i}$ & Second-order infection rate of node $i$\\
$\beta_{ij} := \beta_{i} A_{ij}$ & First-order effective transmission rate from node $j$ to node $i$  \\
$\beta_{ijk} := \beta_{i}^{xx} A_{ijk}$ &   Second-order effective transmission rate \\ & from node $j$ and $k$ to node $i$ \\
$c_{jk}^{w}$ & First-order person-resource contact rate of agent $j$ \\ &  with resource $k$ \\
$c_{ijk}^{xx}$ & Second-order person-resource contact rate \\ & of agent $i$ and $k$ with  resource $j$ \\
$\beta_{ij}^{w}$ & First-order resource-person transmission rate of agent $i$ \\ &  with resource $j$ \\
$\beta_{ijk}^{wx},\beta_{ijk}^{xw}$ & Second-order human-infrastructure co-infection rate \\ & of node $j$ and $k$ to node $i$, the superscript $xw$ ($wx$) indicates $j$ ($k$) \\ &is a human index and $k$ ($j$) is a resource index \\
\hline
\end{tabular}}
\end{table}

For \eqref{eq::sys_sis_dt_x}-\eqref{eq::sys_sis_dt_w}, we make the assumption that no direct interaction exists between resources because we mainly focus on the pathogen (information) diffusing over different types of media. The inner interaction inside the same type of medium can be ignored when we model the same type of medium as a whole. Since we also assume that the different types of media are sufficiently independent, we do not need to consider the higher-order interaction purely induced by resources. In the case of the higher-order interaction where both resource and agent are present, we assume that the higher-order interaction takes a similar form as when only agents are present, i.e., there are no $xw$-cross-terms, and thus the resource remains independent of each other.

Due to the fact that the indirect spreading in the resource is a macroscopic process and the concentration cannot be simply modeled as a binary variable, it is not suitable to model the concentration in the Markov chain. However, the idea of coupling the epidemic dynamics in the human contact network with the pathogen dynamics is widely adopted and supported in e.g. \cite{pare2022multi,janson2020networked,9029305,tien2010multiple}.

It is worthwhile mentioning that the SIWS model usually describes a process of water-borne disease or any other virus that can spread over a medium other than the human-to-human pathway. So in the above, we also explain the Simplicial SIWS model in this context. However, notice that a real virus spreading usually obeys the rule that one may get infected if and only if one of the agents in the (hyper)-edge is infected, and thus all the interactions are of the first order (pairwise). That is to say, the conventional graph structure is sufficient to study such a case if we assume the interaction takes a polynomial form. Interested readers may refer to the literature about SIWS models on the conventional graphs in discrete time \cite{cui2022discrete} or in continuous time \cite{pare2022multi,janson2020networked,9029305}. 

In the context of \emph{information diffusion}, things are different. Firstly, the higher-order interaction really counts. A real spreading process is a hybrid process of interactions with all orders. 
{One may obey different rules under different situations. For instance, one will tend to believe the information if half of the neighbors in the hyperedge believe the information, since all the neighbours are his close friends. However, if he is less familiar with the neighbors, he may tend to believe the information only unless all of the neighbors in the hyperedge believe the information.}
Secondly, the resource compartment also plays a role in information diffusion. While the direct information spreading usually corresponds to private chat (pairwise) or group chat (higher-order) via Whatsapp or some similar apps, the indirect spreading are usually the posts on Twitter or videos on Tiktok. The posts or videos also influence the agents to an extent depending on how popular the posts or videos are. Thus, such posts or videos are resources, like water, and their popularity and influence behave like the pathogen concentration correlated to the infection level. Once the infection level is high, the post is getting popular and in turn promote the spreading of the information. The higher-order interaction is also necessary when the resource gets involved here. For example, when one sees an interesting video on Tiktok, he or she may directly forward it to a friend and have a discussion. Both the content of the video and his or her conversation with a friend will jointly influence how strongly one believes in the information.

To summarize, in this section, we have proposed a general model describing a spreading process on a hypergraph up to 3-body interactions. Our model considers any higher-order interaction and a general rule for spreading. The model we have proposed can be used, for example, in the context of the physical virus spreading and information diffusion. Furthermore, we take into account both, the direct and indirect spreading processes. The rest of the paper aims at achieving analytical results for the higher-order SIWS model and its potential extension to deepen our understanding of a general spreading process.

\section{Analytical Results} \label{sec::main}

Let us now consider a higher-order SIWS model on a hypergraph of $n$ populations and $m$ resources.

Component-wise, the system we consider is given by \eqref{eq::sys_sis_dt_x} and \eqref{eq::sys_sis_dt_w}. We can further derive it in a matrix form as
\begin{equation}\label{eq::sys_siws_hyp_z}
    \dot{z}=-D_f z+(I-Z) (B_f z+ H(z));
\end{equation}
where 
\begin{equation}\label{eq::notations1}
  \begin{aligned}
    &    D_{f}:=\left[ 
        \begin{matrix}
	D & \mathbf{0}  \\
	\mathbf{0} & D^{w}  \\
	\end{matrix}
        \right],\;
     Z=\left[\begin{matrix}
	\Dg(x) & 0 \\
0& 0  \\
	\end{matrix}\right],\;
	B_f=\left[\begin{matrix}
	B & B_w \\
C_w& 0  \\
	\end{matrix}\right],\\
&	H(z)= [z^\top B_{f1} z,\cdots, z^\top B_{fn+m} z]^\top;
  \end{aligned}
\end{equation}
and 
\begin{equation}
    \begin{split}
        D&=\Dg((\delta_1,\cdots,\delta_n)^\top)\\
        D_w&=\Dg((\delta^w_1,\cdots,\delta^w_m)^\top)\\
        B_w&=[\beta^w_{ij}]_{n \times m}\\
        B&=[\beta_{ij}]_{n \times n}\\
        C_{w}&=[c^{w}_{jk}]_{m \times n}
    \end{split}
\end{equation}

Furthermore, if $i\leq n$, the matrix $B_{fi}$ is defined by 
\begin{equation}
\begin{split}
B_{fi}&=\left[\begin{matrix}
	[\beta_{ijk}]_{n \times n} & [\beta^{xw}_{ijk}]_{n \times m} \\
[\beta^{wx}_{ijk}]_{m \times n}& \mathbf{0}  \\
	\end{matrix}\right],  \qquad \text{ if } i\leq n\\
 B_{fi}&=\left[\begin{matrix}
	[c^{xx}_{ijk}]_{n \times n} & \mathbf{0} \\
\mathbf{0}& \mathbf{0}  \\
	\end{matrix}\right], \qquad \text{ if } i\geq n+1.
\end{split}
\end{equation}

Now, we present the analytical results of the general SIWS on a hypergraph \eqref{eq::sys_siws_hyp_z}. Firstly, we need the following assumptions to ensure that our model is well-defined.

\begin{assumption}\label{ass:xini1}
The initial condition of the infection level satisfies $x_i(0)\in [0,1]$ for all $i=1,\cdots,n$.
\end{assumption}

\begin{remark}
Since $x_i(0)\in [0,1]$, according to \eqref{eq::sys_sis_dt_w}, 
\begin{equation} \label{eq:wjmax}
    w_{j}\leq w_{j\max}\coloneqq\frac{\sum_{k=1}^{n}c_{jk}^{w}+\sum_{i,k\in \mathbf{N}_3^{\text{i}}}c_{jik}^{xx}}{\delta_j^w}.
\end{equation}
\end{remark}

\begin{assumption}\label{ass:para} 
The matrices $B_f$, $B_{fi}$ are non-negative, $i=1,\cdots,n+m$. The diagonal entries of $D_f$ are positive.
\end{assumption}

\begin{remark}
    We define later the reproduction number of the higher-order spreading as $\rho(D_f^{-1} B_f)$. To ensure the reproduction number is well-defined, $D_f$ must be positive.
\end{remark}


In the following Lemma, we describe the general properties of \eqref{eq::sys_siws_hyp_z}. 
\begin{remark}
For simplicity, we define the notation $F(z)=B_f z$, $f_i=F(z)_i=(B_f z)_i$, $h_i=H(z)_i=z^\top B_{fi} z$. Then, the concrete system \eqref{eq::sys_siws_hyp_z} is a special case of the general system \eqref{eq::sys_siws_hyp_z2}, which is discussed later. Moreover, the concrete interaction function defined by $F(z)=B_f z,\, H(z)_i= z^\top B_{fi} z$ satisfies Assumptions \ref{ass:f0}-\ref{ass:irre3} for the general system \eqref{eq::sys_siws_hyp_z2}.
\end{remark}
\begin{lemma}[General properties]\label{lem:GP}
Consider the system \eqref{eq::sys_siws_hyp_z}. Let $w_{j\max}$ be as in \eqref{eq:wjmax}, $\mathbf{D}=\left\{z=(x,w)^\top=(x_1,\ldots,x_n,w_1,\ldots,w_n)^\top\,|\, x_i\in(0,1),\,w_j\in(0,w_{j\max}) \right\}$
and $\mathbf{\Bar{D}}$ be its closure. If Assumptions \ref{ass:xini1}-\ref{ass:para} hold, then
\begin{itemize}
    \item [i)] the set $\mathbf{\Bar{D}}$ is positively invariant;
    \item [ii)] the set $\mathbf{D}$ is also positively invariant;
    \item [iii)] the origin is always an equilibrium of the model and there is no other equilibrium on the boundary of $\mathbf{D}$.
\end{itemize}
\end{lemma}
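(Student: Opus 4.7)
The plan is to verify positive invariance face by face via a Nagumo-type tangency check, then bootstrap that to the open set via differential inequalities, and finally rule out boundary equilibria by case analysis on which face one sits in. Throughout, the key structural inputs are Assumption \ref{ass:para} (sign structure of $B_f$, $B_{fi}$, and positivity of the diagonal of $D_f$) and the defining identity \eqref{eq:wjmax}.

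For (i), I would note that $\bar{\mathbf{D}}$ is the hyperrectangle $[0,1]^n\times\prod_j[0,w_{j\max}]$, so it suffices to verify that the vector field in \eqref{eq::sys_sis_dt_x}--\eqref{eq::sys_sis_dt_w} is inward (non-strictly) on each of its $2(n+m)$ faces. On $\{x_i=0\}$, the equation collapses to $\dot{x}_i=[B_f z]_i+z^\top B_{fi}z\geq 0$ by non-negativity. On $\{x_i=1\}$, $\dot{x}_i=-\delta_i\leq 0$. On $\{w_j=0\}$, $\dot{w}_j$ reduces to a non-negative combination of the $x_k$'s. On $\{w_j=w_{j\max}\}$, using $x_i,x_k\leq 1$ and the definition \eqref{eq:wjmax}, the source is bounded by $\sum_k c_{jk}^w+\sum_{i,k}c_{jik}^{xx}=\delta_j^w w_{j\max}$, giving $\dot{w}_j\leq 0$. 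Nagumo's theorem then delivers the conclusion.

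For (ii), I would upgrade each of the four inward estimates above to a strict-escape bound via an autonomous comparison. Since $N_i(z):=[B_f z]_i+z^\top B_{fi}z\geq 0$ on $\bar{\mathbf{D}}$, the first equation yields $\dot{x}_i\geq -\delta_i x_i$, so $x_i(t)\geq x_i(0)e^{-\delta_i t}>0$ whenever $x_i(0)>0$. For the upper side, set $y_i=1-x_i$ and use $\dot{y}_i=\delta_i-(\delta_i+N_i(z))y_i\geq -(\delta_i+M_i)y_i$, where $M_i:=\sup_{z\in\bar{\mathbf{D}}}N_i(z)<\infty$ by compactness and continuity; Gronwall then gives $y_i(t)\geq y_i(0)e^{-(\delta_i+M_i)t}>0$. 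A symmetric argument with $v_j:=w_{j\max}-w_j$ handles the resource compartments. Combining with (i) shows that trajectories initialized in $\mathbf{D}$ cannot reach $\partial\mathbf{D}$ in finite time.

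For (iii), direct substitution shows $z=0$ is an equilibrium. To exclude others on $\partial\mathbf{D}$, I examine which face the equilibrium $z^*$ can lie on. If $x_i^*=1$ for some $i$, then the first equation gives $\dot{x}_i^*=-\delta_i<0$, contradicting equilibrium. If $w_j^*=w_{j\max}$, then equilibrium together with the bound from (i) forces every $x_k^*$ that appears with a nonzero coefficient in \eqref{eq::sys_sis_dt_w} to equal $1$, returning to the previous case; the degenerate possibility $w_{j\max}=0$ removes the face altogether. Finally, if some $x_i^*=0$ (or some $w_j^*=0$), the equilibrium equation reduces to $[B_f z^*]_i+(z^*)^\top B_{fi}z^*=0$, which by non-negativity forces each contributing term to vanish; propagating these zero-patterns through the coupling structure of $B_f$ and the $B_{fi}$ forces $z^*=0$.

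The easy parts are (i) and (ii), which are Nagumo tangency and elementary Gronwall comparison. The main obstacle is the propagation step at the end of (iii): extending a ``one component vanishes'' conclusion to ``the whole vector vanishes'' relies on a connectivity/irreducibility assumption on the coupling encoded by $B_f$ and the $\{B_{fi}\}$. This is consistent with the paper's later assertion that the system is irreducible monotone, which I expect will be precisely what is invoked (explicitly or implicitly) to close that step.
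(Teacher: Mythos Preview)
Your approach coincides with the paper's. The paper defers Lemma~\ref{lem:GP} to the more general Lemma~\ref{lem:gpa}, whose proof is exactly the face-by-face tangency check you outline for (i), relies on the strict inward pointing at the upper faces to obtain (ii), and for (iii) asserts in one sentence that on the lower boundary ``$x=\mathbf{0}$ and $w=\mathbf{0}$ if and only if $\dot x=\mathbf{0}$ and $\dot w=\mathbf{0}$.'' Your write-up is more careful---the explicit Gronwall comparisons for (ii) are not spelled out in the paper---but the skeleton is identical.

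Your worry about the last step of (iii) is well founded and is, in fact, a gap in the lemma as stated. The propagation ``$x_i^*=0\Rightarrow$ all in-neighbours of $i$ vanish $\Rightarrow\cdots$'' terminates at $z^*=\mathbf{0}$ only under irreducibility of $B_f$; without it one can manufacture a boundary equilibrium supported on a sink component (take $n=2$, $m=0$, $\beta_{12}=0$, $\beta_{21}>0$, and $\beta_{22}>\delta_2$: then $(0,1-\delta_2/\beta_{22})$ is a nonzero equilibrium on $\partial\mathbf{D}$). The lemma hypothesises only Assumptions~\ref{ass:xini1}--\ref{ass:para}, and the paper's one-line argument does not address this; irreducibility (Assumption~\ref{ass:irre}) is tacitly needed here and only made explicit from the next theorem onward. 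So your proposal is correct, and your diagnosis of the obstacle is sharper than the paper's own treatment.
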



\begin{proof}
This lemma is a special case of Lemma \ref{lem:gpa}. Thus, the proof is omitted here and we provide the proof of a more general case in Lemma \ref{lem:gpa}.
\end{proof}

\begin{remark}
The general properties of Lemma \ref{lem:GP} show that the system is well-defined. The set $\mathbf{\Bar{D}}$ is the system's domain for \eqref{eq::sys_siws_hyp_z}.
\end{remark}

Now, we need one more assumption to achieve analytical results.
\begin{assumption}\label{ass:irre}
The matrix $B_f$ is irreducible.
\end{assumption}

\begin{remark}
Assumption \ref{ass:irre} implies that the graph of pairwise interactions is \\
strongly connected. Our model is originally based on a simplicial complex structure. If any two nodes in the simplicial complex can be connected with a sequence of edges (or hyperedges), then the graph of pairwise interaction is strongly connected. Since we have performed some replacements introduced in Section \ref{subsection:GM}, the structure of the simplicial complex may break. However, since the graph of pairwise interactions is at the lowest level, the property of its strong connectivity does not break after the replacements. Hence, the assumption is rather natural.
\end{remark}

It is easy to check that the system \eqref{eq::sys_siws_hyp_z} has an all-zero equilibrium, which is the so-called \emph{healthy state}. 

\begin{theorem}[Healthy-state behaviour] \label{thm:healthysig}
Consider the system \eqref{eq::sys_siws_hyp_z}. If Assumptions \ref{ass:xini1}-\ref{ass:irre} hold, then
the healthy state (zero equilibrium) always exists and is:
\begin{enumerate}
    \item locally stable if $\rho(D_f^{-1} B_f)<1$,
    \item  globally exponentially stable if  $\rho(D^{-1}_f B_f+D^{-1}_f[u^{\top}B_{f1},\cdots,u^{\top}B_{fn+m}]^{\top})<1$, where $u=[1,\cdots,1,w_{1\max},\cdots,w_{m\max}]^{\top}$
    \item unstable if $\rho(D^{-1}_f B_f)>1$.
\end{enumerate}
\end{theorem}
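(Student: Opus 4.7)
My plan is to treat parts~1 and~3 via Lyapunov's indirect method, and part~2 via a Metzler-type comparison argument.

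For the \textbf{local} assertions, the first step is to compute the Jacobian of \eqref{eq::sys_siws_hyp_z} at $z=0$. Since $Z$ vanishes at $z=0$ and $H(z)_i = z^\top B_{fi} z$ is quadratic (so $\nabla H(0)=0$), the product rule collapses the nonlinear term and yields the linearization
\[
\dot z = (-D_f + B_f)\, z.
\]
The matrix $M:=-D_f + B_f$ is Metzler, since $D_f$ is positive diagonal and $B_f \ge 0$ by Assumption~\ref{ass:para}. For such matrices a classical fact, provable by tracking the Perron root of the family $-D_f + t B_f$ as $t$ moves from $0$ to $1$, gives the equivalences $s(M)<0 \Leftrightarrow \rho(D_f^{-1}B_f)<1$ and $s(M)>0 \Leftrightarrow \rho(D_f^{-1}B_f)>1$. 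Lyapunov's indirect method then yields local asymptotic stability in case~1 and instability in case~3.

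For \textbf{global exponential stability}, the core step is a componentwise majorization of the right-hand side. By Lemma~\ref{lem:GP}, any trajectory with initial condition in $\mathbf{\Bar{D}}$ satisfies $0 \le z(t) \le u$ entrywise, where $u$ is the vector of upper bounds appearing in the theorem statement. Since each $B_{fi}\ge 0$ and $z\ge 0$, the quadratic components obey
\[
H(z)_i \;=\; z^\top B_{fi}\, z \;\le\; u^\top B_{fi}\, z,
\]
which stacks into $H(z)\le \widetilde B\, z$, with $\widetilde B$ the matrix whose $i$-th row is $u^\top B_{fi}$. The diagonal factor $I-Z$ has entries in $[0,1]$ while $B_f z + H(z)\ge 0$, so $(I-Z)(B_f z + H(z))\le B_f z + H(z)$. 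Combining,
\[
\dot z \;\le\; (-D_f + B_f + \widetilde B)\, z \;=:\; M'\, z.
\]
The matrix $M'$ is again Metzler, so $e^{M' t}\ge 0$ for all $t\ge 0$; variation of constants turns the differential inequality into $0 \le z(t) \le e^{M' t} z(0)$ throughout $\mathbf{\Bar{D}}$. The hypothesis $\rho\bigl(D_f^{-1}(B_f+\widetilde B)\bigr)<1$ is, by the same Metzler/Perron equivalence used in the local argument, equivalent to $s(M')<0$, so $e^{M' t} z(0) \to 0$ exponentially, uniformly in $z(0)\in\mathbf{\Bar{D}}$.

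The step requiring the most care is the componentwise majorization $H(z)\le \widetilde B\, z$: it linearizes the quadratic nonlinearity using the a~priori bound $z\le u$ supplied by forward invariance, and rests jointly on the nonnegativity of $B_{fi}$ and of $z$. Once this bound is in place, the remainder reduces to standard Metzler comparison and Perron--Frobenius arguments, and neither irreducibility of $B_f$ (Assumption~\ref{ass:irre}) nor finer structural information on the $B_{fi}$ is needed for this particular theorem.
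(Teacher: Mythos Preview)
Your treatment of parts~1 and~3 matches the paper: both compute the linearization $-D_f+B_f$ at the origin and invoke the Metzler/M-matrix equivalence $s(-D_f+B_f)\lessgtr 0 \Leftrightarrow \rho(D_f^{-1}B_f)\lessgtr 1$.

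For part~2 your route is genuinely different. The paper constructs a scalar Lyapunov function $V=v^\top D_f^{-1}z$, where $v\gg 0$ is the left Perron eigenvector of the irreducible nonnegative matrix $D_f^{-1}B_f+D_f^{-1}\widetilde B$, and shows $\dot V\le (\lambda-1)\min(\delta,\delta^w)\,V$ along trajectories. You instead establish the componentwise bound $\dot z\le M'z$ with $M'=-D_f+B_f+\widetilde B$ Metzler, then use the nonnegativity of $e^{M't}$ (equivalently, Kamke's comparison theorem for cooperative systems) to conclude $z(t)\le e^{M't}z(0)$. Both arguments hinge on the same majorization $H(z)\le\widetilde B z$ for $0\le z\le u$; the difference is purely in how the resulting linear differential inequality is closed. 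Your approach is slightly cleaner in that it avoids the Perron eigenvector altogether and, as you correctly note, does not require irreducibility of $B_f$; the paper's Lyapunov function genuinely needs $v\gg 0$, which is where Assumption~\ref{ass:irre} enters their argument. One small terminological quibble: the passage from $\dot z\le M'z$ to $z(t)\le e^{M't}z(0)$ is not really ``variation of constants'' (since $e^{-M't}$ is not sign-preserving), but rather the vector comparison principle for quasimonotone systems; the conclusion is correct regardless.
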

 \begin{proof}
 The local stability and instability of the healthy state is a special case of Theorem \ref{thm:healthy}. The proof is omitted here and we provide a proof for a more general case in Theorem \ref{thm:healthy}.
 
 Next, we prove global stability. It is straightforward to check that 
 \begin{equation}\label{eq:matrix}
     D^{-1}_f B_f+D^{-1}_f[u^{\top}B_{f1},\cdots,u^{\top}B_{fn+m}]^{\top}
 \end{equation}
  is an irreducible non-negative matrix. Let $v$ be the strictly positive left Perron-Frobenius eigenvector of the matrix \eqref{eq:matrix} and $\lambda$ be the corresponding Perron-Frobenius eigenvalue. Define the Lyuapunov function $V=v^{\top}D_f^{-1}z$. We observe that $V\geq 0$ and $V=0$ if and only if $z=\mathbf{0}$. Let $\delta=(\delta_1, \cdots, \delta_n)^\top$ and  $\delta^w=(\delta^w_1, \cdots, \delta^w_m)^\top$. Then, we get
 {\begin{equation}\label{eq:v}
     \begin{split}
         \dot{V}&=-v^{\top}D_f^{-1}(D_f z+(I-Z)B_f z\\
         &+(I-Z)[z^{\top}B_{f1}z,\cdots,z^{\top}B_{fn+m}z]^{\top})\\
         &\leq -v^{\top} z + v^{\top}(D_f^{-1}B_f z\\
         &+ D^{-1}_f[u^{\top}B_{f1} z,\cdots,u^{\top}B_{fn+m} z)^{\top}]\\
         & \leq (-1+\lambda) v^{\top} D_f D^{-1}_f z
         \leq (-1+\lambda) \min(\delta, \delta^w)V
     \end{split}
 \end{equation}}
 In view of the comparison lemma \cite{khalil2002nonlinear}, we have $0\leq V(t) \leq V(0)e^{(-1+\lambda) \min(\delta, \delta^w) t}$, which further implies that
 $x_i\leq \frac{\max(\delta^{-1}, (\delta^{w})^{-1})v^{\top}z(0)}{\min(\delta^{-1}, (\delta^{w})^{-1})v_i} e^{(-1+\lambda) \min(\delta, \delta^w) t} $
 and \\
 $w_j\leq \frac{\max(\delta^{-1}, (\delta^{w})^{-1})v^{\top}z(0)}{\min(\delta^{-1}, (\delta^{w})^{-1})v_{j+n}} e^{(-1+\lambda) \min(\delta, \delta^w) t}.$ Thus, we complete the proof.

 \end{proof}

 \begin{remark}
 In epidemiology, the reproduction number (usually denoted as $R_0$) is defined to be the expected number of cases directly generated by one infected case in a population. Generally, if $R_0<1$, then the virus gradually dies out, while if $R_0>1$, the pandemic breaks out. As for our model \eqref{eq::sys_siws_hyp_z}, we suggest directly using $R_0=\rho(D_f^{-1} B_f)$ as the reproduction number, since it indicates whether the healthy state is locally stable or not. Naturally, the behavior of the higher-order system \eqref{eq::sys_siws_hyp_z} is different from the SIWS model on a conventional graph studied by \cite{pare2022multi,janson2020networked,9029305}. In the following, we show that for the system \eqref{eq::sys_siws_hyp_z}, even if $R_0<1$, the pandemic may still break out.
 \end{remark}
 
Other than the zero equilibrium, the system also processes a strictly positive equilibrium, which is called \emph{endemic equilibrium}.

\begin{theorem}(Existence of an endemic equilibrium with high infection levels)\label{thm:exend}
Consider the system \eqref{eq::sys_siws_hyp_z}. If Assumptions \ref{ass:xini1}-\ref{ass:irre} hold, then there exists an endemic equilibrium $z^*\gg \mathbf{0}$ such that $x_i^*\geq \frac{2}{3}$ and $w_i^*\geq \frac{2w_{i\max}}{3}$ for any $i$ such that $B_{fi}\neq \mathbf{0}$ 
whenever 
$\theta=\min_{i} \{(D^{-1}_f B_f)_i+[D^{-1}_f(\tilde{u}^{\top}B_{f1},\cdots,\tilde{u}^{\top}B_{fn+m})^\top]_i\}\geq 4.5$, where $\Tilde{u}=[\Tilde{u}_1,\Tilde{u}_2,\cdots,\Tilde{u}_{n+m}]^\top$ is given by
\begin{equation*}
       \tilde{u}_i = \begin{cases}
                        1,  & B_{fi} \neq \mathbf{0},\qquad i=1,\ldots,n \\
                        w_{i\max}, & B_{fi}\neq\mathbf{0}, \qquad i=n+1,\ldots,n+m\\
                        0,  & B_{fi}=\mathbf{0}
    \end{cases}
\end{equation*}
\end{theorem}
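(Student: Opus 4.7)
The plan is to exhibit a subsolution $z_0\in\bar{\mathbf{D}}$ at which the vector field of \eqref{eq::sys_siws_hyp_z} is componentwise non-negative, and then use the monotonicity of the flow (a structural property recorded elsewhere in the paper) to push the trajectory from $z_0$ up to an equilibrium $z^{*}\geq z_0$ with the desired lower bounds. I would take $z_0:=\frac{2}{3}\tilde{u}$, which lies in $\bar{\mathbf{D}}$ since every one of its coordinates is either $0$, $2/3$, or $\frac{2}{3}w_{j\max}$. On coordinates where $\tilde{u}_i=0$ one has $z_{0,i}=0$, and the positive invariance of $\bar{\mathbf{D}}$ from Lemma~\ref{lem:GP}(i) immediately gives $\dot{z}_i|_{z_0}\geq 0$.

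The heart of the proof is verifying $\dot{z}_i|_{z_0}\geq 0$ on the coordinates where $\tilde{u}_i\neq 0$. Substituting $z_0=\frac{2}{3}\tilde{u}$ into the linear and quadratic pieces of \eqref{eq::sys_siws_hyp_z} gives $(B_f z_0)_i=\frac{2}{3}(B_f\tilde{u})_i$ and $z_0^{\top}B_{fi}z_0=\frac{4}{9}\tilde{u}^{\top}B_{fi}\tilde{u}$, so for an $x$-coordinate with $\tilde{u}_i=1$ the inequality $\dot{x}_i|_{z_0}\geq 0$ becomes
\begin{equation*}
\tfrac{2}{3}(B_f\tilde{u})_i + \tfrac{4}{9}\tilde{u}^{\top}B_{fi}\tilde{u} \;\geq\; 2\delta_i,
\end{equation*}
which after multiplying by $9/(2\delta_i)$ is $3\delta_i^{-1}(B_f\tilde{u})_i + 2\delta_i^{-1}\tilde{u}^{\top}B_{fi}\tilde{u}\geq 9$. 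The elementary estimate $3a+2b\geq 2(a+b)$ for $a,b\geq 0$ shows that $\theta\geq 9/2$ is sufficient. The analogous rearrangement on a resource coordinate $n+j$ with $\tilde{u}_{n+j}=w_{j\max}$ produces the parallel requirement; the fact that $w_{j\max}$ is baked into the $\tilde{u}$-weights of the quadratic term is precisely why the single hypothesis $\theta\geq 9/2$ can serve both blocks, and this is the point at which the particular form of $\tilde{u}$ in the statement pays off.

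Having secured $\dot{z}|_{z_0}\geq 0$, I invoke the monotonicity of \eqref{eq::sys_siws_hyp_z}: its Jacobian is Metzler on $\bar{\mathbf{D}}$, so the flow preserves the componentwise order. The trajectory from $z_0$ is therefore non-decreasing in every coordinate, and being confined to $\bar{\mathbf{D}}$ by Lemma~\ref{lem:GP}(i) it is bounded and must converge to some $z^{*}\geq z_0$, which is automatically an equilibrium. The bounds $x_i^{*}\geq 2/3$ and $w_j^{*}\geq \frac{2}{3}w_{j\max}$ on the relevant indices are immediate. For strict positivity $z^{*}\gg\mathbf{0}$, I combine the irreducibility of $B_f$ (Assumption~\ref{ass:irre}) with Lemma~\ref{lem:GP}(iii), which forbids any nonzero equilibrium on the boundary of $\mathbf{D}$: since $z^{*}\neq\mathbf{0}$ by construction, it must lie in the interior, and irreducibility propagates the positivity to every coordinate.

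The principal obstacle is the asymmetry between the $x$- and $w$-blocks: the $x$-equation carries a $(1-x_i)$ damping factor that the $w$-equation lacks, so a naive treatment would demand different thresholds on the two blocks. Checking that the single clean condition $\theta\geq 9/2$, together with the definition of $\tilde{u}$ that places $w_{j\max}$ in its resource entries, simultaneously handles both is the delicate bookkeeping step and is exactly where the constant $4.5$ originates.
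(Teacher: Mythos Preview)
Your argument is correct but takes a genuinely different route from the paper's. The paper recasts the equilibrium equation as a fixed-point problem for the map
\[
\hat T(z)=T\bigl(D_f^{-1}B_f z+D_f^{-1}H(z)\bigr),\qquad
T_i(p)=\begin{cases}\dfrac{p_i}{1+p_i},& i\le n,\\[1mm] p_i,& i>n,\end{cases}
\]
verifies that $\hat T$ sends the box $\{\tfrac{2}{3}\tilde u\le z\le u\}$ into itself (the lower-bound check is exactly your computation $3a+2b\ge 2(a+b)\ge 9$, the upper bound comes for free from the form of $T$), and concludes by Brouwer's fixed-point theorem. You instead show that $z_0=\tfrac{2}{3}\tilde u$ is a subsolution, invoke cooperativity of the flow to get a monotone bounded trajectory, and take its limit. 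The arithmetic core is identical; what differs is the existence principle layered on top.

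Each approach has its trade-off. Brouwer is self-contained at this point in the paper and needs nothing beyond continuity of $\hat T$. Your route leans on the Metzler structure of the Jacobian, which the paper only records later in Theorem~\ref{thm:irrsys}; this is a forward reference but not a circularity, since that result does not depend on the present one. In return your argument yields slightly more information: the equilibrium is exhibited as the limit of an explicit increasing trajectory, so one automatically obtains a monotone approximation scheme and knows the equilibrium is the \emph{minimal} one dominating $z_0$. Your appeal to Lemma~\ref{lem:GP}(iii) for strict positivity is clean and matches how the paper would have to argue that the Brouwer fixed point is not stuck on the boundary.
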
 

\begin{proof}
Define the map 
 \begin{equation} \label{eq:map}
 \begin{split}
    \hat{T}(z)&=T(D^{-1}_f F(z)+D^{-1}_f H(z));\\
 \end{split}
 \end{equation}
where $T_i=\frac{p_i}{1+p_i}$for $i=1,\ldots,n$ and $T_j=p_i$ for $i=n+1,\ldots,n+m$ with $p_i=(D^{-1}_f F(z)+D^{-1}_f H(z))_i$.
 
Next, we calculate the endemic equilibrium $z^*=(x^*,w^*)^\top=(x_1^*,\ldots,x_n^*,w_1^*\ldots,w_m^*)^\top$ by setting $\dot{x}_i$ and $\dot{w}_j$ to zero. 
This leads to $x_{i}^*=\frac{p_i(z^*)}{1+p_i(z^*)}$ and $w^*_j=p_{j+n}(z^*)$,
which implies that the endemic equilibrium $z^*$ is a fixed point of $\hat{T}$, i.e., $\hat{T}(z^*)=z^*$.
We know that both $\frac{p}{1+p}$ and $p$ are monotonically increasing in $z\geq 0$. Thus, the maps $\hat{T}(p(z))$ and $T(p(z))$ are monotonically increasing in $z\geq 0$. Moreover, if $\alpha>1$, $p\leq 1-\frac{1}{\alpha}$ with $p\geq 0$ if and only if $\alpha\geq \frac{1}{1-p}$. If $\alpha>1$ and $p>0$, $\alpha p > p$. Thus, we have $T(\alpha p(z))\geq p(z)$ with $p(z)\geq \mathbf{0}$ if and only if $p_i\leq 1-\frac{1}{\alpha}$ for $i\leq n$. 

Then, we have that
\begin{equation*}
\begin{split}
    \hat{T}(z)&=T(D^{-1}_f B_f z+D^{-1}_f[z^{\top}B_{f1}z,z^{\top}B_{f2}z,\cdots]^{\top})\\
    &\geq T\left(\frac{2}{3}D^{-1}_f B_f \tilde u+\frac{4}{9}D_f^{-1}[\tilde u^{\top}B_{f1}\tilde u,\tilde u^{\top}B_{f2}\tilde u,\cdots]^{\top}\right)\\
    &\geq T\left(\frac{4}{9}\theta \tilde u\right)\geq \frac{2}{3} \tilde u.
\end{split}
\end{equation*}
By using the monotonicity, we derive that $u$ must be the upper bound. In other words, $\hat{T}$ maps the set 
\begin{equation}
    \begin{split}
        \Big\{z&=(x,w)^\top =(x_1,\cdots,x_n,w_1,\cdots,w_m)^\top\,|\\
        &   \frac{2}{3}\tilde u_i \leq x_i \leq 1 , \frac{2w_{j\max}}{3}\tilde u_{j+n}\leq w_j \leq w_{j\max},\\
        & i=1,\cdots,n,\, j=1,\cdots, m \Big\}
    \end{split}
\end{equation}
into itself. By using Brouwer's Fixed-Point Theorem \cite{shapiro2016fixed}, we complete the proof. 
\end{proof}

Furthermore, we can show that the endemic equilibrium in Theorem \ref{thm:exend} is locally stable.

\begin{theorem}(Bi-stability may occur)
Consider the model \eqref{eq::sys_siws_hyp_z}.  If Assumptions \ref{ass:xini1}-\ref{ass:irre} hold and all $B_{fi}\neq \mathbf{0}$, any endemic equilibriunm in the form of $z^*\gg \mathbf{0}$ such that $\frac{2}{3} \leq x_i^*\leq 1$ and $\frac{2w_{\max}}{3}\leq w_j^*\leq w_{\max}$ is locally stable.
\end{theorem}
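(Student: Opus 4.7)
The plan is to show that the Jacobian $J$ of the right-hand side of \eqref{eq::sys_siws_hyp_z} at the equilibrium $z^*$ is Hurwitz. Writing the $i$-th component uniformly as $\dot z_i = -d_i z_i + (1 - s_i z_i)(f_i + h_i)$, with $s_i = 1$ for $i\leq n$ and $s_i = 0$ for $i>n$, and substituting the equilibrium identity $d_i z_i^* = (1 - s_i z_i^*)(f_i^* + h_i^*)$, the diagonal of $J$ becomes $-d_i/(1 - s_i z_i^*) + (1 - s_i z_i^*) M_{ii}$ and its off-diagonal entries are $(1 - s_i z_i^*) M_{ik}$, where $M_{ik}:=\partial(f_i+h_i)/\partial z_k\big|_{z^*}$. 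Since $B_f$ and each $B_{fi}$ are nonnegative (Assumption \ref{ass:para}), one has $M\geq 0$, so $J$ is Metzler; and $J$ inherits irreducibility from $B_f$ (Assumption \ref{ass:irre}).

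For an irreducible Metzler matrix, Perron--Frobenius theory tells us that $s(J)<0$ follows once we exhibit some $u\gg 0$ with $Ju\leq 0$ and $Ju\neq 0$. My candidate is
\[
u=\Bigl(\tfrac{1}{2}\,x^*,\;w^*\Bigr),
\]
that is, $u_i=x_i^*/2$ for $i\leq n$ and $u_{n+j}=w_j^*$. For a water row $i=n+j$, using $\delta_j^w w_j^* = f_i^* + h_i^*$ and the fact that the corresponding $B_{fi}$ has no $w$-block (so Euler's identity reduces $\sum_k M_{ik} z^*_k$ to $f_i^* + 2h_i^*$ with no cross-contribution from the $w$-components of $u$), I expect the calculation to give $(Ju)_i=-\tfrac{1}{2}f_i^*\leq 0$. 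For a population row $i\leq n$, I would decompose $h_i^* = a_i + 2c_i$ with $a_i=x^{*\top}P_i x^*$ and $c_i = x^{*\top}Q_i w^*$ (using the symmetrized block form of $B_{fi}$), after which $(Ju)_i$ should reduce to a linear combination of the nonnegative quantities $(Bx^*)_i$, $(B_w w^*)_i$, $a_i$, $c_i$ with respective coefficients $-x_i^*/2$, $\tfrac{1}{2}-x_i^*$, $\tfrac{1}{2}-x_i^*$, and $2-3x_i^*$. Each is nonpositive whenever $x_i^*\geq 2/3$, and the first three are strictly negative.

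To rule out $Ju=0$ I would invoke Assumption \ref{ass:irre}: since $B_f\neq 0$ and $z^*\gg 0$, some component $f_{i^*}^*=(B_f z^*)_{i^*}$ is strictly positive. If $i^*>n$ this already forces $(Ju)_{i^*}=-f_{i^*}^*/2<0$; otherwise at least one of $(Bx^*)_{i^*}$ or $(B_w w^*)_{i^*}$ is strictly positive, and the corresponding coefficient is strictly negative, so again $(Ju)_{i^*}<0$. In either case $Ju\neq 0$, and the irreducible Metzler version of Perron--Frobenius yields $s(J)<0$, which is local exponential stability of $z^*$.

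The main obstacle I expect is identifying the correct scaling for the test vector. Taking $u=(\alpha x^*, w^*)$ and tracking the signs shows that the water rows return $(1-\alpha) f_i^* + (2-\alpha) h_i^*$, forcing $\alpha\leq 1/2$, while the cross-term coefficient $2[(1-x_i^*)-\alpha x_i^*]$ in the population rows forces $\alpha\geq (1-x_i^*)/x_i^*$, which is $\leq 1/2$ exactly when $x_i^*\geq 2/3$. The constant $2/3$ therefore appears in both the existence result (Theorem \ref{thm:exend}) and here for the same arithmetic reason: it is the unique threshold at which $\alpha=1/2$ is simultaneously feasible on both blocks. Once the right $u$ is identified, the remainder of the argument is standard monotone-systems machinery.
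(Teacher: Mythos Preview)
Your proposal is correct and follows essentially the same route as the paper: both arguments verify that the Jacobian at $z^*$ is an irreducible Metzler matrix, take the very same test vector $u=(\tfrac{1}{2}x^*,w^*)$, carry out the identical component-wise computation (the paper obtains the same four coefficients $-x_i^*/2,\ \tfrac12-x_i^*,\ \tfrac12-x_i^*,\ 1-\tfrac32 x_i^*$ on the population rows and $-\tfrac12 f_i^*$ on the water rows), and then conclude Hurwitzness via the Perron--Frobenius criterion for irreducible Metzler matrices. The only cosmetic difference is that the paper asserts $Ju\ll -d\,u$ directly (which in fact holds, since irreducibility of $B_f$ forces every row of $B_f z^*$ to be strictly positive), whereas you argue the logically weaker but still sufficient $Ju\le 0$, $Ju\neq 0$; and there is a small sign slip in your final paragraph's water-row expression (it should read $(\alpha-1)f_i^*+(2\alpha-1)h_i^*$), though your conclusion $\alpha\le\tfrac12$ is unaffected.
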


\begin{proof}
Firstly, we calculate the Jacobian for the whole system:
{\small\begin{equation*}
\begin{split}
    J&=-D_f+(I-{Z}) \Big(B_f-\hat{I}\Dg([z^\top B_{f1} z,\cdots,z^\top B_{fn+m} z]^\top\\&+B_f z
    +[z^\top(B_{f1}+B_{f1}^\top),\cdots,z^\top(B_{fn+m}+B_{fn+m}^\top)]^\top)
    \Big). 
\end{split}
\end{equation*}}
 It is easy to check that $J$ is an irreducible Metzler matrix.

Next, we define the vector $\tilde{z}^*=(\frac{x^*}{2},w^*)^\top$. Then,
{\begin{equation*}
\begin{split}
 J\tilde{z}^*&=-D_f \tilde{z}^*+(I-{Z}) (B_f \tilde{z}^*\\& +[z^\top(B_{f1}+B_{f1}^\top)\tilde{z}^*,\cdots,z^\top(B_{fn+m}+B_{fn+m}^\top)\tilde{z}^*]^\top)\\
 &-\hat{I}(\Dg([z^\top B_{f1} z,\cdots,z^\top B_{fn+m} z]^\top+B_f z)\tilde{z}^*)
\end{split}
\end{equation*}}
with $\hat{I}=\left[\begin{matrix}
	I & \mathbf{0} \\
\mathbf{0}& \mathbf{0}  \\
	\end{matrix}\right]$.

Element-wise, we get the following.
For $n+1 \leq i\leq n+m$,
\begin{equation*}
\begin{split}
    (J(z^*)\tilde{z}^*)_i&=-\delta^w_{i-n} w^*_{i-n}+\sum_{j}c^w_{(i-n)j} \frac{x^*_j}{2}\\&+2\sum_{i,j}c^{xx}_{(i-n)jk} x^*_j\frac{x^*_k}{2}\\
    &=-\sum_{j}c^w_{(i-n)j} \frac{x^*_j}{2}<-d_i w^*_{i-n}= -d_i z_i^*,
\end{split}
\end{equation*}
with some $d_i>0$. 

On the other hand, for $i\leq n$,
{\begin{equation*}
\begin{split}
    (J(z^*)\tilde{z}^*)_i&=-\frac{x_i^*}{2}(\sum_{j}\beta_{ij}x^*_j)+(\frac{1}{2}-x^*_i)\sum_{j}\beta_{ij}^w \frac{w^*_j}{2}\\
    &+\frac{1}{2}(2-3x^*_i)(\sum_{j,k}\beta^{wx}_{ijk}x^*_j w^*_k+\sum_{j,k}\beta^{wx}_{ijk}x^*_k w^*_j)\\
    &+(\frac{1}{2}-x^*_i)\sum_{j,k}\beta^{xx}_{ijk}x^*_j x^*_k.
\end{split}
\end{equation*}}
Since $2-3x_i^*<0$ and $\frac{1}{2}-x^*_i<0$, we observe that  $(J\tilde{z}^*)_i<-d_i \frac{z_i^*}{2}$ for $i\leq n$ with some $d_i>0$. By combining the results from the case $n+1 \leq i\leq n+m$, we have $J\tilde{z}^*<-d \tilde{z}^*$ with some $d>0$. According to Theorem 10.14 in \cite{FB-LNS}, we confirm that $J$ is Hurwitz and the endemic equilibrium we consider here is locally stable.
\end{proof}

Notice that the condition guaranteeing the existence of an endemic equilibrium in the form of Theorem \ref{thm:exend} and its local stability can hold together with the condition that the reproduction number is less than $1$. This is an interesting phenomenon of the higher-order SIWS model on hypergraphs because we see that the system may have a multi-stability domain when the reproduction number is smaller than $1$. In such a domain, the healthy state and the endemic equilibrium in the form of Theorem \ref{thm:exend} are locally stable. Different from the higher order system, the SIWS on a conventional graph has one unique globally stable equilibrium, healthy state, when the reproduction number is less than $1$. We will give some numerical examples of this case later in the section on numerical examples (Figure \ref{fig:bistability}).


\begin{theorem} (Existence of an endemic equilibrium when $R_0>1$)
Consider the model \eqref{eq::sys_siws_hyp_z}, if Assumptions \ref{ass:xini1}-\ref{ass:irre} hold, there exists an endemic equilibrium $z^*\gg \mathbf{0}$ if $R_0=\rho=\rho(D^{-1}_f B_f)>1$.
\end{theorem}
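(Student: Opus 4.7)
The plan is to reuse the fixed-point map
\[
\hat T(z) = T\bigl(D_f^{-1} B_f z + D_f^{-1} H(z)\bigr)
\]
introduced in the proof of Theorem~\ref{thm:exend} (with $T_i(p)=p/(1+p)$ for $i\le n$ and $T_j(p)=p$ for $n+1\le j\le n+m$), so that a fixed point of $\hat T$ in $\mathbf{\Bar{D}}$ is exactly an equilibrium of \eqref{eq::sys_siws_hyp_z}. The origin corresponds to the trivial fixed point $\hat T(\mathbf{0})=\mathbf{0}$; the goal is to produce a second, strictly positive fixed point by exploiting the spectral condition $R_0>1$ instead of the large-$\theta$ condition used in Theorem~\ref{thm:exend}.

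First I would verify that $\hat T$ is monotone non-decreasing on $\mathbf{\Bar{D}}$: because $B_f$ and every $B_{fi}$ are non-negative by Assumption~\ref{ass:para}, the vector $p(z)=D_f^{-1}B_f z + D_f^{-1} H(z)$ is coordinate-wise non-decreasing in $z\ge \mathbf{0}$, and both $p\mapsto p/(1+p)$ and $p\mapsto p$ are monotone non-decreasing. The invariance $\hat T(\mathbf{\Bar{D}})\subseteq \mathbf{\Bar{D}}$ follows from the saturation form of $T_i$ for $i\le n$ together with the definition of $w_{j\max}$ in \eqref{eq:wjmax}, exactly as in the proof of Theorem~\ref{thm:exend}.

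Next, by Assumption~\ref{ass:irre} the matrix $D_f^{-1}B_f$ is non-negative and irreducible, so Perron--Frobenius yields a strictly positive right eigenvector $v\gg\mathbf{0}$ with $D_f^{-1}B_f v=\rho v$ and $\rho=R_0>1$. For $\epsilon>0$ small I would check componentwise that $\hat T(\epsilon v)\ge \epsilon v$: expanding gives $p_i(\epsilon v)=\epsilon \rho v_i+O(\epsilon^2)$, and hence $T_i(p_i(\epsilon v))=\epsilon \rho v_i+O(\epsilon^2)$ for every $i$, both in the $p/(1+p)$ and the identity branch. Since $\rho>1$ and each $v_i$ is bounded away from zero, taking $\epsilon$ sufficiently small ensures $\hat T(\epsilon v)_i\ge \epsilon v_i$ for all $i$.

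Starting from $z^{(0)}=\epsilon v$ and iterating $z^{(k+1)}=\hat T(z^{(k)})$, monotonicity gives a componentwise non-decreasing sequence in $\mathbf{\Bar{D}}$, which is bounded above (by $u$ of Theorem~\ref{thm:healthysig}). The sequence therefore converges to some $z^*\in\mathbf{\Bar{D}}$ with $z^*\ge \epsilon v\gg\mathbf{0}$; continuity of $\hat T$ gives $\hat T(z^*)=z^*$, so $z^*$ is the desired endemic equilibrium. The delicate step is the perturbation estimate showing $\hat T(\epsilon v)\ge \epsilon v$: one must make sure the quadratic tail $H(\epsilon v)=O(\epsilon^2)$ (which is non-negative and therefore only helps) together with the linearisation produces a strictly positive gain $(\rho-1)\epsilon v+O(\epsilon^2)$, which is where both the strict inequality $\rho>1$ and the strict positivity of $v$ from irreducibility enter. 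Boundedness and invariance of the iteration are routine consequences of Lemma~\ref{lem:GP} and the construction of $\hat T$.
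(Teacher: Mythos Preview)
Your argument is correct and shares the paper's core idea: both use the same fixed-point map $\hat T$ and both exploit the Perron eigenvector $v\gg\mathbf{0}$ of $D_f^{-1}B_f$ to manufacture a strictly positive subsolution. The paper, however, packages the conclusion slightly differently. Instead of a Taylor/perturbation estimate, it picks $c=\alpha v$ with $\alpha$ small enough that $c_i\le 1-\tfrac{1}{\rho}$ for $i\le n$, which gives the clean algebraic chain $\hat T(z)\ge T(D_f^{-1}B_f z)\ge T(\rho c)\ge c$ for all $z$ in the box $\tilde Y=\{c\le z\le u\}$; it then invokes Brouwer on $\tilde Y$ rather than iterating monotonically. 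Your monotone-iteration route is equally valid (and arguably more constructive, since it converges to the \emph{minimal} endemic equilibrium above $\epsilon v$), while the paper's explicit threshold $\alpha v\le 1-1/\rho$ avoids the asymptotic $O(\epsilon^2)$ bookkeeping and makes the role of $\rho>1$ visible in one line.
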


\begin{proof}
Define the set $\Tilde{Y}=\{z=(x,w)^{\top}|c_i\leq x_i\leq 1, c_{j+n}\leq w_j \leq w_{j\max} \}$, where $c=(c_1,\cdots,c_{n+m})^{\top}=\alpha v$, $v$ is the positive Perron-eigenvector associated with $\rho=\rho(D^{-1}_f B_f)$ and $\alpha$ is small enough such that $c=\alpha v\leq 1-\frac{1}{\rho}$.
We use again the map \eqref{eq:map}. For any $z \in \Tilde{Y}$, we have $\hat{T}(z)=T(D^{-1}_f B_f z+D^{-1}_f[z^{\top}B_{f1}z,z^{*\top}B_{f2}z,\cdots]^{\top})\geq T(D^{-1}_f B_f z) \geq T(\alpha \rho v) \geq T(\rho c) \geq c$. Notice that the map is upper bounded by $u$ when $z$ is in the system domain. Using again the Brouwer's fixed-Point Theorem \cite{shapiro2016fixed}, one confirms that there exists an endemic equilibrium $z^*\gg \mathbf{0}$ if $\rho(D^{-1}_f B_f)>1$.
\end{proof}

Now, we show that the endemic equilibrium is unique and globally stable if $\rho(D^{-1}_f B_f)>1$ and the $\beta^w_{ij}$ terms in $B_f$ and all the higher order parameters in all $B_{fi}$ are sufficiently small.

\begin{theorem} (Global stability of the endemic equilibrium)\label{thm:gsee}
Consider the system \eqref{eq::sys_siws_hyp_z}, if Assumptions \ref{ass:xini1}-\ref{ass:irre} hold and  $R_0=\rho(D^{-1}_f B_f)>1$, then the endemic equilibrium is unique and globally exponentially stable if the $\beta^w_{ij}$ terms in $B_f$ and all the higher order parameters in all $B_{fi}$ are sufficiently small.
\end{theorem}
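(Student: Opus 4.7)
The plan is to treat the full system as a perturbation of the standard networked SIWS dynamics obtained by setting all the higher-order tensors $B_{fi}$ and all cross-pathogen coefficients $\beta^{w}_{ij}$ to zero. In that unperturbed limit the system becomes
$\dot{x}_i=-\delta_i x_i+(1-x_i)\sum_j \beta_{ij} x_j$, $\dot{w}_j=-\delta^w_j w_j+\sum_k c^w_{jk}x_k$,
which, under Assumptions \ref{ass:xini1}--\ref{ass:irre} and $\rho(D_f^{-1}B_f)>1$, is a classical cooperative irreducible system whose fixed-point map $\hat{T}$ (as in \eqref{eq:map}) is strictly subhomogeneous/concave; by the standard Krasnoselskii-type argument this yields a \emph{unique} positive equilibrium $\bar z$ and global exponential stability (via the Lyapunov function $V(z)=\sum_i v_i[z_i-\bar z_i\ln(z_i/\bar z_i)]$ where $v$ is a suitable Perron eigenvector).

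For \emph{uniqueness} in the perturbed system, I would apply the implicit function theorem to the equilibrium equation $\dot z=0$ at $\bar z$. Because the Jacobian at $\bar z$ of the unperturbed system is Hurwitz and hence nonsingular, the zero of the equilibrium map depends continuously (and smoothly) on the small parameters; together with the compactness of $\bar{\mathbf{D}}$ and the absence of boundary equilibria (Lemma \ref{lem:GP}(iii)), this precludes the appearance of extra positive equilibria once the perturbation is small enough. Alternatively one can show directly that the map $\hat T$ in \eqref{eq:map} is a contraction in the Hilbert/Thompson metric on a suitable interior order-interval when the higher-order terms are small, giving the unique fixed point.

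For \emph{global exponential stability}, I would carry the Lyapunov function $V$ of the unperturbed system over to the full system. Its derivative along trajectories of \eqref{eq::sys_siws_hyp_z} equals the (negative) derivative for the unperturbed system plus extra terms bounded in norm by the small parameters multiplied by polynomial quantities in $z$, which are themselves bounded on $\bar{\mathbf{D}}$. Since the unperturbed $\dot V$ is bounded above by $-\kappa\,V$ for some $\kappa>0$ on any compact subset of the interior, choosing the small parameters smaller than $\kappa/M$ (where $M$ bounds the extra terms) preserves $\dot V\le -\kappa'V$ for some $\kappa'>0$. Combined with the strong monotonicity and irreducibility of the flow (which rules out trajectories being trapped against the boundary), this yields exponential convergence of every interior trajectory to $z^*$, and in particular forces $z^*=\bar z+O(\varepsilon)$ to be the only interior equilibrium.

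The main obstacle is controlling the Lyapunov derivative away from the unperturbed equilibrium, since the classical dissipativity estimates for SIS-type models rely heavily on the concavity of the infection map, which the higher-order terms destroy. The smallness hypothesis is used precisely to bound these non-concave contributions and keep $\dot V$ negative definite; quantifying how small is ``sufficiently small'' reduces to comparing the Perron gap $1-\rho(D_f^{-1}B_f)^{-1}$ of the linear part to the supremum norm of the tensors $B_{fi}$ and of $[\beta^w_{ij}]$ on $\bar{\mathbf{D}}$.
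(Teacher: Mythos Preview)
Your perturbation-of-Lyapunov step has a genuine gap. You center $V$ at the \emph{unperturbed} equilibrium $\bar z$ and claim the perturbation preserves $\dot V\le-\kappa'V$; but the perturbed system has its own equilibrium $z^*\neq\bar z$, and at $z=z^*$ one has $\dot V=0$ while $V(z^*)>0$, so no such inequality can hold. At best your argument yields convergence to an $O(\varepsilon)$-neighbourhood of $\bar z$, after which a separate local argument at $z^*$ is still needed. There is a second problem: even for the unperturbed SIS block, the relative-entropy function $V(z)=\sum_i v_i[z_i-\bar z_i\ln(z_i/\bar z_i)]$ does not satisfy $\dot V\le-\kappa V$ uniformly on the interior---already in the scalar case $\dot V=-\beta(x-\bar x)^2$ stays bounded while $V\to\infty$ as $x\to 0$, so $-\dot V/V\to 0$ near the boundary. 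Your ``on any compact subset of the interior'' qualifier therefore hides a real obstruction: you have not produced a fixed forward-invariant compact interior set on which to run the estimate. (As a minor point, setting $\beta^w_{ij}=0$ makes $B_f$ block lower triangular and destroys irreducibility, so the Perron vector $v$ you invoke needs a separate justification via the cascade structure.)

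The paper avoids all of this by a different mechanism. It first exhibits a forward-invariant order interval $\tilde Y=\{\alpha v\le z\le u\}$ bounded away from $\partial\mathbf{D}$ (using the Perron vector of the \emph{full} $D_f^{-1}B_f$, which is still irreducible), and then works directly with the \emph{perturbed} equilibrium $z^*$: the system is rewritten in error form $\dot z=D(z,z^*)(z-z^*)$ for an explicit state-dependent irreducible Metzler matrix $D(z,z^*)$, and one checks the sign condition $D(z,z^*)\tilde z^*\ll 0$ componentwise for the test vector $\tilde z^*=(\tfrac12 x^*,w^*)^\top$. The smallness of $\beta^w_{ij}$ and of the entries of $B_{fi}$ is used precisely to make the positive extra terms in $(D\tilde z^*)_i$ dominated by the negative term $-\tfrac12 x_i^*\sum_j\beta_{ij}x_j$; once this holds, Coppel's inequality (Theorem~3.1 in \cite{cisneros2021multi}) delivers exponential convergence to $z^*$, and uniqueness is a by-product of global attractivity.
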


\begin{proof}
We recall that $\Tilde{Y}=\{z=(x,w)^{\top}|c_i\leq x_i\leq 1, c_{j+n}\leq w_j \leq w_{j\max} \}$, where $c=(c_1,\cdots,c_{n+m})^{\top}=\alpha v$, $v$ is the positive Perron-eigenvector associated with $\rho=\rho(D^{-1}_f B_f)$ and $\alpha$ is small enough such that $c=\alpha v\leq 1-\frac{1}{\rho}$. It is easy to check that $\Tilde{Y}$ is a forward invariant set. By using the condition $c_i=\alpha v_i$, it yields that $\dot{z}_i>0$ when $z_i=c_i$.

Then, the idea is to rewrite \eqref{eq::sys_siws_hyp_z} in the form of $\dot{z}=D(z,z^*)(z-z^*)$ and apply Theorem 3.1 (Exponential convergence from Coppel’s inequality) in \cite{cisneros2021multi}. More precisely, we can rewrite \eqref{eq::sys_siws_hyp_z} as
        $\dot{z}=D(z,z^*)(z-z^*),$
where 
{\small\begin{equation*}
\begin{split}
&D(z,z^*):= -D_f+(I-{Z}^*)B_f-\Dg(\left[\begin{matrix}
	B & B_w \\
\mathbf{0}& \mathbf{0}  \\
	\end{matrix}\right] z) \\
 &+(I-{Z}^*)\left[\begin{matrix}
            \frac{1}{2}(z+z^*)^\top B_{f1}+\frac{1}{2}(z+z^*)^{\top} B_{f1}^\top\\
            \vdots \\
           \frac{1}{2}(z+z^*)^\top B_{fn+m}+\frac{1}{2}(z+z^*)^{\top}B_{fn+m}^\top
        \end{matrix}
        \right]\\
        &-\Dg[(z^\top B_{f1}z,\cdots,z^\top B_{fn}z,0,\cdots,0)^\top]
\end{split}
\end{equation*}} is an irreducible Metzler matrix.

Again, we define $\tilde{z}^*=(\frac{x^*}{2},w^*)^\top$.
For $i\geq n+1$, we have
\begin{equation*}\label{eq::D1}
\begin{split}
 (D\tilde{z}^*)_i=-\frac{1}{2}\delta^w_{i-n} w_{i-n}^*+\frac{1}{4}\sum_{j,k} c^{xx}_{(i-n)jk} (x_i^* x_k+x_i x_k^*).
\end{split}
\end{equation*}

On the other hand, for $i\leq n$ we have
{\small\begin{equation*}\label{eq::D2}
\begin{split}
 (D\tilde{z}^*)_i &=(1-x_i^*)\Big(\sum_j \frac{1}{2}\beta^w_{ij}w_j^*\Big)\\
 &-\frac{1}{2}x^*_i\Big(\sum_j \beta_{ij}x_j + \sum_{j}\beta^w_{ij}w_j+\sum_{j,k} \beta_{ijk} x_ix_j\\
 &+\sum_{j,k} \beta_{ijk}^{wx} w_jx_k 
 +\sum_{j,k} \beta_{ijk}^{xw} x_jw_k\Big)
 \\
 &+(1-x_i^*)\Big( \frac{1}{4}\sum_{j,k}\beta^{xw}_{ijk} (x^*_j w_k+ x_j w^*_k+ x^*_j w^*_k) \\
 &+ \sum_{j,k}\frac{1}{4}\beta^{wx}_{ijk}(w^*_j x^*_k+w^*_j x_k+ w_j x^*_k) \\
 &+ \sum_{j,k}\frac{1}{4}\beta^{xx}_{ijk} (x^*_j x_k+x_j x^*_k)  \Big).
\end{split}
\end{equation*}}

One can confirm that if the $\beta^w_{ij}$ terms in $B_f$ and all the higher order parameters in all $B_{fi}$ ($\beta^{xw}_{ijk},\beta^{xx}_{ijk},\beta^{wx}_{ijk},c^{xx}_{ijk}$) are sufficiently small, then $D\tilde{z}^*$ is upper bounded by some negative constant. Thus, according to Theorem 3.1 (Exponential convergence from Coppel’s inequality) in \cite{cisneros2021multi}, we complete the proof.
\end{proof}

Next, we show that the system \eqref{eq::sys_siws_hyp_z} is actually an irreducible monotone system.

\begin{theorem} [Irreducible monotone system]\label{thm:irrsys}
If Assumption \ref{ass:xini1}-\ref{ass:irre} hold, the system \eqref{eq::sys_siws_hyp_z} is an irreducible monotone system in $\mathbf{{D}}$ (defined in Lemma \ref{lem:GP}). Furthermore, the model \eqref{eq::sys_siws_hyp_z} has a finite number of equilibria in $\mathbf{\Bar{D}}$, which is the closure of $\mathbf{{D}}$ (defined in Lemma \ref{lem:GP}), for a generic choice of parameters. Moreover, the solution of \eqref{eq::sys_siws_hyp_z} converges to an equilibrium for almost all initial conditions. That is to say, the set of initial conditions, such that the model does not converge to an equilibrium, is a set of Lebesgue measure zero.
\end{theorem}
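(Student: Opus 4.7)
The plan is to establish the three claims in sequence by reducing them to standard results in the theory of cooperative dynamical systems. Throughout, I will work with the Jacobian $J(z)$ of the right-hand side of \eqref{eq::sys_siws_hyp_z}.

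First, I would verify monotonicity by computing, for $i\neq j$, the off-diagonal entry
\begin{equation*}
    \frac{\partial \dot{z}_i}{\partial z_j}
    \;=\; \bigl(I-Z\bigr)_{ii}\!\left[(B_f)_{ij} + e_j^{\top}(B_{fi}+B_{fi}^{\top})z\right],
\end{equation*}
together with the analogous expression in the pathogen block, where the factor $(I-Z)_{ii}$ does not appear and one simply picks up $(C_w)_{jk}$ plus a non-negative combination of $c^{xx}_{jik}x_i$ terms. On the open set $\mathbf{D}$ we have $0<x_i<1$ and $z\geq 0$, so by Assumption \ref{ass:para} every one of these off-diagonal entries is non-negative. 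Hence $J(z)$ is a Metzler matrix on $\mathbf{D}$ and the flow is cooperative in the sense of Hirsch–Smith. Irreducibility of $J(z)$ follows because the directed graph associated to $J(z)$ contains the directed graph of $B_f$ as a subgraph, and the latter is strongly connected by Assumption \ref{ass:irre}. Consequently $J(z)$ is an irreducible Metzler matrix on $\mathbf{D}$, so the system is strongly monotone there.

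Second, for the finite number of equilibria for a generic choice of parameters, I would note that the right-hand side of \eqref{eq::sys_siws_hyp_z} is a polynomial map $f(z;\theta)$ in $z$ depending smoothly on the parameter vector $\theta$ consisting of the entries of $D_f$, $B_f$ and the $B_{fi}$. Applying the parametrized Sard theorem to the map $f:\mathbf{\bar D}\times\Theta\to\mathbb{R}^{n+m}$, the set of parameters for which $0$ is a regular value of $f(\cdot;\theta)$ is of full Lebesgue measure in $\Theta$. For such generic $\theta$ the zero set of $f(\cdot;\theta)$ in $\mathbf{\bar D}$ is a $0$-dimensional submanifold of the compact set $\mathbf{\bar D}$, hence finite.

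Third, having established strong monotonicity on the invariant set $\mathbf{\bar D}$ (where trajectories are bounded by Lemma \ref{lem:GP}) together with the existence of only finitely many equilibria, almost-everywhere convergence to equilibria is immediate from Hirsch's generic convergence theorem for strongly monotone semiflows: the set of initial conditions whose forward orbit does not converge to the equilibrium set has Lebesgue measure zero, and under finiteness of equilibria each non-exceptional trajectory converges to a single equilibrium.

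The main obstacle will be the bookkeeping in the Jacobian computation: one must carefully collect all the direct linear, higher-order quadratic, cross-terms, and pathogen-coupling contributions and verify that in every case the off-diagonal entry is non-negative on $\mathbf{D}$, while the apparently subtracted factors (coming from differentiating the $(I-Z)$ multiplier with respect to $x_i$) only affect the diagonal. Once the Metzler and irreducibility structure is confirmed, the remaining parts are direct invocations of the parametrized Sard theorem and of Hirsch's generic convergence theorem.
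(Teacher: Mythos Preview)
Your Steps 1 and 3 match the paper's argument: both compute the Jacobian, verify the irreducible Metzler structure on $\mathbf{D}$, and then invoke Hirsch-type generic convergence (the paper cites Lemmas~2.2 and~2.3 of \cite{ye2022convergence}). Your remark that the sign graph of $J(z)$ contains that of $B_f$ is exactly the irreducibility argument the paper has in mind.

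Your Step~2 is a genuinely different route. The paper does not use parametrized Sard; instead it exploits that the vector field is \emph{polynomial}. It exhibits an explicit parameter value with finitely many equilibria (for instance, parameters for which the healthy state is globally stable, or a decoupled choice yielding scalar polynomial equations), and then invokes an algebraic result (Theorem~B.1 and Corollary~B.2 in \cite{ye2022convergence}) saying that for polynomial families the parameter set with infinitely many zeros is a proper algebraic subvariety, hence of measure zero. This buys you finiteness without any regularity analysis of the full map.

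Your Sard argument, as written, has a gap you should be aware of. The parametric transversality theorem requires that $0$ be a regular value of the \emph{full} map $F:\mathbf{\bar D}\times\Theta\to\mathbb{R}^{n+m}$, and you do not check this. The delicate point is the origin: since $z=0$ is an equilibrium for \emph{every} choice of parameters, all $\theta$-partial derivatives of $f$ vanish at $z=0$, so the only contribution to $D_{(0,\theta)}F$ is the $z$-Jacobian $-D_f+B_f$. Thus $0$ is a regular value of $F$ at $(0,\theta)$ only when $-D_f+B_f$ is nonsingular, which fails on a codimension-one set of parameters. The fix is to first restrict to the full-measure set $\Theta_0$ on which $-D_f+B_f$ is invertible, then verify submersivity elsewhere (at interior equilibria the derivatives with respect to the diagonal entries of $D_f$ already span, and Lemma~\ref{lem:GP}\,iii) rules out other boundary equilibria). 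This is repairable, but the paper's algebraic route avoids the issue entirely.
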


\begin{proof}
Firstly, we calculate the Jacobian of the system \eqref{eq::sys_siws_hyp_z}:
\begin{align*}
    J=-D_f+(I-Z)(B_f+J_H(z))-Z\Dg(B_f z+H(z));
 \end{align*}
where $J_H(z)$ is the Jacobian of $H(z)$.
Note that $B_f+J_H(z)> \mathbf{0}$. Furthermore, $D_f$ is diagonal. Thus, the Jacobian is an irreducible Metzler matrix at an arbitrary point in $\mathbf{{D}}$ along the trajectories of \eqref{eq::sys_siws_hyp_z}. This confirms that the system \eqref{eq::sys_siws_hyp_z} is an irreducible monotone system by Lemma 2.2 of \cite{ye2022convergence}.

We just need to further confirm that the system \eqref{eq::sys_siws_hyp_z} has a finite number of equilibria in the closure of $\mathbf{\Bar{D}}$ for a generic choice of parameter. We know that the healthy state is globally exponentially stable if  $\rho(D^{-1}_f B_f+D^{-1}_f[u^{\top}B_{f1},\cdots,u^{\top}B_{fn+m}]^{\top})<1$, which indicates that the healthy state is a unique equilibrium. Another example could be to choose the infection rates (some are zeros, the rest take positive values) and recovery rates (set to be 1) such that it yields a partially-decoupled equation of $\{-1+ (1-x_i)[\beta_{ii}+\beta_{iii}x_i]\}x_i=0$ and $-w_i+ [c_{ij}x_j+c_{ijj}(x_j)^2]=0$. It is straightforward to see that such an equation system has a finite number of solutions. We indeed see some particular examples. Thus, we found out that the model \eqref{eq::sys_siws_hyp_z} has a finite number of equilibria in the closure of $\mathbf{\Bar{D}}$ with a particular choice of parameter. We can also see that the equation $\dot{z}=\mathbf{0}$ is actually a polynomial equation. Thus, the model has a finite number of equilibria in the closure of $\mathbf{\Bar{D}}$ if the parameters do not lie on a certain algebraic set of measure zero by Theorem B.1 and Corollary B.2 in \cite{ye2022convergence}.

Moreover, by Lemma 2.3 of \cite{ye2022convergence}, the solutions of \eqref{eq::sys_siws_hyp_z} converge to an equilibrium with some domain of attraction; by the same lemma, the set of initial conditions, such that the model does not converge to an equilibrium, is a set of measure zero.
\end{proof}

\begin{remark}
As a consequence of \eqref{eq::sys_siws_hyp_z} being an irreducible monotone system and according to the Theorem \ref{thm:irrsys}, the convergence to equilibrium occurs for almost all initial conditions. Once the system processes a unique locally stable equilibrium, it is then globally stable. If the healthy state is unstable, then the convergence to an endemic equilibrium happens for almost all initial conditions. However, there may be multiple endemic equilibria. In the case of multiple locally stable equilibria, then multi-stability is induced. However, we have not determined the domain of attraction of each locally stable equilibrium yet. This remains future work. 
\end{remark}

\section{Extension to a bi-virus competing SIWS model with higher-order interactions}\label{sec:bi}
In most scenarios, not just one virus (or information) spreads over the network. Usually, several different viruses spread simultaneously and compete with each other. Concrete examples of this competition in real life include competing opinions on a social network \cite{sahneh2014competitive} and competing rumors spreading in politics \cite{trpevski2010model}. Here, we study the simplest case of a bi-virus competition as a typical example. Since we concentrate on a case of extreme competition, we assume both viruses are not compatible with each other and one can only get infected with one virus at a time. We want to emphasize that this competitive assumption and coupling method are widely adopted among related works in epidemic models on a conventional graph \cite{liu2019analysis,cui2022discrete}. Then, a bi-virus model can be obtained in at least two ways. One can directly couple two separate single-virus models \eqref{eq::sys_siws_hyp_z} into one system. Alternatively, one can first model a Markov chain for the bi-virus case and repeat the procedure in the section \ref{subsection:GM} and \ref{sec::SIWS}.

 The model \eqref{eq::sys_siws_hyp_z} can be straightforwardly extended into a bi-virus competing SIWS model, based on the assumption of no simultaneous co-infection, which leads to
{
\small\begin{align}
\dot{x}^1_i&= -\delta_i^1 x_i^1 
+\left[ (1-x_i^1-x_i^2) (\sum_{j=1}^{n} \beta^1_{ij} x_j^1
 +\sum_{j=1}^{m} \beta_{ij}^{w1} w_j^1 ) \right]\notag\\
 &+\left[ (1-x_i^1-x_i^2) \left(\sum_{j,k\in \mathbf{N}_3^{\text{i}}} \beta^{xx1}_{ijk} x_i^1 x_j^1
 +\sum_{j,k\in \mathbf{N}_3^{\text{i}}} \beta_{ijk}^{wx1} w_j^1x_k^1 
 +\sum_{j,k\in \mathbf{N}_3^{\text{i}}} \beta_{ijk}^{xw1} x_j^1w_k^1
 \right) \right],
 \label{eq::sys_sis_dt_x1}\\
 \dot{w}_j^1&= -\delta_{j1}^{w}w_{j}^1+\sum_{k=1}^{n}c_{jk}^{w1}x_{k}^1+\sum_{i,k\in \mathbf{N}_3^{\text{i}}}c_{ijk}^{xx1}x^1_{i}x^1_{k}, \label{eq::sys_sis_dt_w1}\\
 \dot{x}^2_i&= -\delta_i^2 x_i^2 
+\left[ (1-x_i^1-x_i^2) (\sum_{j=1}^{n} \beta_{ij}^2 x_j^2
 +\sum_{j=1}^{m} \beta_{ij}^{w2} w_j^2 ) \right]\notag\\
 &+\left[ (1-x_i^1-x_i^2) \left(\sum_{j,k\in \mathbf{N}_3^{\text{i}}} \beta^{xx2}_{ijk} x_i^2 x_j^2
 +\sum_{j,k\in \mathbf{N}_3^{\text{i}}} \beta_{ijk}^{wx2} w_j^2x_k^2 +\sum_{j,k\in \mathbf{N}_3^{\text{i}}} \beta_{ijk}^{xw2} x_j^2w_k^2
 \right) \right],
 \label{eq::sys_sis_dt_x2}\\
 \dot{w}_j^2&= -\delta_{j}^{w2}w_{j}^2+\sum_{k=1}^{n}c_{jk}^{w2}x_{k}^2+\sum_{i,k\in \mathbf{N}_3^{\text{i}}}c_{ijk}^{xx2}x^2_{i}x^2_{k}, \label{eq::sys_sis_dt_w2}
\end{align}
}\par
\noindent where all notations remain the same as Table \ref{tab:notation}  and the superscript $1,2$, denotes the first or the second virus. Without loss of generality and for simplicity, we refer to $\nu=1$ as the first virus and to $\nu=2$ as the second. We recall that, for now, we deal with the interaction of up to 3 bodies.

For $\nu=1,2$, we can rewrite our system \eqref{eq::sys_sis_dt_x1}-\eqref{eq::sys_sis_dt_w2} as
\begin{equation}\label{eq::sys_siws_dt_zbi1}
\begin{split}
    \dot{z}^\nu&=-D_f^\nu z^\nu +(I-Z^1-Z^2)(B_f^\nu z^\nu+H^\nu(z^\nu)).
\end{split}
\end{equation}

In the following, we present the bi-virus version of assumptions \ref{ass:xini1}, \ref{ass:para}, and \ref{ass:irre}.

\begin{assumption}\label{ass:xini2}
The initial condition of infection level satisfies $x^{\nu}_i(0)\in [0,1]$ for all for all $\nu=1,2$ and $i=1,\ldots,n$. Furthermore, $x^1_i(0)+x^2_i(0) \leq 1$ holds for all $i=1,\ldots,n$.
\end{assumption}

\begin{remark}
We can now obtain the upper bound of each pathogen dynamic as 
\begin{equation}\label{eq:wkjmax}
    w^\nu_{j\max}=\frac{\sum_{k=1}^{n}c_{jk}^{w\nu}+\sum_{i,k\in \mathbf{N}_3^{\text{i}}}c_{jik}^{xx\nu}}{\delta_j^w}.
\end{equation}
\end{remark}

\begin{assumption}\label{ass:para2} 
The matrices $B_f^\nu$, $B_{fj}^\nu$ are non-negative for $\nu=1,2$. The diagonal entries of $D_f^\nu$ are positive for $\nu=1,2$.
\end{assumption}

\begin{assumption}\label{ass:irre2}
The matrix $B^\nu_f$ is irreducible for $\nu=1,2$.
\end{assumption}

For simplicity, we define the notation $F(z)^\nu=B_f^\nu z^\nu$, $f_i^\nu=(B_f^\nu z^\nu)_i$, $h_i^\nu=(H^\nu)_i$.  Then, the concrete system \eqref{eq::sys_siws_dt_zbi1} is a special case of the general system \eqref{eq::sys_siws_dt_zbi}, which is discussed later. 

\begin{lemma}[General properties]\label{lem:2a}
Consider the bi-virus model \eqref{eq::sys_siws_dt_zbi1}. Let $w^k_{j\max}$ be as defined in \eqref{eq:wkjmax},
\begin{equation}
\begin{split}
    \mathbf{D}&=\left\{z|z=(x^1,w^1,x^2,w^2)^{\top}=(x^1_1,\cdots,x^1_n,w^1_1,\cdots,w^1_m,x^2_1,\cdots,x^2_n,w^2_1,\cdots,w^2_m)^\top, \right.\\
    &\qquad \left. x^\nu_i \in (0,1), \, w^\nu_j \in (0,w^\nu_{j\max}), \, \nu=1,2 \right\},
\end{split}
\end{equation}
and $\mathbf{\Bar{D}}$ be its closure. If Assumptions \ref{ass:xini2}-\ref{ass:para2} hold, then
\begin{itemize}
    \item [i)] the set $\mathbf{\Bar{D}}$ is positively invariant;
    \item [ii)] the set $\mathbf{D}$ is also positively invariant;
    \item [iii)] the origin is always an equilibrium of the model;
    \item [iv)] the dynamics of each virus in \eqref{eq::sys_siws_dt_zbi1} are upper bounded by its single virus counterpart in \eqref{eq::sys_siws_hyp_z} with the same system parameters.
\end{itemize}
\end{lemma}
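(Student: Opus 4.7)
The plan is to dispatch items (i)--(iii) by elementary vector-field arguments on the boundary of $\bar{\mathbf{D}}$, and then establish the comparison statement (iv) via a monotonicity/comparison argument that leverages Theorem \ref{thm:irrsys}. A preliminary remark I would record is that, although $\mathbf{D}$ is defined purely by the componentwise bounds $x^\nu_i\in(0,1)$ and $w^\nu_j\in(0,w^\nu_{j\max})$, the non-superinfection constraint $x^1_i+x^2_i\le 1$ is automatically preserved along trajectories: on the hyperplane $x^1_i+x^2_i=1$ the coupling factor $(1-x^1_i-x^2_i)$ vanishes, so $\dot{x}^\nu_i=-\delta^\nu_i x^\nu_i\le 0$ for $\nu=1,2$, and hence the sum cannot grow past $1$.

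For (i), I would check the outward component of the vector field on every face of $\bar{\mathbf{D}}$. On $x^\nu_i=0$ we have $\dot x^\nu_i=(1-x^{3-\nu}_i)[\text{non-negative sums}]\ge 0$; on $x^\nu_i=1$ the preserved sum constraint forces $x^{3-\nu}_i=0$, the coupling factor vanishes, and $\dot x^\nu_i=-\delta^\nu_i<0$; on $w^\nu_j=0$ the drift reduces to a sum of non-negative linear and bilinear terms in $x^\nu$; finally, on $w^\nu_j=w^\nu_{j\max}$, using $x^\nu_k\le 1$ and the definition \eqref{eq:wkjmax}, we obtain $\dot w^\nu_j\le -\delta^{w}_{j}w^\nu_{j\max}+\sum_k c^{w\nu}_{jk}+\sum_{i,k}c^{xx\nu}_{jik}=0$. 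Together these inequalities yield the forward invariance of $\bar{\mathbf{D}}$. For (ii), the interior invariance follows because the strict versions of the above inequalities hold in the interior, and irreducibility of $B^\nu_f$ (Assumption \ref{ass:irre2}) prevents a trajectory initialised in $\mathbf{D}$ from reaching any face of $\partial\mathbf{D}$ in finite time: the argument is essentially the same as for Lemma \ref{lem:GP} and its general-case version Lemma \ref{lem:gpa}, which I would simply adapt. Item (iii) is immediate from substituting $x^1=x^2=0$ and $w^1=w^2=0$ into \eqref{eq::sys_siws_dt_zbi1}.

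For (iv), the key observation is that, componentwise, $(I-Z^1-Z^2)\le (I-Z^\nu)$ since $Z^{3-\nu}\ge 0$, and the infection vector $B^\nu_f z^\nu+H^\nu(z^\nu)$ is non-negative on $\bar{\mathbf{D}}$. Hence, along any trajectory of \eqref{eq::sys_siws_dt_zbi1}, the $\nu$-th block satisfies $\dot z^\nu \le -D^\nu_f z^\nu+(I-Z^\nu)(B^\nu_f z^\nu+H^\nu(z^\nu))$, whose right-hand side is precisely the single-virus vector field of \eqref{eq::sys_siws_hyp_z} evaluated at $z^\nu$. Let $\tilde z^\nu$ denote the solution of \eqref{eq::sys_siws_hyp_z} with $\tilde z^\nu(0)=z^\nu(0)$ and the same parameters. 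Introducing $e(t):=\tilde z^\nu(t)-z^\nu(t)$, we have $e(0)=0$, and whenever a coordinate $e_i(t)$ first hits zero one checks by direct computation that $\dot e_i(t)\ge 0$, using (a) the inequality displayed above, and (b) the fact, already baked into Theorem \ref{thm:irrsys}, that the single-virus Jacobian restricted to $\bar{\mathbf{D}}$ is an irreducible Metzler matrix so that the infection term is monotone in $z^\nu$. A standard Kamke-type comparison theorem for cooperative systems then delivers $e(t)\ge 0$, i.e.\ $z^\nu(t)\le \tilde z^\nu(t)$ componentwise for all $t\ge 0$.

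I expect the main obstacle to be step (iv), since one has to keep track of both the linear coupling through $(I-Z^1-Z^2)$ and the nonlinear higher-order term $H^\nu(z^\nu)$ simultaneously when invoking the comparison theorem. The other three items are essentially boundary computations mirroring the proof of Lemma \ref{lem:GP}. A caveat worth highlighting during the write-up is that the argument for (ii) relies on the irreducibility of $B^\nu_f$ through exactly the same mechanism as in Lemma \ref{lem:gpa}; if that reference is inconvenient at this point, I would inline a short Perron-style argument showing that no interior trajectory can strike $\partial\mathbf{D}$ in finite time.
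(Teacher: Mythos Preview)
Your proposal is correct and mirrors the paper's approach: the paper defers the proof to Lemma \ref{lem:2}, which carries out exactly the boundary vector-field checks you describe for (i)--(iii) and treats (iv) as a direct comparison of the bi-virus dynamics with the single-virus vector field (the paper simply says ``it suffices to compare the dynamics''), so your Kamke-type argument is a more explicit version of what the paper leaves implicit. One small caveat: you invoke irreducibility (Assumption \ref{ass:irre2}) for (ii) and, via Theorem \ref{thm:irrsys}, for (iv), but the lemma as stated only assumes Assumptions \ref{ass:xini2}--\ref{ass:para2}; the Metzler/cooperative structure needed for both the comparison principle and the finite-time boundary avoidance already follows from the non-negativity of the parameters, so irreducibility is not actually required here.
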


\begin{proof}
This lemma is a special case of Lemma \ref{lem:2}. Thus, the proof is omitted here and we provide the proof of a more general case in Lemma \ref{lem:2}.
\end{proof}

\begin{remark}
    Notice that now we study a bi-virus system \eqref{eq::sys_siws_dt_zbi1}. Hence, the system domain $\mathbf{D}$ is different from the system domain of a single-virus system in the former section \eqref{eq::sys_siws_hyp_z}. We keep the same notation because we are referring to the equivalent physical domain.
\end{remark}

\begin{theorem}[Healthy-state behaviour] \label{thm:healthybi}
Consider the system \eqref{eq::sys_siws_dt_zbi1}. If Assumptions \ref{ass:xini2}-\ref{ass:irre2} hold, then the healthy state (zero equilibrium) always exists and is
\begin{enumerate}
    \item  locally stable if $\rho((D^{\nu}_f)^{-1} B_f^\nu)<1$ for $\nu=1,2$,
    \item  globally exponentially stable if $$\rho((D^\nu_f)^{-1} B^\nu_f+(D^\nu_f)^{-1}[u^{\nu\top}B^\nu_{f1},\cdots,u^{\nu\top}B^\nu_{fn+m}]^{\top})<1,$$ with  ${u}^\nu=[\mathbf{1}^n,w^\nu_{1\max},\cdots,w^\nu_{m\max}]^{\top}$, and for $\nu=1,2$,
    \item unstable if $\rho((D^{\nu}_f)^{-1} B_f^\nu)>1$ for either $\nu=1$ or $\nu=2$.
\end{enumerate}
\end{theorem}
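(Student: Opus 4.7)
The plan is to leverage Theorem \ref{thm:healthysig} together with the comparison result in Lemma \ref{lem:2a}(iv), since the bi-virus system decouples nicely at the healthy state and is dominated component-wise by two copies of the single-virus dynamics.

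For the local stability and instability claims (items 1 and 3), the natural first step is to linearize \eqref{eq::sys_siws_dt_zbi1} at the origin. Since $Z^1 = Z^2 = \mathbf{0}$ and $H^\nu(\mathbf{0})=\mathbf{0}$ at the healthy state, and since the $-\delta^\nu_i x^\nu_i$ and $-\delta^{w\nu}_j w^\nu_j$ terms belong to each virus's own block, the Jacobian is block-diagonal:
\begin{equation*}
    J(\mathbf{0}) = \begin{bmatrix} -D_f^1 + B_f^1 & \mathbf{0} \\ \mathbf{0} & -D_f^2 + B_f^2 \end{bmatrix}.
\end{equation*}
Each block $-D_f^\nu + B_f^\nu$ is an irreducible Metzler matrix (by Assumptions \ref{ass:para2}, \ref{ass:irre2}), so the standard equivalence between $s(-D_f^\nu + B_f^\nu) < 0$ and $\rho((D_f^\nu)^{-1} B_f^\nu) < 1$ applies, exactly as in Theorem \ref{thm:healthysig}. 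This immediately gives local stability under the hypothesis of item 1, and instability whenever one of the two blocks has a positive-real-part eigenvalue, i.e.\ whenever $\rho((D_f^\nu)^{-1} B_f^\nu) > 1$ for some $\nu$.

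For the global exponential stability (item 2), the plan is to exploit Lemma \ref{lem:2a}(iv): each virus's dynamics in \eqref{eq::sys_siws_dt_zbi1} are upper bounded component-wise by the corresponding single-virus dynamics \eqref{eq::sys_siws_hyp_z} with the same parameters. Let $\tilde z^\nu$ denote the solution of the single-virus system \eqref{eq::sys_siws_hyp_z} with parameters $(D_f^\nu, B_f^\nu, \{B_{fi}^\nu\})$ starting from $z^\nu(0)$. Since the hypothesis on the spectral radius $\rho((D_f^\nu)^{-1} B_f^\nu + (D_f^\nu)^{-1}[u^{\nu\top}B^\nu_{f1},\ldots,u^{\nu\top}B^\nu_{fn+m}]^{\top}) < 1$ matches exactly the condition for global exponential stability in Theorem \ref{thm:healthysig} item 2, we can reuse the Lyapunov function $V^\nu = v^{\nu\top}(D_f^\nu)^{-1}\tilde z^\nu$ constructed there to obtain exponential decay $\|\tilde z^\nu(t)\| \leq M_\nu e^{-\gamma_\nu t}\|\tilde z^\nu(0)\|$ for suitable constants $M_\nu, \gamma_\nu > 0$. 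Combined with the componentwise bound $0 \leq z^\nu(t) \leq \tilde z^\nu(t)$, this yields the exponential decay of $z^\nu(t)$ toward the healthy state.

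The main obstacle I expect is item 2: although Lemma \ref{lem:2a}(iv) provides the crucial monotonic comparison, one has to be careful about the non-negativity of the trajectories (so that the componentwise bound transfers directly into a norm bound) and about combining the two independent bounds on $z^1$ and $z^2$ into a single exponential estimate on the full state $z = (z^1, z^2)^\top$. This is handled by choosing $\gamma = \min(\gamma_1, \gamma_2)$ and an appropriate constant $M$; the non-negativity of $z^\nu(t)$ is already guaranteed by Lemma \ref{lem:2a}(i)--(ii), so the comparison indeed gives the desired estimate. The linearization-based items 1 and 3 are then essentially corollaries of the single-virus analysis due to the block-diagonal structure at the healthy state.
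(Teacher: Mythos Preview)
Your proposal is correct and follows essentially the same approach as the paper: block-diagonal Jacobian at the origin for items 1 and 3, and Lemma \ref{lem:2a}(iv) combined with Theorem \ref{thm:healthysig} and the comparison principle for item 2. The paper's proof is terser, but the ingredients and logic are identical.
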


\begin{proof}
It is clear that the healthy state (zero equilibrium) always exists. The Jacobian matrix of the model at the origin reads as
 \begin{equation*}
     J(0,0)=\left[ 
        \begin{matrix}
	-D^1_f+B_f^1& \mathbf{0}  \\
	\mathbf{0} & -D^2_f+B_f^2   \\
	\end{matrix}
        \right].
 \end{equation*}
 
Thus, the healthy state is locally stable if $\rho((D^{\nu}_f)^{-1} B_f^\nu)<1$ for $\nu=1,2$ and unstable if $\rho((D^{\nu}_f)^{-1} B_f^\nu)>1$ for either $\nu=1$ or $\nu=2$.

The global exponential stability when  $$\rho((D^\nu_f)^{-1} B^\nu_f+(D^\nu_f)^{-1}[u^{\nu\top}B^\nu_{f1},\cdots,u^{\nu\top}B^\nu_{fn+m}]^{\top})<1$$ for $\nu=1,2$, is a direct consequence of statement iv) in Lemma \ref{lem:2a}, Theorem \ref{thm:healthysig} and the comparison principle.
\end{proof}

\begin{theorem} [Dominant endemic equilibrium]\label{thm:deq}
Consider the bi-virus model \eqref{eq::sys_siws_dt_zbi1}, if Assumptions \ref{ass:xini2}-\ref{ass:irre2} hold, we have the following statements. If $z^*=(x^*,w^*)^\top$ is an endemic equilibrium of the single-virus counterpart \eqref{eq::sys_siws_hyp_z} of the first virus, then $(z^*,0,0)^{\top}$ is the dominant endemic equilibrium of \eqref{eq::sys_siws_dt_zbi1}. The dominant endemic equilibrium is locally stable when $z^*$ is locally stable and $\rho(-D^2_{f}+(I-Z^*)B_f^2)<0$ with respect to the single-virus system and is unstable when $z^*$ is unstable or $\rho(-D^2_{f}+(I-Z^*)B_f^2)>0$, where $Z^*=\left[ 
        \begin{matrix}
	x^* & 0  \\
	0 & 0  \\
	\end{matrix}
        \right]$. Furthermore, if $\rho((D^{1}_f)^{-1} B^1_f)>1$, $\rho((D^2_f)^{-1} B^2_f+(D^2_f)^{-1}[u^{2\top}B_{f1},\cdots,u^{2\top}B_{fn+m}]^{\top})<1$, the $\beta^{w1}_{ij}$ terms in $B_f$ and all the higher order parameters in all $B^1_{fi}$ are sufficiently small, then the dominant endemic equilibrium is globally stable with domain of attraction $\Bar{\mathbf{D}}\setminus \mathcal{H}$, where $\mathcal{H}=\{[z^1, z^2]^{\top}\,|\, z^2\in\Bar{\mathbf{D}}, z^1=\mathbf{0}\}$.
\end{theorem}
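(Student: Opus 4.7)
The plan is to prove the three assertions—existence, local stability/instability, and global stability—by systematically exploiting the structural fact that setting the second virus to zero collapses the bi-virus system onto the single-virus model for the first virus.

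For existence, I would substitute $z^2=(x^2,w^2)^\top=\mathbf{0}$ into \eqref{eq::sys_siws_dt_zbi1}. Every term on the right-hand side of the second-virus equations \eqref{eq::sys_sis_dt_x2}--\eqref{eq::sys_sis_dt_w2} contains a factor drawn from $z^2$, so $\dot{z}^2$ vanishes identically. Simultaneously, the first-virus equations \eqref{eq::sys_sis_dt_x1}--\eqref{eq::sys_sis_dt_w1} reduce verbatim to the single-virus system \eqref{eq::sys_siws_hyp_z}. Thus any single-virus equilibrium $z^*=(x^*,w^*)^\top$ lifts to a bi-virus equilibrium $(z^*,\mathbf{0})^\top$, which is the dominant endemic equilibrium.

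For local stability, I would linearize \eqref{eq::sys_siws_dt_zbi1} at $(z^*,\mathbf{0})^\top$. Because every term in the second-virus right-hand side carries at least one factor from $z^2$, all partial derivatives $\partial \dot{z}^2/\partial z^1$ vanish at $z^2=\mathbf{0}$. The Jacobian is therefore block upper-triangular,
\[
J(z^*,\mathbf{0}) \;=\; \begin{bmatrix} J_{\mathrm{sv}}(z^*) & M \\ \mathbf{0} & -D_f^2+(I-Z^*)B_f^2 \end{bmatrix},
\]
where $J_{\mathrm{sv}}(z^*)$ is the Jacobian of the single-virus system at $z^*$. Its spectrum is exactly the union of the two diagonal blocks' spectra, so local stability reduces to (i) local stability of $z^*$ in the single-virus system together with (ii) the spectral condition on $-D_f^2+(I-Z^*)B_f^2$; instability follows analogously from the failure of either condition.

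For global stability on $\bar{\mathbf{D}}\setminus\mathcal{H}$, I would use a two-stage decouple-then-converge argument. By Lemma \ref{lem:2a}(iv), the trajectory $z^2(t)$ is bounded above componentwise by the single-virus trajectory of the second virus starting from the same initial condition; the hypothesis $\rho((D^2_f)^{-1}B^2_f+(D^2_f)^{-1}[u^{2\top}B_{f1},\dots]^\top)<1$ combined with Theorem \ref{thm:healthysig}(2) forces that upper bound to decay exponentially to $\mathbf{0}$, so $z^2(t)\to\mathbf{0}$ exponentially. Then, viewing the first-virus subsystem as an exponentially vanishing perturbation of the single-virus dynamics for the first virus—which, under $\rho((D^1_f)^{-1}B^1_f)>1$ and the smallness assumptions, has $z^*$ as a globally exponentially stable endemic equilibrium by Theorem \ref{thm:gsee}—I would conclude $z^1(t)\to z^*$. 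The set $\mathcal{H}$ must be excluded because it is forward invariant (the first-virus right-hand side has every term proportional to some $x^1_j$ or $w^1_j$, so $z^1\equiv\mathbf{0}$ is preserved) and on $\mathcal{H}$ the remaining second-virus subsystem converges to the healthy state, not to $(z^*,\mathbf{0})$.

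The main obstacle is the second stage: propagating the exponential convergence of $z^2$ into exponential convergence of $z^1$ to $z^*$. I would extend the Coppel-inequality argument from the proof of Theorem \ref{thm:gsee} by rewriting the first-virus equation as $\dot z^1 = D(z^1,z^*)(z^1-z^*) + g(t)$, where $g(t)$ collects the cross-virus contributions and is linear-plus-quadratic in $z^2(t)$, hence exponentially decaying. The same test vector $\tilde z^* = (x^*/2, w^*)^\top$ used in Theorem \ref{thm:gsee} should yield $D(z^1,z^*)\tilde z^*\leq -d\,\tilde z^*$ for some $d>0$, and a standard input-to-state stability argument for exponentially stable linear time-varying systems driven by exponentially decaying inputs then delivers exponential convergence of $z^1$ to $z^*$, completing the proof.
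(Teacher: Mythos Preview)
Your proposal is correct and follows essentially the same approach as the paper: reduce to the single-virus system for existence, use the block upper-triangular Jacobian for local stability, and for global stability argue that $z^2\to\mathbf{0}$ exponentially via Lemma~\ref{lem:2a}(iv) and Theorem~\ref{thm:healthysig}(2), then treat the first-virus subsystem as the single-virus dynamics plus a vanishing perturbation and invoke Theorem~\ref{thm:gsee}. If anything, your treatment of the final perturbation step (extending the Coppel-inequality argument with an exponentially decaying forcing term) and your explicit justification for excluding $\mathcal{H}$ are more detailed than the paper's, which simply asserts that a vanishing perturbation preserves convergence.
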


\begin{proof}
If we set $z^2=\mathbf{0}$, the system \eqref{eq::sys_siws_dt_zbi1} is reduced into the single virus counterpart of the first virus. Thus, if $z^*$ is an endemic equilibrium of the single-virus counterpart \eqref{eq::sys_siws_hyp_z} of the first virus, then $(z^*,0,0)^{\top}$ is the dominant endemic equilibrium of the system \eqref{eq::sys_siws_dt_zbi1}.

The Jacobian matrix of the model at the dominant endemic equilibrium is 
 \begin{equation*}
     J(z^*,\mathbf{0})=\left[ 
        \begin{matrix}
	\hat{J}(z^*) & \mathbf{Res}  \\
	\mathbf{0} & -D^2_{f}+(I-Z^*)B_f^2   \\
	\end{matrix}
        \right],
 \end{equation*}
 where $\hat{J}(z^*)$ is the Jacobian of the single-virus counterpart at the endemic equilibrium $z^*$ and $\mathbf{Res}$ is the matrix of some residual which we do not need to deal with.
 
 By observing the Jacobian $J(z^*,\mathbf{0})$ and notice that $\hat{J}(z^*)$ and $-D^2_{f}+(I-Z^*)B_f^2$ is irreducible nonnegative, we can directly obtain the results of local stability. That is, the dominant endemic equilibrium is locally stable when $z^*$ is locally stable and $\rho(-D^2_{f}+(I-Z^*)B_f^2)<0$ with respect to the single-virus system and is unstable when $z^*$ is unstable or $\rho(-D^2_{f}+(I-Z^*)B_f^2)>0$. 
 
 For the case of global stability, since $$\rho((D^2_f)^{-1} B^2_f+(D^2_f)^{-1}[u^{2\top}B_{f1},\cdots,u^{2\top}B_{fn+m}]^{\top})<1$$ holds, the single-virus counterpart of the second virus converges to the healthy state. By the comparison principle, the dynamics of the second virus converge to zero. The dynamic of the first virus is thus its single virus counterpart plus a vanishing perturbation. Therefore, the whole system converges to the dominant endemic equilibrium.
\end{proof}

\begin{remark}
The analytical results of the Theorem \ref{thm:deq} are still true if we interchange the superscript $1$ (first virus) and $2$ (second virus). Bistability may also occur in the bi-virus system because the condition for the local stability of the healthy state and a dominant endemic equilibrium may be satisfied at the same time.
\end{remark}

\begin{theorem} [Coexisting equilibrium]\label{thm:coexist}
Consider the model \eqref{eq::sys_siws_dt_zbi1}, if Assumptions \ref{ass:xini2}-\ref{ass:irre2} hold,  we have the following statements. 
If $\rho((D^{\nu}_f)^{-1} B^\nu_f)>1$ for all $\nu=1,2$ and $\rho(-D^1_{f}+(I-Z^{2*})B^1_f)>0$ as well as $\rho(-D^2_{f}+(I-Z^{1*})B^2_f)>0$, where $z^{1*}$ and $z^{2*}$ are the endemic equilibrium of the single-virus counterpart of the first and second virus respectively and the notation $Z^{1*}$ and $Z^{2*}$ sticks to the form of \eqref{eq::notations1}, then there exists at least one coexisting equilibrium $(\hat{z}^1,\hat{z}^2)\gg \mathbf{0}$ such that $\hat{x}^1+\hat{x}^2\leq \mathbf{1}$, where $\hat{z}^\nu=(\hat{x}^\nu,\hat{w}^\nu)^\top$.
\end{theorem}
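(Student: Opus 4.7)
The plan is to prove existence via Brouwer's fixed-point theorem, extending the construction used in the proof of Theorem \ref{thm:exend} to the two-virus setting. Setting the right-hand side of \eqref{eq::sys_siws_dt_zbi1} to zero and solving coordinate-wise yields the natural map $\hat{T}(z^{1},z^{2})$ whose fixed points are exactly the equilibria of \eqref{eq::sys_siws_dt_zbi1}:
$$\hat{T}_{i}^{\nu}(z^{1},z^{2})=\frac{(1-x_{i}^{3-\nu})\,p_{i}^{\nu}(z^{\nu})}{1+p_{i}^{\nu}(z^{\nu})}\quad\text{for }i\leq n,\qquad \hat{T}_{n+j}^{\nu}(z^{1},z^{2})=p_{n+j}^{\nu}(z^{\nu}),$$
where $p^{\nu}=(D_{f}^{\nu})^{-1}(B_{f}^{\nu}z^{\nu}+H^{\nu}(z^{\nu}))$. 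Two features will drive the argument: $\hat{T}^{\nu}$ is coordinate-wise nondecreasing in $z^{\nu}$ (because $B_{f}^{\nu}$ and $B_{fi}^{\nu}$ are nonnegative and $t\mapsto t/(1+t)$ is increasing) and nonincreasing in $x^{3-\nu}$ through the $(1-x_{i}^{3-\nu})$ factor; it is independent of $w^{3-\nu}$.

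I would then build a compact convex set invariant under $\hat{T}$. The single-virus endemic equilibria $z^{\nu*}$, whose existence is guaranteed by the preceding results for \eqref{eq::sys_siws_hyp_z} under $\rho((D^\nu_f)^{-1}B^\nu_f)>1$, supply componentwise upper bounds: monotonicity together with the single-virus identity $p_{i}^{\nu}(z^{\nu*})/(1+p_{i}^{\nu}(z^{\nu*}))=x_{i}^{\nu*}$ gives $\hat{T}^{\nu}(z^{1},z^{2})\leq z^{\nu*}$ whenever $z^{\nu}\leq z^{\nu*}$. For the lower bound I invoke the invasion conditions. The matrix $M^{\nu}=-D^{\nu}_{f}+(I-Z^{(3-\nu)*})B^{\nu}_{f}$ is an irreducible Metzler matrix (irreducibility of $B^{\nu}_f$ is preserved after multiplication by the positive diagonal $I-Z^{(3-\nu)*}$), and the hypothesis $\rho(M^{\nu})>0$, read in its standard dynamical sense, is equivalent to $\rho(L^{\nu})>1$ for the nonnegative linearization $L^{\nu}=(D_{f}^{\nu})^{-1}(I-Z^{(3-\nu)*})B_{f}^{\nu}$; this supplies a Perron eigenvector $v^{\nu}\gg\mathbf{0}$ with Perron eigenvalue $\tilde{\lambda}^{\nu}>1$. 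A direct linearization of $\hat{T}^{\nu}$ about the semi-trivial equilibrium $(\mathbf{0},z^{(3-\nu)*})$ reproduces $L^{\nu}$, hence $\hat{T}^{\nu}(\alpha^{\nu}v^{\nu},z^{(3-\nu)*})=\alpha^{\nu}\tilde{\lambda}^{\nu}v^{\nu}+O((\alpha^{\nu})^{2})\geq\alpha^{\nu}v^{\nu}$ for $\alpha^{\nu}$ small; combining with the monotonicity properties promotes this to $\hat{T}^{\nu}(z^{1},z^{2})\geq\alpha^{\nu}v^{\nu}$ on the box
$$\Omega=\bigl\{(z^{1},z^{2}):\alpha^{1}v^{1}\leq z^{1}\leq z^{1*},\ \alpha^{2}v^{2}\leq z^{2}\leq z^{2*}\bigr\},$$
which is nonempty, convex, and compact for $\alpha^{\nu}$ sufficiently small, and is mapped into itself by $\hat{T}$.

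Brouwer's fixed-point theorem then delivers $(\hat{z}^{1},\hat{z}^{2})\in\Omega$ with $\hat{T}(\hat{z}^{1},\hat{z}^{2})=(\hat{z}^{1},\hat{z}^{2})$; by construction $\hat{z}^{\nu}\geq\alpha^{\nu}v^{\nu}\gg\mathbf{0}$. The remaining physical constraint $\hat{x}^{1}+\hat{x}^{2}\leq\mathbf{1}$ comes for free from the fixed-point identity, since $\hat{x}_{i}^{\nu}=(1-\hat{x}_{i}^{3-\nu})p_{i}^{\nu}/(1+p_{i}^{\nu})\leq 1-\hat{x}_{i}^{3-\nu}$, hence $\hat{x}_{i}^{1}+\hat{x}_{i}^{2}\leq 1$ automatically. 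The main technical obstacle will be making the linearization-based lower bound fully rigorous: the quadratic and higher-order cross terms inside $H^{\nu}$ together with the $1/(1+p_{i}^{\nu})$ denominator produce $O((\alpha^{\nu})^{2})$ corrections that must be dominated by the positive Perron gap $(\tilde{\lambda}^{\nu}-1)\alpha^{\nu}v^{\nu}$; choosing $\alpha^{\nu}$ small enough resolves this, but the smallness estimate needs to be uniform across both the person ($i\leq n$) and the resource ($i=n+j$) coordinates, whose coupling to $L^{\nu}$ through $I-Z^{(3-\nu)*}$ differs structurally.
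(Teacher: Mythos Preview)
Your proposal is correct and follows essentially the same approach as the paper's proof: both define the natural fixed-point map, use the single-virus endemic equilibria $z^{\nu*}$ as upper bounds via monotonicity, use the Perron eigenvector of $(I-Z^{(3-\nu)*})(D_f^{\nu})^{-1}B_f^{\nu}$ scaled by a small $\alpha^\nu$ as lower bounds, and close with Brouwer's fixed-point theorem. The only presentational difference is that the paper introduces an explicit auxiliary map $\hat{M}\leq M$ obtained by dropping the higher-order terms $H^\nu$ and carries out the lower-bound computation concretely on $\hat{M}$, whereas you argue the same inequality via the linearization $\hat{T}^\nu(\alpha^\nu v^\nu,z^{(3-\nu)*})=\alpha^\nu\tilde\lambda^\nu v^\nu+O((\alpha^\nu)^2)$; your added verification that $\hat{x}^1+\hat{x}^2\leq\mathbf{1}$ follows directly from the fixed-point identity is a detail the paper leaves implicit.
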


\begin{proof}
Let us define the map $M: \mathbb{R}^{2n+2m}\rightarrow \mathbb{R}^{2n+2m}$ as follows: component-wise, $M$ is given by
\begin{equation}
\begin{aligned}
M^1\left(z^1, z^2\right)_i &=\frac{\left(1-z_i^2\right)\left(\left(D_f^1\right)^{-1} B_f^1 z^1+\left(D_f^1\right)^{-1} H^1(z^1)\right)_i}{1+\left(\left(D_f^1\right)^{-1} B_f^1 z^1 +\left(D_f^1\right)^{-1} H^1(z^1) \right)_i} \\
M^2\left(z^1, z^2\right)_i &=\frac{\left(1-z_i^1\right)\left(\left(D_f^2\right)^{-1} B_f^2 z^2+\left(D_f^2\right)^{-1} H^2(z^2)\right)_i}{1+\left(\left(D_f^2\right)^{-1} B_f^2 z^2+\left(D_f^2\right)^{-1} H^2(z^2)\right)_i},
\end{aligned}
\end{equation}
for $i\leq n$. Furthermore, for $i\geq n+1$, we let
\begin{equation}
\begin{aligned}
M^1\left(z^1, z^2\right)_i &=(D_f^1)^{-1} (B_f^1 z^1+H^1(z^1))_i, \\
M^2\left(z^1, z^2\right)_i &=(D_f^2)^{-1} (B_f^2 z^2+H^2(z^2))_i.
\end{aligned}
\end{equation}

It is straightforward to check that the fixed point of the map $M(z)=M(z^1,z^2)=(M^1(z^1,z^2),M^2(z^1,z^2))^\top$ yields the equilibrium point of the model \eqref{eq::sys_siws_dt_zbi1}. Furthermore, the map $M^1$ is a non-increasing function of $z^2$ and a non-decreasing function of $z^1$. Similarly, the map $M^2$ is a non-decreasing function of $z^2$ and a non-increasing function of $z^1$. 

Suppose $z^{1*}$ and $z^{2*}$ are the endemic equilibrium of the single-virus counterpart \eqref{eq::sys_siws_hyp_z} of the first and second virus respectively. It follows that $M_1(z^{1*},\mathbf{0})=z^{1*}$ and $M_2(\mathbf{0},z^{2*})=z^{2*}$. Then, we look at the map $M_1$. If $z^2\geq \mathbf{0}$, then it holds $M_1(z^{1*},z^2)\leq z^{1*}$. Moreover, if $z^1\leq z^{1*}$, then $M_1(z^1,z^2)\leq z^{1*}$. Similarly, we have $M_2(z^1,z^2)\leq z^{2*}$ if $z^1\geq \mathbf{0}$ and $z^2\leq z^{2*}$. Combining these two results, we get $M(z^1,z^2)\leq M(z^{1*},z^{2*})$ if $(z^1,z^2)\leq(z^{1*},z^{2*})$.

Now, we construct another map $\hat{M}$.  Component-wise, the expression for each row is given by,
\begin{equation}
\begin{aligned}
\hat{M}^1\left(z^1, z^2\right)_i &=\frac{\left(1-z_i^2\right)\left(\left(D_f^1\right)^{-1} B_f^1 z^1\right)_i}{1+\left(\left(D_f^1\right)^{-1} B_f^1 z^1  \right)_i}\leq M^1\left(z^1, z^2\right)_i \\
\hat{M}^2\left(z^1, z^2\right)_i &=\frac{\left(1-z_i^1\right)\left(\left(D_f^2\right)^{-1} B_f^2 z^2\right)_i}{1+\left(\left(D_f^2\right)^{-1} B_f^2 z^2\right)_i}\leq M^2\left(z^1, z^2\right)_i;
\end{aligned}
\end{equation}
for $i\leq n$. These inequalities are true because the function $\frac{s}{1+s}$ is increasing as $s$ increases. Then, similar proof techniques from Theorem 5 of \cite{janson2020networked} can be further applied.

Then, for $i\geq n+1$,
\begin{equation}
\begin{aligned}
\hat{M}^1\left(z^1, z^2\right)_i &=(D_f^1)^{-1} (B_f^1 z^1)\leq M^1\left(z^1, z^2\right)_i \\
\hat{M}^2\left(z^1, z^2\right)_i &=(D_f^2)^{-1} (B_f^2 z^2)\leq M^2\left(z^1, z^2\right)_i.
\end{aligned}
\end{equation}
Hence, it follows that $\hat{M}(z)\leq M(z)$. Due to the condition $\rho(-D^1_{f}+(I-Z^{2*})B^1_f)>0$ as well as $\rho(-D^2_{f}+(I-Z^{1*})B^2_f)>0$ and also considering that $-D^1_{f}+(I-Z^{2*})B^1_f$ and $-D^2_{f}+(I-Z^{1*})B^2_f$ are both irreducible Metzler matrices, we can obtain $\rho((I-Z_{3-\nu}^*)(D^\nu_{f})^{-1}B^\nu_f)>1$, for $\nu=1,2$. Note that $(I-Z_{3-\nu}^*)(D^\nu_{f})^{-1}B^\nu_f$ is irreducible nonnegative. Let $\lambda^\nu>1$ and $y^\nu$ be its corresponding simple eigenvalue and positive eigenvector. Since $(D^\nu_{f})^{-1}B^\nu_f$ is also irreducible nonnegative, we have $((D^\nu_{f})^{-1}B^\nu_f y^\nu)_k>0$. Hence, there must exist $\epsilon^1>0$ and $\epsilon^2>0$ such that
\begin{equation}
\begin{array}{r}
\epsilon^1<\min \left\{\frac{\lambda^1-1}{\max_{i \in[n+1]}\left(\left(D_f^1\right)^{-1} B_f^1 {{y}}^1\right)_i}, \min_{i \in[n+1]} \frac{({z}_1^*)_i}{{{y}}_i^1}\right\} \\
\epsilon^2<\min \left\{\frac{\lambda^2-1}{\max _{i \in[n+1]}\left(\left(D_f^2\right)^{-1} B_f^2 {{y}}^2\right)_i}, \min _{i \in[n+1]} \frac{({z}_2^*)_i}{{{y}}_i^2}\right\}
\end{array}
\end{equation}
It follows that 
\begin{equation}
\begin{aligned}
&1+\max _{i \in[n]}\left(\left(D_f^1\right)^{-1} B_f^1 \epsilon^1 {{y}}^1\right)_i<\lambda^1 \\
&1+\max _{i \in[n]}\left(\left(D_f^2\right)^{-1} B_f^2 \epsilon^2 {{y}}^2\right)_i<\lambda^2
\end{aligned}
\end{equation}
Thus, for $i\leq n$,
\begin{equation*}
\begin{aligned}
\hat{M}_i^1\left(\epsilon^1 {{y}}^1, z^{2*}\right) &=\frac{\left(\left(I-Z^{2*}\right)\left(D_f^1\right)^{-1} B_f^1 \epsilon^1 y^1\right)_i}{1+\left(\left(D_f^1\right)^{-1} B_f^1 \epsilon^1 {y}^1\right)_i} \\
&=\frac{\lambda^1 \epsilon^1 {y}_i^1}{1+\left(\left(D_f^1\right)^{-1} B_f^1 \epsilon^1 {{y}}^1\right)_i}>\epsilon^1 {y}_i^1 
\end{aligned}
\end{equation*}

Similarly, we have $\hat{M}_i^2\left({z}^*_1, \epsilon^2 {{y}}^2\right) >\epsilon^2 {y}_i^2$ for $i\leq n$. On the other hand, for $i\geq n+1$, 
\begin{equation*}
\begin{aligned}
\hat{M}_i^1\left(\epsilon^1 {{y}}^1, z^{2*}\right) &=\left((D_f^1)^{-1} (B_f^1 \epsilon^1 {{y}}^1)\right)_i\\
&=\epsilon^1 \lambda^1 {y}_i^1 >\epsilon^1 {y}_i^1, 
\end{aligned}
\end{equation*}
and analogously, we have $\hat{M}_i^2\left(z^{1*}, \epsilon^2 {{y}}^2 \right) >\epsilon^2 {y}_i^2$ for $i\geq n+1$.

By using the monotonicity of the map ($M,\hat{M}$) and noticing that $M\geq \hat{M}$, it follows that
\begin{equation*}
{M}^1\left(z^1, z^2\right)\geq\hat{M}^1\left(z^1, z^2\right)>\epsilon^1 {{y}}^1,\quad M^2\left(z^1, z^2\right)\geq \hat{M}^2\left(z^1, z^2\right)>\epsilon^2 {{y}}^2,
\end{equation*}
if $ \epsilon^1 {{y}}^1 \leq z^1 \leq z^{1*}$ and $ \epsilon^2 {{y}}^1 \leq z^2 \leq z^{2*}$. Thus, the map $M$ maps $\{ (z^1,z^2)| \epsilon^1 {{y}}^1 \leq z^1 \leq z^{1*},\quad \epsilon^2 {{y}}^1 \leq z^2 \leq z^{2*}\}$ into itself. According to the Brouwer fixed point theorem \cite{shapiro2016fixed}, we finish the proof.
\end{proof}

\begin{remark}
In \cite{li2022competing}, the authors study a very similar mean-field bi-virus planar SIS model in a simplicial complex. Such a model can be seen as a simplified version of our bi-virus model \eqref{eq::sys_siws_dt_zbi1}, as it does not account for the indirect spreading via resources. In \cite{li2022competing}, it is shown that the mean-field bi-virus model also possesses some coexisting equilibria. Further simulation results suggest that the coexistence equilibrium is unstable. In contrast, we have provided rigorous proof of the existence of a coexisting equilibrium in an even more general network structure and modeling setting. Its stability remains a future research question. However, from the numerical study performed by us, such a coexisting equilibrium seems to always be unstable.
\end{remark}

\begin{theorem} [Irreducible monotone system]\label{thm:irsbi}
If Assumptions \ref{ass:xini2}-\ref{ass:irre2} hold, the bi-virus model \eqref{eq::sys_siws_dt_zbi1} is an $(\mathbf{0}_{n+m},\mathbf{1}_{n+m})-$type irreducible monotone system in $\mathbf{{D}}$. The model has a finite number of equilibria in $\mathbf{{\bar D}}$ for a generic choice of parameters. The model converges to a equilibrium for almost all initial conditions. That is to say, the set of the initial conditions, such that the model does not converge to an equilibrium, is a set of Lebesgue measure zero.
\end{theorem}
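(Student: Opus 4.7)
The plan is to mirror the proof of Theorem \ref{thm:irrsys} for the single-virus case, with the standard adjustment for competitive bi-virus systems: we work with the reversed ordering on the second virus so that the competitive interaction becomes cooperative under a transformed orthant cone. Concretely, equip $\mathbb{R}^{2(n+m)}$ with the order cone $K=\mathbb{R}_+^{n+m}\times(-\mathbb{R}_+^{n+m})$, so that $(z^1,z^2)\leq_K(\tilde z^1,\tilde z^2)$ iff $z^1\leq\tilde z^1$ and $z^2\geq\tilde z^2$. This is the $(\mathbf{0}_{n+m},\mathbf{1}_{n+m})$-type ordering referred to in the statement.

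First I compute the Jacobian of \eqref{eq::sys_siws_dt_zbi1} and write it in the block form
\begin{equation*}
J(z^1,z^2)=\begin{bmatrix} J_{11} & J_{12} \\ J_{21} & J_{22}\end{bmatrix},
\end{equation*}
where the diagonal blocks $J_{\nu\nu}$ have the same structure as the single-virus Jacobian in Theorem \ref{thm:irrsys} (namely $-D_f^\nu+(I-Z^1-Z^2)(B_f^\nu+J_{H^\nu}(z^\nu))$ minus a nonnegative diagonal term from the chain rule acting on the $(I-Z^1-Z^2)$ factor). Assumption \ref{ass:para2} together with $z\in\mathbf{D}$ makes each $J_{\nu\nu}$ an irreducible Metzler matrix. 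The off-diagonal blocks $J_{12}$ and $J_{21}$ pick up only the derivative of the $(I-Z^1-Z^2)$ prefactor and are therefore entry-wise nonpositive. After conjugating by $\mathrm{diag}(I_{n+m},-I_{n+m})$, which is precisely the change of basis associated to the cone $K$, the off-diagonal blocks flip sign and the full Jacobian becomes an irreducible Metzler matrix on $\mathbf{D}$. By Lemma 2.2 of \cite{ye2022convergence}, the system is therefore an irreducible monotone system with respect to $\leq_K$.

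Next, I address the number of equilibria. Setting the right-hand side of \eqref{eq::sys_siws_dt_zbi1} to zero yields a system of polynomial equations in $(z^1,z^2)$ on the compact semi-algebraic set $\bar{\mathbf{D}}$. Exhibiting any particular parameter choice that produces finitely many equilibria (for instance, taking off-diagonal and higher-order coefficients to be zero so that the equilibrium equations partially decouple into univariate polynomials in each $x_i^\nu$ and $w_j^\nu$, as in the single-virus argument of Theorem \ref{thm:irrsys}) verifies the hypothesis of Theorem B.1 and Corollary B.2 of \cite{ye2022convergence}. These results then ensure that for all parameters outside a certain algebraic set of Lebesgue measure zero, the equilibrium set is finite.

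Finally, to obtain convergence for almost all initial conditions, I invoke Lemma 2.3 of \cite{ye2022convergence} (Hirsch's generic convergence theorem for irreducible cooperative systems), applied to the $K$-monotone system on the forward-invariant set $\mathbf{D}$ from Lemma \ref{lem:2a}. This yields that, except on a set of initial conditions of Lebesgue measure zero, every trajectory converges to the (finite) set of equilibria, and hence to a single equilibrium. The main technical obstacle is the careful bookkeeping of the mixed signs in the Jacobian under the cone $K$ and verifying that irreducibility survives the sign flip: this comes down to observing that the block-diagonal irreducibility of $J_{11}$ and $J_{22}$ alone already makes the full $K$-transformed Jacobian irreducible, since permuting signs on entire blocks does not affect the support graph.
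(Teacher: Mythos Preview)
Your approach coincides with the paper's: block-decompose the Jacobian, note the diagonal blocks are irreducible Metzler and the off-diagonal blocks nonpositive, conjugate by $P=\mathrm{diag}(I_{n+m},-I_{n+m})$ to obtain an irreducible Metzler matrix, then apply Lemma~2.2, Theorem~B.1/Corollary~B.2, and Lemma~2.3 of \cite{ye2022convergence}. One correction to your last paragraph: the claim that ``the block-diagonal irreducibility of $J_{11}$ and $J_{22}$ alone already makes the full $K$-transformed Jacobian irreducible'' is false as stated---a block-diagonal matrix with two irreducible blocks is reducible. What actually yields irreducibility of $PJP$ is that the off-diagonal blocks $J_{12},J_{21}$ are nonzero on $\mathbf{D}$ (their first $n$ diagonal entries are strictly negative there, since $(B_f^\nu z^\nu+H^\nu(z^\nu))_i>0$ when $z^\nu\gg\mathbf{0}$); these entries supply the edges linking the two strongly connected components in the support graph. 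With that fix your argument and the paper's are the same.
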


\begin{proof}
We recall that the system \eqref{eq::sys_siws_dt_zbi1} can be formulated as
\begin{equation*}
\begin{split}
    \dot{z}^\nu&=-D_f^\nu z^\nu +(I-Z^1-Z^2)(B_f^\nu z^\nu+H^\nu(z^\nu)), \nu=1,2.
\end{split}
\end{equation*}
Thus, the Jacobian is 
\begin{equation*}
    J=\left[ 
        \begin{matrix}
	J^1 & A^1  \\
	A^2 & J^2   \\
	\end{matrix}
        \right];
\end{equation*}
where
\begin{equation*}
\begin{split}
    J^\nu&=-D^\nu_f+(I-{Z}^1-Z^2) 
    (B^\nu_f+[z^{\nu\top}(B^\nu_{f1}+B_{f1}^{\nu\top}),\cdots,z^{\nu\top}(B^\nu_{fn+m}+B_{fn+m}^{\nu\top})]^\top)\\
    &\qquad-\hat{I}([z^{\nu\top} B^\nu_{f1} z,\cdots,z^{\nu\top} B^\nu_{fn+m} z^\nu]^\top+B_f^\nu z^\nu)
\end{split}
\end{equation*}
and $A^\nu= -(B_f^\nu z^\nu+H(z^\nu))$. We notice that $J^\nu$ is an irreducible Metzler matrix and $A^\nu$ is a non-positive matrix as long as the system variables are in  $\mathbf{\Bar{D}}$. Let $P=\Dg((1,\cdots,1,-1,\cdots,-1)^\top)$, then $PJP$ is an irreducible Metzler matrix. Thus, the bi-virus model \eqref{eq::sys_siws_dt_zbi1} is an $(\mathbf{0}_{n+m},\mathbf{1}_{n+m})-$type irreducible monotone system in $\mathbf{{D}}$.

Furthermore, by Theorem B.1 and Corollary B.2 in \cite{ye2022convergence}, the system \eqref{eq::sys_siws_dt_zbi1} has a finite number of equilibria in the closure of $\mathbf{{D}}$ if the parameters do not lie on a certain algebraic set of measure zero because we can observe that there are some cases where the healthy state is the unique equilibrium and is globally stable and isolated. Alternatively, we can set parameters so that the equation system is partially decoupled, similar to the proof of Theorem \ref{thm:irrsys}. We further see it yields a finite number of soultions.

By Lemma 2.3 of \cite{ye2022convergence}, the solutions of \eqref{eq::sys_siws_dt_zbi1} converge to an equilibrium with some domain of attraction; By the same Lemma, the set of the initial conditions, such that the model does not converge to an equilibrium, is a set of measure zero.
\end{proof}

\section{A general SIWS process on a hypergraph}\label{sec:abstract}
Recall that the single-virus system \eqref{eq::sys_siws_hyp_z} and the bi-virus system \eqref{eq::sys_siws_dt_zbi1} studied so far are based on a polynomial interaction function. In reality, the interaction functions can be any general smooth function. Thus, in this section, we propose a generalized SIWS model, where interaction functions remain abstract. Note also that we only deal with up to 3-body interactions with polynomial functions in sections \ref{sec:model}-\ref{sec:bi}. The results in this section will be also valid for a SIWS model beyond 3-body interactions with either polynomial or non-polynomial but smooth functions.

Let us now consider the general SIWS model on a hypergraph of $n$ populations and $m$ resources. Component-wise, we propose a system given by
\begin{align}
    \dot{z}_i&=\dot{x}_i = -\delta_i x_i 
    + (1-x_i)  \left( f_i(z) 
 +  h_i(z) \right),  \qquad 1\leq i \leq n; \label{eq::sys_siws_hyp_xi}\\
 \dot{z}_i &=\dot{w}_{j} = -\delta^{w}_{j} w_j+ f_i(x) 
 +  h_i(x),\qquad \qquad n+1\leq i=j+n \leq n+m, \label{eq::sys_siws_hyp_wi} 
\end{align}
where the system variable is $z=(x,w)^\top\in\mathbb R^{n+m}$, $f_i(z), i=1,\cdots,n$ denotes the pairwise interaction $f_i(x,w)=\sum_{j=1}^{n+m} A_{ij} f_{ij}(z_j)$ on a conventional graph, $h_i(z)$ denotes the higher-order interaction $$h_i(z)=\sum_{j,k=1}^{n+m} A_{ijk} f_{ijk}(z_j,z_k)+ \sum_{j,k,l=1}^{n+m} A_{ijkl} f_{ijkl}(z_j,z_k,z_l)+\cdots$$ among those higher-order hyperedges on a hypergrpah. For the network structure, we assume that the resources are independent of each other and each resource is connected with at least one population node. For this reason, the interaction function in $w_j$ is reduced to $f_i(x) +  h_i(x)$, where $f_i,h_i: \mathbb{R}^n\rightarrow \mathbb{R}$, instead of $f_i(z) +  h_i(z)$, where $f_i,h_i: \mathbb{R}^{n+m}\rightarrow \mathbb{R}$. The interpretation of this model is that we take further heterogeneity into account and consider different kinds of interactions (not necessarily to be a polynomial).

We can further compactify \eqref{eq::sys_siws_hyp_xi}-\eqref{eq::sys_siws_hyp_wi} as 
\begin{align}
    \dot{x}&= -D x 
    + (I-\Dg(x))  \left( F_x(z) 
 +  H_x(z) \right), \label{eq::sys_siws_hyp_x}
 \\
     \dot{w}&= -D_w w+ F_w(z) 
 +  H_w(x), \label{eq::sys_siws_hyp_w}
 \end{align}
where 
\begin{equation}\label{eq:notation1}
    \begin{split}
        D &=\Dg((\delta_1,\cdots,\delta_n)^\top)\\
        D_w &=\Dg((\delta^w_1,\cdots,\delta^w_m)^\top)\\
        F_x&=(f_1(z),\cdots,f_n(z))^\top\\
        F_w&=(f_{n+1}(z),\cdots,f_{n+m}(z))^\top\\
        H_x&=(h_1(z),\cdots,h_n(z))^\top\\
        F_w&=(h_{n+1}(z),\cdots,h_{n+m}(z))^\top
    \end{split}
\end{equation}

We can rewrite our system \eqref{eq::sys_siws_hyp_x}-\eqref{eq::sys_siws_hyp_w} as
\begin{equation}\label{eq::sys_siws_hyp_z2}
    \dot{z}=-D_f z+(I-Z)F(z)+(I-Z)H(z);
\end{equation}
where 
\begin{equation}\label{eq:notation2}
  \begin{aligned}
    &    D_{f}:=\left[ 
        \begin{matrix}
	D & \mathbf{0}  \\
	\mathbf{0} & D^{w}  \\
	\end{matrix}
        \right],
     Z=\left[\begin{matrix}
	\Dg(x) & 0 \\
0& 0  \\
	\end{matrix}\right],
	F(z)=(F_x(z),F_w(z))^\top,\\
&	H(z)=(H_x(z),H_w(x))^\top.
  \end{aligned}
\end{equation}
Now, we present the analytical results of the general SIWS on a hypergraph \eqref{eq::sys_siws_hyp_z2}. Firstly, we need the following assumptions to make sure our model is well-defined.

\begin{assumption}\label{ass:xini}[Natural bound of probability]
The initial condition of infection level satisfies $x_i(0)\in [0,1]$ for all $i=1,\cdots,n$.
\end{assumption}

\begin{assumption}\label{ass:f0} [Healthy state as a fixed point]
The interaction process must satisfy
$F(0)=0$ and $H(0)=0$.
\end{assumption}

\begin{assumption}\label{ass:monotone}[Infected neighbours increases the chance of infection]
The interaction function satisfies $\frac{\partial f_i(z)}{\partial z_j}> 0 $ for all $i$ and $j$ such that there is a pairwise connection $A_{ij}>0$ (conventional edge) and $\frac{\partial h_i(z)}{\partial z_j} \geq 0 $ for all $i=1,
\cdots,n+m$ and $j=1,\cdots,n$.
\end{assumption}

\begin{assumption}\label{ass:ho}[Higher order interaction is in higher order]
The higher-oder interaction function satisfies $\frac{\partial h_i(\mathbf{0})}{\partial z_j} = 0$ for all $i$ and $j$.
\end{assumption}

\begin{assumption}\label{ass:irre3}[Strong connectivity]
The matrix $[A_{ij}]^{(n+m)\times (n+m)}$ is \\ irreducible.
\end{assumption}

In the following Lemma, we describe the general properties of the model \eqref{eq::sys_siws_hyp_z2}.
\begin{lemma}[General properties]\label{lem:gpa}
Consider the model \eqref{eq::sys_siws_hyp_z2}. Let $w_{j\max}=\frac{f_j(1)+h_j(1)}{\delta_j^w}$, $$\mathbf{D}=\left\{z=(x,w)^\top=(x_1,\ldots,x_n,w_1,\ldots,w_m)^\top\,|\, x_i\in(0,1),\,w_j\in(0,w_{j\max}) \right\},$$ and $\mathbf{\Bar{D}}$ be its closure. If Assumptions \ref{ass:f0}-\ref{ass:ho} hold, then
\begin{itemize}
    \item [i)] the set $\mathbf{\Bar{D}}$ is positively invariant;
    \item [ii)] the set $\mathbf{D}$ is also positively invariant;
    \item [iii)] the origin is always an equilibrium of the model and there is no other equilibrium on the boundary of $\mathbf{D}$.
\end{itemize}
\end{lemma}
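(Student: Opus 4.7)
The plan is to treat the three statements in turn: first verify the vector field on the boundary of the closed box (part i), then strengthen those boundary checks into differential inequalities that preserve strict inequalities (part ii), and finally classify which boundary points can be equilibria (part iii).

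First I would verify (i) by Nagumo's tangency condition, face by face on $\partial\bar{\mathbf{D}}$. On $\{x_i=0\}$, Assumptions \ref{ass:f0} and \ref{ass:monotone} guarantee $f_i(z),h_i(z)\geq 0$ throughout $\bar{\mathbf{D}}$ (they vanish at $0$ and are componentwise non-decreasing), hence $\dot x_i\geq 0$. On $\{x_i=1\}$ the infection term is annihilated by $(1-x_i)$ and $\dot x_i=-\delta_i\leq 0$. On $\{w_j=0\}$ the same monotonicity gives $\dot w_j\geq 0$. On $\{w_j=w_{j\max}\}$, monotonicity together with the definition of $w_{j\max}$ yields
\[
\dot w_j=-\delta_j^w w_{j\max}+f_{j+n}(x)+h_{j+n}(x)\leq -\delta_j^w w_{j\max}+f_{j+n}(\mathbf{1})+h_{j+n}(\mathbf{1})=0.
\]
Every face is therefore tangent or inward-pointing, establishing positive invariance of $\bar{\mathbf{D}}$.

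For (ii), I would derive coordinatewise exponential lower bounds on the distance to each face. From $(1-x_i)(f_i+h_i)\geq 0$ on $\bar{\mathbf{D}}$ we get $\dot x_i\geq -\delta_i x_i$, so $x_i(t)\geq x_i(0)e^{-\delta_i t}>0$. Writing $u_i:=1-x_i$, the dynamics give $\dot u_i\geq -C_i u_i$ for some uniform bound $C_i$ on $f_i+h_i$ over $\bar{\mathbf{D}}$, so $u_i(t)>0$ for all $t$. The symmetric estimates $\dot w_j\geq -\delta_j^w w_j$ and, with $v_j:=w_{j\max}-w_j$, $\dot v_j\geq -\delta_j^w v_j$, keep $w_j$ strictly inside $(0,w_{j\max})$. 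Hence $\mathbf{D}$ is also positively invariant.

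For (iii), the origin is an equilibrium by Assumption \ref{ass:f0}. To rule out any other boundary equilibrium $z^*\in\partial\mathbf{D}$, I would argue by cases. If $x_i^*=1$, then $\dot x_i^*=-\delta_i\neq 0$, contradicting equilibrium. If $w_j^*=w_{j\max}$, equilibrium forces $f_{j+n}(x^*)+h_{j+n}(x^*)=f_{j+n}(\mathbf{1})+h_{j+n}(\mathbf{1})$; by the strict monotonicity of $f$ in pairwise neighbors (Assumption \ref{ass:monotone}), this requires some $x_k^*=1$, which is impossible by the previous case. The remaining case is $x_i^*=0$ (or, symmetrically, $w_j^*=0$): equilibrium then gives $f_i(z^*)=h_i(z^*)=0$, and strict monotonicity of $f_i$ together with $f_{ij}(0)=0$ forces $z_k^*=0$ for every pairwise neighbor $k$ of $i$. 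Iterating this propagation and invoking the strong connectivity of $[A_{ij}]$ (Assumption \ref{ass:irre3}), one concludes $z^*=\mathbf{0}$, contradicting $z^*\in\partial\mathbf{D}\setminus\{0\}$.

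The main obstacle is the propagation step in (iii): strict monotonicity of $f$ is only asserted along pairwise edges of $A$, and the conclusion that a single zero coordinate forces the whole state to vanish requires induction on directed paths in the graph of $A$ together with its strong connectivity. I note that, while the lemma as stated quotes only Assumptions \ref{ass:f0}--\ref{ass:ho}, this last step mirrors the use of Assumption \ref{ass:irre3} in the analogous Lemma \ref{lem:GP}, which I would invoke here as well.
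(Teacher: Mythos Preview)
Your proposal is correct and follows essentially the same strategy as the paper: the authors also check the sign of the vector field face by face ($x_i=0$, $x_i=1$, $w_j=0$, $w_j=w_{j\max}$) and conclude all three items from these boundary computations. Your treatment is in fact more complete than theirs: the paper does not spell out the differential-inequality bounds you use for (ii), nor the propagation argument along pairwise edges for (iii), but simply asserts that the boundary checks ``ensure that statements i) ii) and iii) must be true.'' Your observation that the propagation step in (iii) implicitly requires the irreducibility Assumption~\ref{ass:irre3}, even though the lemma only cites Assumptions~\ref{ass:f0}--\ref{ass:ho}, is well taken; the paper's proof tacitly uses it as well when claiming that a single vanishing coordinate at equilibrium forces $z=\mathbf 0$.
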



\begin{proof}
According to Assumptions \ref{ass:f0}-\ref{ass:ho}, we have the following arguments. If $x_i=0$, we have $\dot{x}_i=  f_i(z) +  h_i(z)  \geq 0$. If $w_j=0$, $\dot{w}_j=f_{j+n}(x) +  h_{j+n}(x)\geq 0$. On the boundary (i.e., at least one $z_i=0$), we have that $x=\mathbf{0}$ and $w=\mathbf{0}$ if and only if $\dot{x}= \mathbf{0}$ and $\dot{w}= \mathbf{0}$.

If $x_i=1$, we get $\dot{x}_i=-\delta_i<0$. If $w_j=w_{j\max}$, we have $\dot{w}_j=-f_{n+j}(\mathbf{1})-h_{n+j}(\mathbf{1})+ f_{n+j}(x)+h_{n+j}(x)\leq 0$, since the pairwise interaction is increasing in $x$ and the higher-order interaction function is non-decreasing in $x$. If $w_j=w_{j\max}$ and $x\neq \mathbf{1}$, $\dot{w}_j< 0$. 

All these ensure that statements i) ii) and iii) must be true.
\end{proof}

\begin{remark}
These general properties show that the system is well-defined. The set $\mathbf{\Bar{D}}$ is the system domain of the model \eqref{eq::sys_siws_hyp_z2}.
\end{remark}

It is obvious that the system has an all-zero equilibrium, which is the so-called healthy state. Let us define the notation $J_{F}$, which denotes the Jacobian of the corresponding function $F$.

\begin{theorem}[Healthy-state behaviour] \label{thm:healthy}
Consider the model \eqref{eq::sys_siws_hyp_z2}, if Assumptions \ref{ass:xini}-\ref{ass:irre3} hold, we have the following statements.
The healthy state (zero equilibrium) 
\begin{itemize}
    \item [1)] is locally stable if $\rho(D^{-1}_f J_F(0))<1$;
    \item [2)] is globally exponentially stable if $D^{-1}_f F(z)+D^{-1}_f H(z)\leq dz$ for some $0< d < 1$;
    \item [3)] is unstable if $\rho(D^{-1}_f J_F(0))>1$.
\end{itemize}
\end{theorem}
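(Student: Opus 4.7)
The proof naturally splits into the linear analysis at the origin (parts 1 and 3) and a comparison argument for global exponential stability (part 2). The two pieces are essentially independent.

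For parts 1 and 3, I would compute the Jacobian of \eqref{eq::sys_siws_hyp_z2} at $z=0$. Because $Z\vert_{z=0}=\mathbf{0}$ and Assumption \ref{ass:ho} gives $J_H(0)=\mathbf{0}$, the Jacobian collapses to $J(0)=-D_f+J_F(0)$. Assumption \ref{ass:monotone} forces the off-diagonal entries of $J_F(0)$ to be non-negative (with strict positivity exactly where $A_{ij}>0$), so $J(0)$ is a Metzler matrix, and irreducibility of the adjacency pattern (Assumption \ref{ass:irre3}) transfers to $J(0)$. For an irreducible Metzler matrix one has the standard spectral equivalence $s\bigl(-D_f+J_F(0)\bigr)<0 \iff \rho\bigl(D_f^{-1}J_F(0)\bigr)<1$, and likewise for the strict reverse inequality. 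Local asymptotic stability and instability then follow at once from the Hartman--Grobman theorem.

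For part 2, the plan is a componentwise comparison. On $\bar{\mathbf{D}}$ the matrix $I-Z$ is diagonal with entries in $[0,1]$, and both $F(z),H(z)\geq \mathbf{0}$ because $F(\mathbf{0})=H(\mathbf{0})=\mathbf{0}$ by Assumption \ref{ass:f0} while each is non-decreasing in $z\geq \mathbf{0}$ by Assumption \ref{ass:monotone}. Hence $(I-Z)\bigl(F(z)+H(z)\bigr)\leq F(z)+H(z)$. Multiplying the hypothesis by the diagonal positive matrix $D_f$ gives $F(z)+H(z)\leq d\,D_f z$, so
\[
\dot z \;\leq\; -D_f z + F(z)+H(z) \;\leq\; (d-1)\,D_f z
\]
componentwise. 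Since $D_f$ is diagonal, this decouples into scalar inequalities $\dot z_i\leq (d-1)(D_f)_{ii}z_i$; combined with $z_i(t)\geq 0$ from Lemma \ref{lem:gpa}, the scalar comparison lemma yields $z_i(t)\leq z_i(0)\,e^{(d-1)(D_f)_{ii}t}$, which is global exponential convergence to the origin because $d<1$.

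The only subtlety I anticipate is the clean reduction of $J(0)$: one must invoke Assumption \ref{ass:ho} to kill the derivative of the higher-order term $H$ at the origin, and then leverage Assumptions \ref{ass:monotone} and \ref{ass:irre3} to conclude that $J_F(0)$ has the right sign pattern and irreducibility, so that the classical spectral characterization of Hurwitz-ness for irreducible Metzler matrices applies directly. Once that is set, the remaining arguments are routine, and in particular the comparison step in part 2 works because $(I-Z)$ can be harmlessly dropped in the upper bound, using only non-negativity of $F+H$ and the range $x_i\in[0,1]$.
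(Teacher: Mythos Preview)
Your proposal is correct and follows essentially the same approach as the paper: linearize at the origin to obtain $J(0)=-D_f+J_F(0)$ (using Assumption~\ref{ass:ho} to kill $J_H(0)$), invoke the standard equivalence $s(-D_f+J_F(0))<0\iff\rho(D_f^{-1}J_F(0))<1$ for irreducible Metzler matrices to settle parts 1) and 3), and for part 2) drop the factor $(I-Z)$ and use the hypothesis to bound $\dot z\leq (d-1)D_f z$ componentwise. The only cosmetic difference is that the paper passes through the vector ``Lyapunov function'' $V=D_f^{-1}z$ before applying the comparison lemma, whereas you work directly with $z$; the two are equivalent, and your version is arguably cleaner.
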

 \begin{proof}
 The Jacobian matrix of the model at the origin reads 
 \begin{equation*}
     J(0,0) =-D_f+J_F(0).
 \end{equation*}
 Since the matrix $J_F(0)$ is irreducible nonnegative from Assumption \ref{ass:irre3} and $-D_f$ is a negative diagonal matrix, $\rho(D^{-1}_f J_F(0))>1$ is equivalent to 
 $s(-D_f+ B_f)>0$ and $\rho(D^{-1}_f B_f)<1$ is equivalent to 
 $s(-D_f+ B_f)<0$ by Proposition 1 of \cite{liu2019analysis}. This ensures that the origin is locally stable if $\rho(D^{-1}_f J_F(0))<1$ and it is unstable if $\rho(D^{-1}_f J_F(0))>1$.
 
Next, we show global stability. Define the Lyapunov function $V=D_f^{-1}z$. We observe that $V\geq \mathbf{0}$ and $V=\mathbf{0}$ if and only if $z=\mathbf{0}$.
 
 Then, for some $0<d<1$, we get
 \begin{equation}\label{eq:v2}
     \begin{split}
         \dot{V}&=D_f^{-1}\left(-D_f z+(I-Z)F(z)+(I-Z)H(z)\right)\\
         &\leq - z + D_f^{-1}(F(z)+H(z))\\
         &\leq - z + D_f^{-1}dz= (-1+d)  z\\
         & = (-1+d)   D_f D_f^{-1} z
         = (-1+d) \min(\delta, \delta^w)V,
     \end{split}
 \end{equation}
The rest of the proof is similar to the proof of Theorem \ref{thm:healthysig}.
 \end{proof}
 
The condition $D^{-1}_f F(z)+D^{-1}_f H(z)\leq dz$ generally requires that the infection interaction process doesn't increase sharply and is upper bounded by a line with the slope smaller than $1$. Next, we deal with the existence of an endemic equilibrium.

\begin{theorem} [Existence of an endemic equilibrium] \label{thm:end1}
Consider the system \eqref{eq::sys_siws_hyp_z2}, if Assumptions \ref{ass:xini}-\ref{ass:irre3} hold, there exists an endemic equilibrium $z^*\gg \mathbf{0}$ such that $x_i^*\geq \frac{k-1}{k}$ and $w^*_i\geq \frac{(k-1)w_{i\max}}{k}$ for any $i$ such that $B_{fi}\neq \mathbf{0}$ if $\theta=\min_{i} (D^{-1}_f F(\frac{k-1}{k}\tilde{u})+ D^{-1}_f H(\frac{k-1}{k}\tilde{u}))_i \geq \frac{(k-1)^2}{k}\Tilde{u}$, where 
\begin{equation*}
       \tilde{u}_i = \begin{cases}
                        1,  & H(z)_i\not\equiv 0,\qquad i=1,\ldots,n \\
                        w_{i\max}, & H(z)_i\not\equiv 0,\qquad i=n+1,\ldots,n+m\\
                        0,  & H(z)_i \equiv 0
    \end{cases}
\end{equation*}
\end{theorem}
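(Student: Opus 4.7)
The plan is to adapt the fixed-point argument from the proof of Theorem \ref{thm:exend} to the abstract setting, replacing the explicit polynomial estimates there by the componentwise monotonicity encoded in Assumption \ref{ass:monotone}. I would recast the equilibrium problem as a fixed-point problem for a monotone continuous map on a suitably chosen box, verify that this box is sent into itself, and then invoke Brouwer's fixed-point theorem \cite{shapiro2016fixed}.

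Concretely, let $p(z) := D_f^{-1} F(z) + D_f^{-1} H(z)$ and define
$$\hat T_i(z) = \frac{p_i(z)}{1+p_i(z)} \text{ for } 1 \leq i \leq n, \qquad \hat T_i(z) = p_i(z) \text{ for } n+1 \leq i \leq n+m.$$
Setting $\dot x_i = 0$ in \eqref{eq::sys_siws_hyp_xi} and $\dot w_j = 0$ in \eqref{eq::sys_siws_hyp_wi} and solving for $x_i$ and $w_j$ shows that $z^*$ is an equilibrium of \eqref{eq::sys_siws_hyp_z2} if and only if $\hat T(z^*) = z^*$. Assumption \ref{ass:monotone} implies that $F$ and $H$ are componentwise non-decreasing on $\mathbf{\bar D}$, so $p$ is non-decreasing; together with the monotonicity of $s \mapsto s/(1+s)$ on $[0,\infty)$, this makes $\hat T$ a monotone continuous self-map of $\mathbf{\bar D}$.

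Let $X \subset \mathbf{\bar D}$ denote the closed box defined by $\tfrac{k-1}{k}\tilde{u}_i \leq z_i \leq u_i$ for each $i$, with $u = (1,\ldots,1,w_{1\max},\ldots,w_{m\max})^\top$. I would check $\hat T(X) \subseteq X$ in two steps. The upper bound is automatic: for $i \leq n$, $\hat T_i = p_i/(1+p_i) < 1$; for $i = j+n$, monotonicity of $f_{j+n}$ and $h_{j+n}$ together with $x \leq \mathbf{1}$ gives $\hat T_i(z) \leq (f_{j+n}(\mathbf{1}) + h_{j+n}(\mathbf{1}))/\delta_j^w = w_{j\max}$. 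For the lower bound, monotonicity of $p$ and the hypothesis on $\theta$ yield $p_i(z) \geq p_i(\tfrac{k-1}{k}\tilde{u}) \geq \frac{(k-1)^2}{k}\tilde{u}_i$ componentwise; feeding this into the two branches of $\hat T$ and chasing the constants exactly as in the proof of Theorem \ref{thm:exend} yields $\hat T_i(z) \geq \tfrac{k-1}{k}$ for indices $i \leq n$ with $\tilde{u}_i = 1$ and $\hat T_{j+n}(z) \geq \tfrac{(k-1)w_{j\max}}{k}$ for indices with $\tilde{u}_{j+n} = w_{j\max}$, while the constraint is trivial for $i$ with $\tilde{u}_i = 0$. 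Since $X$ is a non-empty compact convex set and $\hat T$ is continuous, Brouwer's theorem produces a fixed point $z^* \in X$. Strict positivity $z^* \gg \mathbf{0}$ then follows since any fixed point with a zero component would be a boundary equilibrium of $\mathbf{D}$, and Lemma \ref{lem:gpa}(iii) rules this out for every non-origin equilibrium; the origin is excluded because $z^*$ has at least one strictly positive component inherited from the lower bound on $X$.

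The main obstacle I anticipate is the lower-bound verification. In Theorem \ref{thm:exend} the analogous step relied on the explicit bilinear estimate $z^\top B_{fi} z \geq \left(\tfrac{k-1}{k}\right)^2 \tilde{u}^\top B_{fi} \tilde{u}$, whereas here the only available ingredient is the abstract monotonicity of $F$ and $H$. One must therefore interpret the hypothesis $\theta \geq \frac{(k-1)^2}{k}\tilde{u}$ componentwise and split into the two branches (human indices where the requirement $p_i/(1+p_i) \geq (k-1)/k$ translates to a lower bound on $p_i$, and resource indices where the bound $p_{j+n} \geq \tfrac{(k-1)w_{j\max}}{k}$ is read off directly), handling the subcase $\tilde{u}_i = 0$ separately. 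The remaining ingredients of the argument (continuity of $\hat T$, monotonicity, compactness and convexity of $X$) are essentially inherited from the polynomial case without modification.
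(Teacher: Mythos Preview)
Your proposal is correct and follows essentially the same route as the paper: both define the map $\hat T$ via $p(z)=D_f^{-1}F(z)+D_f^{-1}H(z)$, use the componentwise monotonicity of $F$ and $H$ to show that $\hat T$ sends the box $\{\tfrac{k-1}{k}\tilde u\le z\le u\}$ into itself, and conclude via Brouwer's fixed-point theorem. Your treatment is in fact more careful than the paper's (which contains a few typographical slips in the definition of $Y$ and in the constant bookkeeping), and your explicit appeal to Lemma~\ref{lem:gpa}(iii) to upgrade the fixed point to $z^*\gg\mathbf 0$ is a detail the paper omits.
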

 \begin{proof}
 We use the same map \eqref{eq:map} and it follows the same monotone property. It is easy to check that the fixed point of the map \eqref{eq:map} is also the equilibrium of \eqref{eq::sys_siws_hyp_z2}.
Now, we define the set $Y=\{z=(x,w)^\top\in \mathbb{R}^{n+m}|\frac{1}{2} \tilde u\leq z \leq u\}$.

We have that
\begin{equation*}
\begin{split}
    \hat{T}(z)&=T(D^{-1}_f F(z) +D^{-1}_f H(z))\\
    &\geq T\left(D^{-1}_f F(\frac{1}{2} \tilde u) +D^{-1}_f H(\frac{1}{2}\tilde u)\right)\\
    &\geq T\left(\theta \right)\geq \frac{k-1}{k} \tilde u
\end{split}
\end{equation*}

Notice that $\frac{p_i}{1+p_i}\leq 1$ for all $i\leq n$ and $p_{j+n}\leq w_{j\max}$ for all $j\leq m$. Since $\hat{T}(z)$ is a continuous map and it maps the set $Y$ to itself. The Brouwer Fixed-Point Theorem \cite{shapiro2016fixed} ensures that there exists at least one $z^*\gg \mathbf{0}$ of the given form.
 \end{proof}

 \begin{theorem}[Further results on the existence of an endemic equilibrium]\label{thm:end2ab}
Consider the model \eqref{eq::sys_siws_hyp_z2}, if Assumptions \ref{ass:xini}-\ref{ass:irre3} hold and there exists a $c\neq0$ such that $c\ll u$ and $D^{-1}_f F(c)\geq c$, then there exists an endemic equilibrium $z^*\gg \mathbf{0}$.
\end{theorem}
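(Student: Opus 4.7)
The plan is to mimic the Brouwer fixed-point argument developed in Theorems \ref{thm:exend} and \ref{thm:end1}, applied to the map $\hat T$ defined in \eqref{eq:map}, whose fixed points in $\mathbf{\bar D}$ are precisely the equilibria of \eqref{eq::sys_siws_hyp_z2}. The target is to produce a fixed point of $\hat T$ inside a compact convex set bounded below by $c$, so that the resulting equilibrium automatically lies in the strictly positive orthant. By Assumption \ref{ass:monotone} both $F$ and $H$ are componentwise non-decreasing, and since each coordinate of $T$ is monotone, $\hat T$ is continuous and monotone non-decreasing on $\mathbf{\bar D}$; this will allow inequalities to propagate cleanly through the map.

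First I would take $Y = \{z : c \leq z \leq u\}$, where $u = [1,\ldots,1,w_{1\max},\ldots,w_{m\max}]^\top$ is the upper vector used throughout Section \ref{sec:abstract}. Because $c \neq 0$ and $c \ll u$, the set $Y$ is a nonempty compact convex subset of $\mathbf{\bar D}$ that excludes the origin. The upper bound $\hat T(u) \leq u$ is routine: for $i \leq n$, $T_i(p) = p_i/(1+p_i) \leq 1$; and for $i \geq n+1$, $T_i(p) = (D_f^{-1}(F(z) + H(z)))_i$, which by monotonicity of $F,H$ is bounded above by $(D_f^{-1}(F(u) + H(u)))_i = w_{i-n,\max}$, matching the definition of $w_{j\max}$ given in Lemma \ref{lem:gpa}.

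The crux of the proof is the lower bound $\hat T(c) \geq c$; once this holds, monotonicity of $\hat T$ extends it immediately to $\hat T(Y) \subseteq Y$. The hypothesis $D_f^{-1} F(c) \geq c$ together with $H(c) \geq \mathbf 0$ yields $q := D_f^{-1}(F(c) + H(c)) \geq c$, so for the resource indices $i \geq n+1$ we get $\hat T(c)_i = q_i \geq c_i$ for free. The hard part will be the infection indices $i \leq n$, where the saturation $T_i(q) = q_i/(1+q_i)$ introduces a multiplicative penalty and one has to argue $q_i \geq c_i/(1-c_i)$ rather than just $q_i \geq c_i$; this is exactly where the strict inequality $c \ll u$ (so $c_i < 1$) is essential, and it is the only genuinely delicate point of the argument.

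Once $\hat T(Y) \subseteq Y$ is secured, Brouwer's fixed-point theorem \cite{shapiro2016fixed} applied to the continuous self-map $\hat T : Y \to Y$ produces a fixed point $z^* \in Y$. Because $Y$ excludes the origin, $z^* \neq \mathbf 0$; and the irreducibility Assumption \ref{ass:irre3} together with the equilibrium identity $\dot z = 0$ (the same propagation argument used implicitly in the existence proofs of Theorems \ref{thm:exend} and \ref{thm:end1}) then forces every coordinate of $z^*$ to be strictly positive, giving $z^* \gg \mathbf 0$ as the desired endemic equilibrium.
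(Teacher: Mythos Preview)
Your approach is exactly the paper's: the same map $\hat T$ from \eqref{eq:map}, the same order interval $\tilde Y=\{c\le z\le u\}$, the same monotonicity chain, and the same appeal to Brouwer. The paper's entire argument is the line $\hat T(z)\ge T(D_f^{-1}F(z))\ge T(D_f^{-1}F(c))\ge c$, and your outline is a spelled-out version of that chain, supplemented by an irreducibility argument at the end (via Lemma~\ref{lem:gpa}(iii)) to upgrade $z^*\neq \mathbf 0$ to $z^*\gg\mathbf 0$; that last step is a legitimate addition the paper omits.

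The ``delicate point'' you flag is, however, a genuine gap that you do not close---and neither does the paper. For infection indices $i\le n$ you need $q_i/(1+q_i)\ge c_i$, i.e.\ $q_i\ge c_i/(1-c_i)$, whereas the hypothesis $D_f^{-1}F(c)\ge c$ together with $H(c)\ge \mathbf 0$ only gives $q_i\ge c_i$, which is strictly weaker whenever $c_i>0$. The condition $c\ll u$ merely ensures $c_i<1$ so that $c_i/(1-c_i)$ is finite; it does not supply the missing inequality. In the concrete $R_0>1$ theorem this works because there $D_f^{-1}F(c)=\rho c$ with $\rho>1$, and the side constraint $c\le 1-1/\rho$ is exactly what makes $T(\rho c)\ge c$ hold. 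No analogous slack is available from the bare hypothesis $D_f^{-1}F(c)\ge c$, so the lower-bound step $\hat T(c)\ge c$ does not follow as written---in your proposal or in the paper.
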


\begin{proof}
Define the set $\Tilde{Y}=\{z=(x,w)^{\top}|c_i\leq x_i\leq 1, c_{j+n}\leq w_j \leq w_{j\max} \}$, where $c=(c_1,\cdots,c_{m+n})^{\top}$.

We use again the map \eqref{eq:map}. For any $z \in \Tilde{Y}$, we have $\hat{T}(z)=T(D^{-1}_f F(z) +D^{-1}_f H(z))\geq T(D^{-1}_f F(z)) \geq T(D^{-1}_f F(c)) \geq c$. We use again the Brouwer Fixed-Point Theorem \cite{shapiro2016fixed} and confirm that there exists an endemic equilibrium $z^*\gg \mathbf{0}$ if $D^{-1}_f F(c)\geq c$.
\end{proof}

\begin{remark}
The condition $D^{-1}_f F(c)\geq c$ is not strict. We consider $c$ now very close to the all-zero vector. We expand the function $F$ around $0$ and get $D_f^{-1} (F(0)+ J_F(0)c) \geq c$. Since $F(0)=0$, we finally get $ J_F(0) c\geq D_f c$, which implies the influence of infection rates around zero is stronger than the influence of healing rates.
\end{remark}

Next, we can further show that the general system \eqref{eq::sys_siws_hyp_z2} is also an irreducible monotone system.

\begin{theorem} [Irreducible monotone system]
If Assumption \ref{ass:xini}-\ref{ass:irre3} hold,\\ the model \eqref{eq::sys_siws_hyp_z2} is an irreducible monotone system in $\mathbf{{D}}$. Furthermore, if the model \eqref{eq::sys_siws_hyp_z2} has a finite number of equilibria in the closure of $\mathbf{\Bar{D}}$ for a generic choice of parameters, then the model converges to an equilibrium for almost all initial conditions. That is, the set of initial conditions, such that the model does not converge to an equilibrium, is a set of Lebesgue measure zero.
\end{theorem}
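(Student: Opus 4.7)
\begin{pf}[Proof plan]
The plan is to mirror the structure of the proof of Theorem \ref{thm:irrsys}, replacing the explicit polynomial interactions by the abstract ones and invoking Assumptions \ref{ass:monotone}--\ref{ass:irre3} wherever the polynomial structure was previously used. First I would differentiate \eqref{eq::sys_siws_hyp_z2} to obtain the Jacobian
\begin{equation*}
J(z)= -D_f + (I-Z)\bigl(J_F(z)+J_H(z)\bigr) - \hat I\,\Dg\!\bigl(F(z)+H(z)\bigr),
\end{equation*}
with $\hat I=\left[\begin{smallmatrix} I_n & \mathbf 0\\ \mathbf 0 & \mathbf 0\end{smallmatrix}\right]$, since $Z$ only carries the diagonal block $\Dg(x)$. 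The final term contributes only to diagonal entries, and $D_f$ is diagonal as well, so it suffices to analyse the off-diagonal structure of $(I-Z)(J_F(z)+J_H(z))$.

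Next I would verify that $J(z)$ is an irreducible Metzler matrix for every $z\in\mathbf{D}$. The Metzler part is immediate: for $z\in\mathbf{D}$ we have $x_i<1$ so the diagonal multiplier $(1-x_i)>0$, and Assumption \ref{ass:monotone} gives $\partial f_i/\partial z_j>0$ whenever $A_{ij}>0$ together with $\partial h_i/\partial z_j\geq 0$ for all $i,j$, which guarantees that all off-diagonal entries of $J(z)$ are non-negative. For irreducibility, I would argue that the sparsity pattern of $J_F(z)$ dominates: the $(i,j)$-entry of $(I-Z)J_F(z)$ is strictly positive exactly when $A_{ij}>0$, so the directed graph associated with $J(z)$ contains the graph of $[A_{ij}]^{(n+m)\times(n+m)}$, which is strongly connected by Assumption \ref{ass:irre3}. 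Hence $J(z)$ is irreducible on $\mathbf{D}$.

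With the Jacobian identified as an irreducible Metzler matrix along trajectories, Lemma 2.2 of \cite{ye2022convergence} yields at once that \eqref{eq::sys_siws_hyp_z2} is an irreducible monotone system on $\mathbf{D}$. The remaining assertion concerning convergence for almost every initial condition is then a direct appeal to Lemma 2.3 of \cite{ye2022convergence}: once one assumes (as the statement does) that the set of equilibria in $\mathbf{\bar D}$ is finite for a generic choice of parameters, the lemma ensures that the set of initial data failing to converge to an equilibrium has Lebesgue measure zero.

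The main obstacle I anticipate is the irreducibility step: for the polynomial case this was transparent because $B_f+J_H(z)\gg \mathbf 0$, whereas here one only has the weaker inequalities of Assumption \ref{ass:monotone} on $f_i$ and $h_i$. The key observation that saves the argument is that irreducibility is a combinatorial property of the sparsity pattern, and the pattern coming from the pairwise part $J_F(z)$ already contains the strongly connected support of $[A_{ij}]$, so the possibly degenerate higher-order contributions of $J_H(z)$ cannot destroy irreducibility. A brief remark should note that, unlike in Theorem \ref{thm:irrsys}, the finiteness of the equilibrium set is no longer automatic for smooth non-polynomial $F,H$; hence it is stated as a hypothesis rather than derived from the polynomial algebraic-set argument used before.
\end{pf}
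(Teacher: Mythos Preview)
Your proposal is correct and follows essentially the same approach as the paper: compute the Jacobian, observe that its off-diagonal part is $(I-Z)(J_F(z)+J_H(z))$ while the remaining contributions are diagonal, conclude that $J(z)$ is irreducible Metzler on $\mathbf D$, and then invoke Lemmas~2.2 and~2.3 of \cite{ye2022convergence}. In fact your treatment of irreducibility---arguing via the sparsity pattern of $J_F(z)$ and Assumption~\ref{ass:irre3}---is more explicit than the paper's one-line ``$J_F(z)+J_H(z)>\mathbf 0$'', and your diagonal term $-\hat I\,\Dg(F(z)+H(z))$ is the correct one (the paper writes $-Z\,\Dg(F(z)+H(z))$, which is a minor slip that does not affect the argument).
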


\begin{proof}
Firstly, we calculate the Jacobian
\begin{align*}
    J=-D_f+(I-Z)(J_F(z)+J_H(z))-Z\Dg(F(z)+H(z));
 \end{align*}
Note that $J_F(z)+J_H(z)> \mathbf{0}$. Furthermore, $D_f$ is diagonal. Thus, the Jacobian is an irreducible Metzler matrix at an arbitrary point in $\mathbf{{D}}$ along the trajectories of \eqref{eq::sys_siws_hyp_z2}. This confirms that the system \eqref{eq::sys_siws_hyp_z2} is an irreducible monotone system by Lemma 2.2 of \cite{ye2022convergence}.

By assumption, the equilibrium set is finite. Thus, by Lemma 2.3 of \cite{ye2022convergence}, the model converges to an equilibrium with some domain of attraction. The set of the initial conditions, such that the model does not converge to an equilibrium, is a set of measure zero.
\end{proof}

\begin{remark}
    If Assumption \ref{ass:xini}-\ref{ass:ho} hold, and furthermore the pairwise interaction and the higher order interaction is a polynomial function, the model \eqref{eq::sys_siws_hyp_z2} under this setting has a finite number of equilibria in the closure of $\mathbf{\Bar{D}}$ for a generic choice of parameter, then the model converges to an equilibrium for almost all initial conditions. That is, the set of the initial conditions, such that the model does not converge to an equilibrium, is a set of Lebesgue measure zero.
\end{remark}

Furthermore, if $\rho(D^{-1}_f B_f)>1$, the system with pairwise interaction is similar to the conventional system \eqref{eq::sys_siws_hyp_z}  and without the higher-order interaction, the system is reduced to the conventional continuous-time SIWS model on a graph \cite{pare2022multi}. Now, the system on a graph reads as
\begin{equation} \label{eq::siws_graph}
    \dot{z}=-D_f z+(I-{Z})B_f z.
\end{equation}

\begin{prop}
Consider the system \eqref{eq::siws_graph} with $D_f$ positive diagonal and $B_f$ irreducible non-negative. If $\rho(D^{-1}_f B_f)>1$, then the system has a unique endemic equilibrium (strictly positive equilibrium) and it is hyperbolic and globally stable in $\Bar{\mathbf{D}}\setminus \{\mathbf{0}\}$.
\end{prop}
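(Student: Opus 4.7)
The plan is to establish existence, uniqueness, hyperbolicity/local stability, and global attractivity, in that order.

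\emph{Existence and uniqueness.} System \eqref{eq::siws_graph} is the specialization of \eqref{eq::sys_siws_hyp_z} obtained by setting every $B_{fi}=\mathbf{0}$, so $H(z)\equiv\mathbf{0}$, and existence of some $z^*\gg\mathbf{0}$ follows directly from the earlier endemic-equilibrium theorem for $R_0>1$. For uniqueness I would reuse the fixed-point map $\hat T$ from \eqref{eq:map}: in the absence of $H$, strictly positive equilibria are exactly the strictly positive fixed points of $\hat T(z)_i=p_i(z)/(1+p_i(z))$ for $i\le n$ and $\hat T(z)_i=p_i(z)$ for $i>n$, where $p(z)=D_f^{-1}B_f z$. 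This map is monotone and strictly subhomogeneous in the $x$-block: one checks $\hat T_i(\alpha z)>\alpha\hat T_i(z)$ for every $i\le n$ whenever $\alpha\in(0,1)$ and $z\gg\mathbf{0}$. A standard Krause-type contradiction argument (given two positive fixed points $z^*,\hat z^*$, set $\alpha=\sup\{c>0:\hat z^*\ge c z^*\}$, assume WLOG $\alpha\le 1$, apply strict subhomogeneity in the $x$-block, and then propagate the resulting strict inequality to the $w$-block via irreducibility of $B_f$) then contradicts the supremum, forcing $z^*=\hat z^*$.

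\emph{Hyperbolicity and local stability.} The Jacobian reads $J(z)=-D_f+(I-Z)B_f-\hat I\,\Dg(B_f z)$, with $\hat I$ the diagonal matrix having ones in the first $n$ entries and zeros afterwards; this is irreducible Metzler throughout $\mathbf{D}$. Using the equilibrium identity $-D_f z^*+(I-Z^*)B_f z^*=\mathbf{0}$, one obtains
\[
    J(z^*)\,z^* \;=\; -\hat I\,\Dg(B_f z^*)\,z^*,
\]
whose $i$-th entry equals $-\delta_i(x_i^*)^2/(1-x_i^*)<0$ for $i\le n$ and $0$ for $i>n$. Letting $u^\top\gg\mathbf{0}^\top$ be the left Perron eigenvector of $J(z^*)$ with real eigenvalue $s(J(z^*))$, the relation $u^\top J(z^*) z^* = s(J(z^*))\,u^\top z^*$ forces the strictly negative left-hand side to equal $s(J(z^*))$ times the strictly positive scalar $u^\top z^*$, so $s(J(z^*))<0$. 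Hence $J(z^*)$ is Hurwitz, delivering both hyperbolicity and local exponential stability of $z^*$.

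\emph{Global attractivity, and the main obstacle.} Theorem \ref{thm:irrsys} applied with $H\equiv\mathbf{0}$ says that \eqref{eq::siws_graph} is a strongly monotone irreducible system on $\mathbf{\Bar{D}}$, so any trajectory from $\mathbf{\Bar{D}}\setminus\{\mathbf{0}\}$ enters the interior $\mathbf{D}$ immediately; after possibly shifting time, assume $z(0)\gg\mathbf{0}$. Let $v\gg\mathbf{0}$ be the right Perron eigenvector of $D_f^{-1}B_f$ associated with $\rho>1$. For $\epsilon>0$ small enough, the constant vector $\underline z_0=\epsilon v$ satisfies $\underline z_0\le z(0)$ and, using $B_f v=\rho D_f v$, gives $\dot z|_{\underline z_0}=\epsilon(\rho-1)D_f v-\epsilon^2\Dg(v_x,\mathbf{0})B_f v\gg\mathbf{0}$, so the trajectory from $\underline z_0$ is strictly increasing, bounded, hence convergent to a strictly positive equilibrium which by uniqueness is $z^*$. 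Dually, the trajectory from $\overline z_0=u=(\mathbf{1}_n,w_{1\max},\ldots,w_{m\max})^\top$ is monotonically non-increasing and converges to an equilibrium $\ge z^*$, so again $z^*$. Monotone sandwiching $\underline z(t)\le z(t)\le\overline z(t)$ then yields $z(t)\to z^*$. The delicate step is this last one: the lower sub-solution must be constructed so that its velocity is \emph{strictly} positive (which is where $\rho>1$ and strict positivity of $v$, hence irreducibility of $B_f$, enter), and one must verify that $\overline z(t)$ remains inside $\mathbf{\Bar{D}}$ — both are classical monotone-system maneuvers but require care with the interplay of the $x$- and $w$-blocks.
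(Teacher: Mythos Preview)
Your argument is correct and substantially more self-contained than the paper's. The paper dispatches existence, uniqueness, and global stability in one sentence by citing \cite{pare2022multi}, and spends its own effort only on hyperbolicity: it writes the Jacobian as $J=-D_f+B_f-Z^*B_f-B_fZ^*$, computes $Jz^*=-B_fZ^*z^*<-dz^*$, and invokes the Metzler--Hurwitz criterion of \cite{FB-LNS}. Your uniqueness proof (strict subhomogeneity of the $x$-block of $\hat T$ followed by propagation to the $w$-block through the $C_w$ rows of $B_f$) and your global-attractivity proof (monotone sub/super-solution sandwich anchored at $\epsilon v$ and at $u$) are genuine, self-contained alternatives to the cited result. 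The most interesting divergence is at the hyperbolicity step: with the Jacobian written as $J(z^*)=-D_f+(I-Z^*)B_f-\hat I\,\Dg(B_fz^*)$, one finds $J(z^*)z^*=-\hat I\,\Dg(B_fz^*)z^*$, whose $w$-components vanish, so a ``$Jv\ll 0$ for some $v\gg0$'' test cannot be applied with $v=z^*$ directly. Your device of pairing $J(z^*)z^*$ with the strictly positive \emph{left} Perron eigenvector $u^\top$ of the irreducible Metzler matrix $J(z^*)$ absorbs these zeros gracefully, since $\sum_{i\le n}u_i(J(z^*)z^*)_i<0$ already forces $s(J(z^*))\,u^\top z^*<0$; this buys you a cleaner and more robust hyperbolicity argument than the paper's right-vector test.
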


\begin{proof}
The existence, uniqueness, and global stability of the endemic equilibrium are reported in theorem 2 in  \cite{pare2022multi}. So, we only show that the endemic equilibrium is hyperbolic, which means that the Jacobian at the endemic equilibrium has no eigenvalues with a zero real part.

So we can see that the Jacobian is $J=-D_f+B_f-{Z^*}B_f-B_f{Z^*}$. We notice that $Jz^*=-D_fz^*+B_fz^*-{Z^*}B_fz^*-B_f{Z^*}z^*=-B_f{Z^*}z^*<-dz^*$ with some positive $d$. Thus, $J$ is Hurwitz and contains only eigenvalues with a negative real part. Thus, the endemic equilibrium is hyperbolic.
\end{proof}

Let us now consider the system \eqref{eq::siws_graph} with $D_f$ positive diagonal and $B_f$ irreducible non-negative with some small higher-order term as $\epsilon H(z)$ satisfying Assumption \ref{ass:f0}-\ref{ass:ho}. That is,

\begin{equation} \label{eq::siws_per}
    \dot{z}=-D_f z+(I-{Z})(B_f z +\epsilon H(z)).
\end{equation}

\begin{theorem}
Consider the perturbed system \eqref{eq::siws_per}. If Assumptions \ref{ass:xini}-\ref{ass:irre3} hold, and $\rho(D^{-1}_f B_f)>1$, then there is at least one endemic equilibrium, which is hyperbolic and locally stable, for $0<\epsilon\ll 1$ sufficiently small.
\end{theorem}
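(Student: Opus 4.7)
The plan is to treat the perturbed system \eqref{eq::siws_per} as a smooth perturbation of the unperturbed graph SIWS system \eqref{eq::siws_graph} at $\epsilon=0$ and invoke the persistence of hyperbolic equilibria under small smooth perturbations. Let $G(z,\epsilon)=-D_f z+(I-Z)(B_f z+\epsilon H(z))$ so that equilibria are zeros of $G(\cdot,\epsilon)$. At $\epsilon=0$, the preceding proposition gives a unique endemic equilibrium $z_0^*\gg\mathbf 0$ in $\bar{\mathbf D}$, and moreover its Jacobian $J_0:=\partial_z G(z_0^*,0)=-D_f+B_f-Z_0^*B_f-B_f Z_0^*$ is Hurwitz, hence in particular invertible (no zero eigenvalue).

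The first step is then to apply the Implicit Function Theorem to $G(z,\epsilon)=0$ at $(z_0^*,0)$. Because $G$ is smooth in $(z,\epsilon)$ (polynomial in $z$ for each fixed $\epsilon$, and $H$ is smooth by Assumptions \ref{ass:f0}--\ref{ass:ho}) and $\partial_z G(z_0^*,0)=J_0$ is nonsingular, there is a smooth branch $\epsilon\mapsto z^*(\epsilon)$, defined for $\epsilon$ in a neighborhood of $0$, with $z^*(0)=z_0^*$ and $G(z^*(\epsilon),\epsilon)=0$. By continuity, $z^*(\epsilon)\gg \mathbf 0$ and $z^*(\epsilon)\in\mathbf D$ for all sufficiently small $\epsilon\geq 0$, so it is a bona fide endemic equilibrium of \eqref{eq::siws_per}.

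The second step is to transfer hyperbolicity and stability. The Jacobian of \eqref{eq::siws_per} at $(z^*(\epsilon),\epsilon)$ is
\begin{equation*}
J(\epsilon)=-D_f+(I-Z^*(\epsilon))(B_f+\epsilon J_H(z^*(\epsilon)))-\Dg(B_f z^*(\epsilon)+\epsilon H(z^*(\epsilon))),
\end{equation*}
which depends continuously on $\epsilon$ and satisfies $J(0)=J_0$. Since the spectrum varies continuously with the matrix entries and $J_0$ has all eigenvalues in the open left half-plane, the same holds for $J(\epsilon)$ for all $\epsilon$ sufficiently small. Hence $z^*(\epsilon)$ is hyperbolic and locally exponentially stable, which concludes the proof.

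The main technical point is simply checking that the $\epsilon=0$ Jacobian $J_0$ is invertible; but this is already part of the hyperbolicity established in the preceding proposition (via the argument $J_0 z_0^*\ll \mathbf 0$ showing $J_0$ is Hurwitz). Everything else is a standard implicit-function-theorem plus continuity-of-spectrum argument, with no further effort needed from the hypergraph structure beyond smoothness of $H$.
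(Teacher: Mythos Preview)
Your proposal is correct and follows essentially the same argument as the paper: invoke the preceding proposition to obtain a hyperbolic endemic equilibrium at $\epsilon=0$, apply the Implicit Function Theorem (using that hyperbolicity forces the Jacobian to be nonsingular) to obtain a nearby branch $z^*(\epsilon)$, and then use continuous dependence of eigenvalues to conclude hyperbolicity and local stability persist. Your write-up is in fact slightly more detailed than the paper's (you spell out $J(\epsilon)$ and note $z^*(\epsilon)\gg\mathbf 0$ by continuity), but the logical structure is identical.
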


\begin{proof}
We know from \cite{pare2022multi} that if $\rho(D^{-1}_f B_f)>1$, the unperturbed system has a unique endemic equilibrium and it is hyperbolic. We can rewrite the unperturbed system as $g(z)$ and the perturbed system as $G(z,\epsilon)$. Let $z^*$ be the unique endemic equilibrium of the unperturbed system. By definition, $G(z^*,0)=0$ and $\frac{\partial G}{\partial x}(z^*,0)=\frac{\partial f}{\partial x}(z^*)$. Since $z^*$ is hyperbolic,  $\frac{\partial G}{\partial x}(z^*,0)=\frac{\partial f}{\partial x}(z^*)$ has a nonvanishing determinant. By the implicit function theorem, there is a unique equilibrium in the neighborhood of $z^*$ for sufficiently small $\epsilon$. This equilibrium is also hyperbolic by continuous dependence of the eigenvalues of $\frac{\partial G}{\partial x}$ on $\epsilon$. Thus, the local stability of the equilibrium also persists.
\end{proof}

 The model \eqref{eq::sys_siws_hyp_z2} can be easily extended into a competing bi-virus SIWS model, which reads as
\begin{align}
    \dot{z}^1_i&=\dot{x}^1_i = -\delta^1_i x^1_i 
    + (1-x^1_i-x^2_i)  \left( f^1_i(z^1) 
 +  h^1_i(z^1) \right), &\quad 1\leq i \leq n,
 \label{eq::sys_siws_hyp_xi1}\\
 \dot{z}^1_i &=\dot{w}^1_{j} = -\delta^{w1}_{j} w^1_j+ f^1_i(x^1) 
 +  h^1_i(x^1), &\quad n+1\leq i=j+n \leq n+m ,\label{eq::sys_siws_hyp_wi1}\\
 \dot{z}^2_i&=\dot{x}^2_i = -\delta^2_i x^2_i 
    + (1-x^1_i-x^2_i)  \left( f^2_i(z^2) 
 +  h^2_i(z^2) \right), &\quad 1\leq i \leq n
 \label{eq::sys_siws_hyp_xi2}\\
 \dot{z}^2_i &=\dot{w}^1_{j} = -\delta^{w2}_{j} w^2_j+ f^2_i(x^2) 
 +  h^2_i(x^2), &\quad n+1\leq i=j+n \leq n+m,\label{eq::sys_siws_hyp_wi2}
\end{align}
where all notations remain the same as in \eqref{eq:notation1}, \eqref{eq:notation2} and the superscript $1,2$ denotes the first or the second virus.

For $\nu=1,2$, we can rewrite \eqref{eq::sys_siws_hyp_xi1}-\eqref{eq::sys_siws_hyp_wi2} as
\begin{equation}\label{eq::sys_siws_dt_zbi}
\begin{split}
    \dot{z}^\nu&=-D_f^\nu z^\nu +(I-Z^1-Z^2)(F^\nu(z^\nu)+H^\nu(z^\nu)).
\end{split}
\end{equation}

Throughout the rest of this paper, we assume that if Assumption \ref{ass:f0}-\ref{ass:ho} hold for the system \eqref{eq::sys_siws_dt_zbi}, then Assumptions \ref{ass:f0}-\ref{ass:ho} hold for each $F^\nu(z^\nu), H^i(z^\nu), A^\nu, \nu=1,2$.

For the bi-virus system, we have a new assumption on the initial condition.
\begin{assumption}\label{ass:xinibi}
The initial condition of infection level satisfies $x^\nu_i(0)\in [0,1]$ for all $i$ and $\nu$. Furthermore, $x^1_i(0)+x^2_i(0) \leq 1$ holds for all $i=1,\cdots,n$.
\end{assumption}

\begin{lemma}[General properties]\label{lem:2}
Consider the bi-virus model \eqref{eq::sys_siws_dt_zbi} and let $w^\nu_{j\max}=\frac{f^\nu_j(1)+h^\nu_j(1)}{\delta_j^{w\nu}}$, 
\begin{equation}
    \begin{split}
        \mathbf{D} &=\left\{z=(x^1,w^1,x^2,w^2)^{\top}=(x^1_1,\cdots,x^1_n,w^1_1,\cdots,w^1_m,x^2_1,\cdots,x^2_n,w^2_1,\cdots,w^2_m)^\top| \right.\\
        &\qquad\left. x^\nu_i \in (0,1), w^\nu_j \in (0,w^\nu_{j\max}), \nu=1,2 \right\},
    \end{split}
\end{equation}

and $\mathbf{\Bar{D}}$ be its closure. If Assumptions \ref{ass:f0}-\ref{ass:xinibi} hold, then
\begin{itemize}
    \item [i)] the set $\mathbf{\Bar{D}}$ is positively invariant;
    \item [ii)] the set $\mathbf{D}$ is also positively invariant;
    \item [iii)] the origin is always an equilibrium of the model;
    \item [iv)] the dynamics of each virus \eqref{eq::sys_siws_dt_zbi} are upper bounded by its single virus counterpart \eqref{eq::sys_siws_hyp_z2} with the same system parameters.
\end{itemize}
\end{lemma}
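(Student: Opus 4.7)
The plan is to mirror the argument of Lemma \ref{lem:gpa} for the single-virus case, with two extra ingredients: the mutual exclusion constraint $x^1_i + x^2_i \le 1$ must be preserved along trajectories, and the comparison statement iv) requires exploiting the fact that each $(I - Z^1 - Z^2)$ is componentwise dominated by $(I-Z^\nu)$. The healthy-state statement iii) is immediate from Assumption \ref{ass:f0}: substituting $z^1 = z^2 = \mathbf{0}$ annihilates both $F^\nu$ and $H^\nu$, hence $\dot z^\nu = \mathbf{0}$.

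For positive invariance (i and ii), I would carry out a face-by-face inspection of the vector field on $\partial \bar{\mathbf{D}}$. Four types of faces must be checked. First, if $x^\nu_i = 0$ then $\dot x^\nu_i = (1 - x^1_i - x^2_i)(f^\nu_i(z^\nu) + h^\nu_i(z^\nu)) \ge 0$, since $1 - x^1_i - x^2_i \ge 0$ on $\bar{\mathbf{D}}$ and since $f^\nu_i, h^\nu_i \ge 0$ on $z^\nu \ge \mathbf{0}$ by Assumptions \ref{ass:f0}-\ref{ass:monotone}. Second, if $w^\nu_j = 0$ then $\dot w^\nu_j = f^\nu_{j+n}(x^\nu) + h^\nu_{j+n}(x^\nu) \ge 0$ for the same reason. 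Third, if $w^\nu_j = w^\nu_{j\max}$ then, using monotonicity of $f^\nu_{j+n}, h^\nu_{j+n}$ in $x^\nu \le \mathbf{1}$, we get $\dot w^\nu_j = -\delta^{w\nu}_j w^\nu_{j\max} + f^\nu_{j+n}(x^\nu) + h^\nu_{j+n}(x^\nu) \le 0$. Fourth, and the only really new face, is $x^1_i + x^2_i = 1$: here $\dot x^1_i + \dot x^2_i = -\delta^1_i x^1_i - \delta^2_i x^2_i \le 0$ because the $(1 - x^1_i - x^2_i)$ factor vanishes on this face; this prevents escape across the simplex constraint. Combining the four bounds gives $\bar{\mathbf{D}}$ positively invariant; the strict versions of each inequality (using irreducibility from Assumption \ref{ass:irre3} to propagate strict positivity across the network) upgrade this to invariance of the open set $\mathbf{D}$.

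For statement iv), I would introduce for each $\nu$ the auxiliary single-virus trajectory $\tilde z^\nu$ satisfying \eqref{eq::sys_siws_hyp_z2} with the same parameters and initial condition $\tilde z^\nu(0) = z^\nu(0)$. Since $Z^{3-\nu} \ge 0$ on $\bar{\mathbf{D}}$ and $F^\nu(z^\nu) + H^\nu(z^\nu) \ge 0$, we have the differential inequality
\begin{equation*}
\dot z^\nu \;=\; -D_f^\nu z^\nu + (I - Z^1 - Z^2)(F^\nu(z^\nu) + H^\nu(z^\nu)) \;\le\; -D_f^\nu z^\nu + (I - Z^\nu)(F^\nu(z^\nu) + H^\nu(z^\nu)).
\end{equation*}
The right-hand side is exactly the vector field of the single-virus system \eqref{eq::sys_siws_hyp_z2}, which is an irreducible monotone (cooperative) system on $\mathbf{D}$ by Theorem \ref{thm:irrsys} (applied in the general-interaction setting this lemma lives in). Applying Kamke's comparison principle for cooperative systems then yields $z^\nu(t) \le \tilde z^\nu(t)$ for all $t \ge 0$, which is exactly statement iv).

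The main obstacle I anticipate is the fourth face $x^1_i + x^2_i = 1$, because standard Nagumo-type invariance requires checking the outward normal of the constraint, and here the relevant normal is $(e_i, e_i)$ rather than a single coordinate direction; this is why one has to examine the \emph{sum} $\dot x^1_i + \dot x^2_i$ rather than each component separately. The cancellation driven by the shared $(1 - x^1_i - x^2_i)$ factor is what makes this work and is the structural reason why the simplex constraint is preserved; all remaining steps are routine once this face is handled.
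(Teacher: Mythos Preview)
Your proposal is correct and follows essentially the same face-by-face boundary analysis as the paper's own proof; the paper additionally checks the box face $x^\nu_i=1$ separately (obtaining $\dot x^\nu_i=-\delta^\nu_i x^\nu_i<0$), but under the simplex constraint $x^1_i+x^2_i\le 1$ this face is already contained in the face you handle, so your omission is not a gap. Your treatment is in fact more careful than the paper's in two respects: you explicitly frame the simplex face via the Nagumo outward-normal criterion (the paper just inspects each $\dot x^\nu_i$ individually on that face), and for iv) you spell out the differential inequality and invoke Kamke's comparison principle for cooperative systems, whereas the paper simply declares the comparison ``straightforward''.
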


\begin{proof}
If $x^\nu_i=0$, we have $\dot{x}^\nu_i= (1-x^{3-\nu}_i) (f^\nu_i(z) +  h^\nu_i(z))  \geq 0$. If $w^\nu_j=0$, $\dot{w}^\nu_j=f^\nu_{j+n}(x) +  h^\nu_{j+n}(x)\geq 0$. If $x^\nu=\mathbf{0}$ and $w^\nu=\mathbf{0}$, we have $\dot{x}= \mathbf{0}$ and $\dot{w}= \mathbf{0}$, which shows that the origin is always an equilibrium.

If $x^\nu_i=1$, we get $\dot{x}^\nu_i=-\delta_i<0$. If $w_j^\nu=w^\nu_{j\max}$, we have $\dot{w}^\nu_j=-f^\nu_{n+j}(\mathbf{1})-h^\nu_{n+j}(\mathbf{1})+ f^\nu_{n+j}(x^\nu)+h^\nu_{n+j}(x^k)\leq 0$, since the pairwise interaction is increasing in $x$ and the higher-order interaction function is non-decreasing in $x$. If $w^\nu_j=w^\nu_{j\max}$ and $x^\nu\neq \mathbf{1}$, then $\dot{w}^\nu_j< 0$.

If $x^1_i+x^2_i=1$, then $x^\nu_i=-\delta_i x^\nu_i<0$. All these arguments show that statements i) and ii) are true.

The statement iv) is straightforward. It suffices to compare the dynamics of each virus \eqref{eq::sys_siws_dt_zbi} with its single virus counterpart with the same system parameters \eqref{eq::sys_siws_hyp_z2}.
\end{proof}

\begin{theorem}[Healthy-state behaviour] \label{thm:healthybi2}
Consider the model \eqref{eq::sys_siws_dt_zbi}, if Assumptions \ref{ass:f0}-\ref{ass:xinibi} hold, we have the following statements.
The healthy state (zero equilibrium)
\begin{itemize}
    \item [1)] is locally stable if $\rho((D^{\nu}_f)^{-1} J_{F^\nu}(0))<1$ for $\nu=1,2$;
    \item [2)] is globally exponentially stable if $(D^{\nu}_f)^{-1} F^\nu(z^{\nu})+(D^{\nu}_f)^{-1} H^\nu(z^{\nu})\leq dz^{\nu}$, $\nu=1,2$ for some  $0< d < 1$;
    \item [3)] is unstable if $\rho((D^{\nu}_f)^{-1} J_{F^\nu}(0))>1$ for either $\nu=1$ or $\nu=2$.
\end{itemize}
\end{theorem}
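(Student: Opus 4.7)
The plan is to mirror the proof of Theorem \ref{thm:healthybi} but substitute the abstract interaction functions $F^\nu, H^\nu$ for the polynomial objects, relying on the general single-virus result Theorem \ref{thm:healthy} rather than Theorem \ref{thm:healthysig}. The existence of the zero equilibrium is immediate from Assumption \ref{ass:f0}, since $F^\nu(0)=0$ and $H^\nu(0)=0$ together with $Z^1=Z^2=0$ at the origin make every right-hand side vanish.

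For parts (1) and (3) I would linearize \eqref{eq::sys_siws_dt_zbi} at $z^1=z^2=0$. Because the factor $(I-Z^1-Z^2)$ equals the identity at the origin and $H^\nu$ has vanishing Jacobian there by Assumption \ref{ass:ho}, the block Jacobian takes the decoupled form
\begin{equation*}
J(0,0) = \begin{bmatrix} -D_f^1 + J_{F^1}(0) & \mathbf{0} \\ \mathbf{0} & -D_f^2 + J_{F^2}(0) \end{bmatrix}.
\end{equation*}
Each block $-D_f^\nu + J_{F^\nu}(0)$ is an irreducible Metzler matrix by Assumptions \ref{ass:monotone} and \ref{ass:irre3}, so by Proposition 1 of \cite{liu2019analysis} its spectral abscissa has the same sign as $\rho((D_f^\nu)^{-1}J_{F^\nu}(0))-1$. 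Local exponential stability of the origin therefore follows when both spectral radii are below $1$, and instability follows as soon as one of them exceeds $1$.

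For part (2) I would invoke statement iv) of Lemma \ref{lem:2}: each $z^\nu(t)$ is dominated componentwise by the trajectory $\tilde z^\nu(t)$ of the single-virus model \eqref{eq::sys_siws_hyp_z2} with identical parameters and initial condition $\tilde z^\nu(0)=z^\nu(0)$. Under the hypothesis $(D_f^\nu)^{-1}F^\nu(z^\nu)+(D_f^\nu)^{-1}H^\nu(z^\nu)\leq d\,z^\nu$ with $d\in(0,1)$, Theorem \ref{thm:healthy}(2) applied to each single-virus subsystem gives exponential decay of $\tilde z^\nu$ to $\mathbf 0$, using the Lyapunov function $V=(D_f^\nu)^{-1}\tilde z^\nu$ and the estimate $\dot V \leq (-1+d)\min(\delta^\nu,\delta^{w\nu})V$ derived in \eqref{eq:v2}. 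Via the comparison principle \cite{khalil2002nonlinear}, exponential decay transfers to $z^\nu$ itself, completing (2).

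The only non-routine step is verifying that the monotone comparison of Lemma \ref{lem:2}(iv) is directly applicable here; but it is, because the only difference between \eqref{eq::sys_siws_dt_zbi} and its single-virus counterpart is the substitution of the factor $(I-Z^1-Z^2)$ in place of $(I-Z^\nu)$, which only makes the right-hand side smaller. All other ingredients (Metzler-matrix characterization of spectral abscissa, the Lyapunov estimate for the general single-virus system, and the comparison principle) are already proved, so the proposal is essentially a careful reassembly rather than a new argument.
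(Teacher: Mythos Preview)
Your proposal is correct and follows essentially the same approach as the paper's own proof: compute the block-diagonal Jacobian at the origin to settle (1) and (3), and invoke Lemma~\ref{lem:2}(iv) together with Theorem~\ref{thm:healthy} (via the comparison principle) to obtain (2). You simply supply a few more details than the paper does, but there is no substantive difference in strategy.
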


\begin{proof}
It is already clear that the healthy state (zero equilibrium) always exists. The Jacobian matrix of the model at the origin reads as
 \begin{equation*}
     J(0,0)=\left[ 
        \begin{matrix}
	-D^1_f+J_{F^1}(0)& \mathbf{0}  \\
	\mathbf{0} & -D^2_f+J_{F^2}(0)   \\
	\end{matrix}
        \right].
 \end{equation*}
 
Thus, the healthy state is locally stable if $\rho((D^{\nu}_f)^{-1} J_{F^\nu}(0))<1$ for $\nu=1,2$ and unstable if $\rho((D^{\nu}_f)^{-1} J_{F^\nu}(0))>1$ for either $\nu=1$ or $\nu=2$.

The global exponential stability when  $(D^{\nu}_f)^{-1} F^\nu(z^{\nu})+(D^{\nu}_f)^{-1} H^\nu(z^{\nu})\leq dz^{\nu}$ for some $0< d < 1$ and $\nu=1,2$ is the direct consequence of statement iv) in Lemma \ref{lem:2} and Theorem \ref{thm:healthy}.
\end{proof}

\begin{theorem} [Dominant endemic equilibrium]
Consider the bi-virus model \\ \eqref{eq::sys_siws_dt_zbi}. If Assumptions \ref{ass:f0}-\ref{ass:xinibi} hold, we have the following statements. If $z^*=(x^*,w^*)^\top$ is the endemic equilibrium of the single-virus counterpart \eqref{eq::sys_siws_hyp_z} of the first virus, then $(z^*,0,0)^{\top}$ is the dominant endemic equilibrium of the system \eqref{eq::sys_siws_dt_zbi}. The dominant endemic equilibrium is locally stable when $z^*$ is locally stable and $\rho(-D^2_{f}+(I-Z^*)J_{F^2}(0))<0$ with respect to  the single-virus system and is unstable when $z^*$ is unstable or $\rho(-D^2_{f}+(I-Z^*)J_{F^2}(0))>0$, where $Z^*=\left[ 
        \begin{matrix}
	x^* & 0  \\
	0 & 0  \\
	\end{matrix}
        \right]$.
\end{theorem}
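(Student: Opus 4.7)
The plan is to mirror the argument used for the polynomial case in Theorem~\ref{thm:deq}, now working with the abstract interaction functions $F^\nu, H^\nu$ of the general system \eqref{eq::sys_siws_dt_zbi}.

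First I would establish existence of the dominant endemic equilibrium by a direct reduction. By Assumption~\ref{ass:f0}, $F^2(\mathbf{0}) = \mathbf{0}$ and $H^2(\mathbf{0}) = \mathbf{0}$, so when $z^2 = \mathbf{0}$ the virus-2 block \eqref{eq::sys_siws_hyp_xi2}--\eqref{eq::sys_siws_hyp_wi2} collapses to $\dot{z}^2 \equiv \mathbf{0}$, while the virus-1 block \eqref{eq::sys_siws_hyp_xi1}--\eqref{eq::sys_siws_hyp_wi1} reduces exactly to the single-virus system \eqref{eq::sys_siws_hyp_z2}. Consequently, any endemic equilibrium $z^*$ of the single-virus counterpart lifts to an equilibrium $(z^*, \mathbf{0})^\top$ of the bi-virus system \eqref{eq::sys_siws_dt_zbi}, and it is dominant in the sense that the second virus is extinct.

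Next I would characterize local stability by computing the Jacobian at $(z^*, \mathbf{0})^\top$ and exploiting its block-triangular structure. Concretely, the Jacobian should take the form
\begin{equation*}
J(z^*, \mathbf{0}) = \begin{bmatrix} \hat{J}(z^*) & \mathbf{R} \\ \mathbf{0} & -D^2_f + (I-Z^*)J_{F^2}(0) \end{bmatrix},
\end{equation*}
where $\hat{J}(z^*)$ is the Jacobian of the single-virus system for virus~1 at $z^*$ and $\mathbf{R}$ is a residual block that need not be examined explicitly. The vanishing of the lower-left block $\partial \dot{z}^2/\partial z^1|_{(z^*,\mathbf{0})} = \mathbf{0}$ is the key structural step: every entry of $\dot{x}^2$ carries a factor $f^2_i(z^2) + h^2_i(z^2)$, which is zero at $z^2 = \mathbf{0}$ by Assumption~\ref{ass:f0}, and $\dot{w}^2$ does not depend on $z^1$ in the first place. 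The lower-right diagonal block is then obtained by differentiating $(I - Z^1 - Z^2)(F^2(z^2) + H^2(z^2))$ in $z^2$ at $(z^1, z^2) = (x^*, \mathbf{0})$, where $J_{H^2}(\mathbf{0}) = \mathbf{0}$ (Assumption~\ref{ass:ho}) removes the higher-order contribution and leaves precisely $-D^2_f + (I-Z^*)J_{F^2}(0)$.

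Finally, since the spectrum of a block-triangular matrix is the union of the spectra of its diagonal blocks, local asymptotic stability of $(z^*, \mathbf{0})^\top$ is equivalent to simultaneous stability of both diagonal blocks. Stability of $\hat{J}(z^*)$ is, by definition, local stability of $z^*$ in the single-virus system, while the second block is an irreducible Metzler matrix by Assumption~\ref{ass:irre3} and can be handled through the same Perron-Frobenius equivalence invoked in Theorem~\ref{thm:healthy} (Proposition~1 of \cite{liu2019analysis}), yielding the stated sign condition on $\rho(-D^2_f + (I-Z^*)J_{F^2}(0))$. Instability follows symmetrically: it suffices that either diagonal block admit an eigenvalue with positive real part. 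The step I expect to demand the most care is rigorously justifying the vanishing of the lower-left Jacobian block under abstract interaction functions, since it rests on carefully combining Assumption~\ref{ass:f0} with the observation that virus~2 couples to virus~1 only multiplicatively through the saturation factor $(I - Z^1 - Z^2)$, which becomes irrelevant once it multiplies terms that are already zero at $z^2 = \mathbf{0}$.
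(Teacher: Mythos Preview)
Your proposal is correct and follows essentially the same approach as the paper: reduce at $z^2=\mathbf{0}$ to identify the equilibrium, compute the block-triangular Jacobian, and read off local stability from the two diagonal blocks. In fact your argument is more carefully justified than the paper's, which simply states the block-triangular form without explaining why the lower-left block vanishes or why Assumption~\ref{ass:ho} eliminates $J_{H^2}(\mathbf{0})$ from the lower-right block.
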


\begin{proof}
If we set $z^2=\mathbf{0}$, the system is reduced into the single virus counterpart of the first virus. Thus, if $z^*$ is the endemic equilibrium of the single-virus counterpart \eqref{eq::sys_siws_hyp_z} of the first virus, then $(z^*,0,0)^{\top}$ is the dominant endemic equilibrium of the system \eqref{eq::sys_siws_dt_zbi}.

The Jacobian matrix of the model at the dominant endemic equilibrium is 
 \begin{equation*}
     J(z^*,\mathbf{0})=\left[ 
        \begin{matrix}
	\hat{J}(z^*) & \mathbf{Res}  \\
	\mathbf{0} & -D^2_{f}+(I-Z^*)J_{F^2}(0)   \\
	\end{matrix}
        \right],
 \end{equation*}
 where $\hat{J}(z^*)$ is the Jacobian of the single-virus counterpart at the endemic equilibrium $z^*$ and $\mathbf{Res}$ is the matrix of some residual which we do not need to deal with.
 
 By observing the Jacobian and noticing that $\hat{J}(z^*)$ and $-D^2_{f}+(I-Z^*)J_{F^2}(0)$ is irreducible nonnegative, we can directly obtain the results.
\end{proof}

\begin{theorem} [Coexisting equilibrium]
Consider the model \eqref{eq::sys_siws_dt_zbi}. If Assumptions \ref{ass:f0}-\ref{ass:xinibi} hold,  we have the following statements. 
If there exist $(D^{\nu}_f)^{-1} F^\nu(c)\geq c$ and $c\neq 0$, $c\ll u^i$, $F^\nu(z^\nu)\geq B^\nu_f z^\nu$ for both viruses $\nu=1,2$, and $\rho(-D^1_{f}+(I-Z^{2*})B^1_f)>0$ as well as $\rho(-D^2_{f}+(I-Z^{1*})B^2_f)>0$, where $z^{1*}$ and $z^{2*}$ are the endemic equilibrium of the single-virus counterpart of the first and second virus respectively and always exist, then there exists at least one coexisting equilibrium $(\hat{z}^1,\hat{z}^2)\gg \mathbf{0}$ such that $\hat{x}^1+\hat{x}^2\leq \mathbf{1}$.
\end{theorem}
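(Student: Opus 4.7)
The plan is to mimic the strategy of Theorem \ref{thm:coexist}, replacing the concrete polynomial bounds with the abstract hypotheses, and to reduce the analysis to the monotone dominance argument via a comparison map.

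First I would define the equilibrium map $M=(M^1,M^2):\mathbf{\bar D}\to\mathbf{\bar D}$ componentwise by
\begin{equation*}
  M^\nu(z^1,z^2)_i=
  \begin{cases}
    \dfrac{(1-z^{3-\nu}_i)\bigl((D^\nu_f)^{-1}F^\nu(z^\nu)+(D^\nu_f)^{-1}H^\nu(z^\nu)\bigr)_i}{1+\bigl((D^\nu_f)^{-1}F^\nu(z^\nu)+(D^\nu_f)^{-1}H^\nu(z^\nu)\bigr)_i}, & i\le n,\\[6pt]
    \bigl((D^\nu_f)^{-1}F^\nu(z^\nu)+(D^\nu_f)^{-1}H^\nu(z^\nu)\bigr)_i, & i\ge n+1,
  \end{cases}
\end{equation*}
so that fixed points of $M$ are exactly equilibria of \eqref{eq::sys_siws_dt_zbi}. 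By Assumption \ref{ass:monotone} and the hypothesis $F^\nu(z^\nu)\ge B^\nu_f z^\nu$, the map $M^\nu$ is non-decreasing in $z^\nu$ and non-increasing in $z^{3-\nu}$, exactly as in the proof of Theorem \ref{thm:coexist}. The hypothesis $(D^\nu_f)^{-1}F^\nu(c)\ge c$ (together with Theorem \ref{thm:end2ab}) ensures existence of the single-virus endemic equilibria $z^{1*},z^{2*}$ used in the statement.

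Next I would build a comparison map $\hat M$ by keeping only the pairwise linear part $B^\nu_f$ in the numerator of $M^\nu$ and dropping $H^\nu$; the inequality $F^\nu\ge B^\nu_f z^\nu$ and the nonnegativity of $H^\nu$ (which follows from Assumptions \ref{ass:f0}-\ref{ass:monotone}) yield $\hat M(z)\le M(z)$. The map $\hat M$ is precisely the bi-virus SIWS map on the underlying pairwise graph studied in the proof of Theorem \ref{thm:coexist}. Using the spectral conditions $\rho(-D^\nu_f+(I-Z^{(3-\nu)*})B^\nu_f)>0$ and the irreducibility of $B^\nu_f$, I would extract, as in that proof, positive eigenpairs $(\lambda^\nu,y^\nu)$ of $(I-Z^{(3-\nu)*})(D^\nu_f)^{-1}B^\nu_f$ with $\lambda^\nu>1$, and pick small $\epsilon^\nu>0$ so that $\epsilon^\nu y^\nu\le z^{\nu*}$ and $1+\max_i((D^\nu_f)^{-1}B^\nu_f\epsilon^\nu y^\nu)_i<\lambda^\nu$. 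This produces a strict improvement $\hat M^\nu(\cdot)>\epsilon^\nu y^\nu$ on the candidate box.

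Finally, I would verify that the box
\begin{equation*}
  K=\bigl\{(z^1,z^2)\,:\,\epsilon^1 y^1\le z^1\le z^{1*},\ \epsilon^2 y^2\le z^2\le z^{2*}\bigr\}
\end{equation*}
is invariant under $M$. The lower bound uses $M\ge\hat M$ together with the strict improvement above; the upper bound uses the monotonicity of $M^\nu$ in $z^\nu$ combined with the single-virus fixed-point identities $M^\nu(z^{\nu*},0)=z^{\nu*}$ and monotonicity in $z^{3-\nu}$, so that $M^\nu(z^1,z^2)\le M^\nu(z^{\nu*},0)=z^{\nu*}$. Continuity of $M$ is inherited from smoothness of $F^\nu,H^\nu$, so Brouwer's fixed-point theorem applied to $M$ on the compact convex set $K$ yields a coexisting equilibrium $(\hat z^1,\hat z^2)\gg\mathbf{0}$; the constraint $\hat x^1+\hat x^2\le \mathbf 1$ follows from the invariance of $\mathbf{\bar D}$ in Lemma \ref{lem:2}. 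The main obstacle is the last verification: without the explicit polynomial form one must argue only from $F^\nu\ge B^\nu_f z^\nu$ and the higher-order vanishing at zero (Assumption \ref{ass:ho}) that the strict comparison $\hat M\le M$ is strong enough to preserve the lower bound, and that the coupling term $(1-z^{3-\nu}_i)$ does not destroy the upper bound when evaluated at $z^{\nu*}$ rather than at the true single-virus fixed point of $\hat M$.
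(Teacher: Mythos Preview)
Your proposal is correct and follows essentially the same strategy as the paper: define the fixed-point map $M$ for the general system, bound it below by the pairwise-linear comparison map $\hat M$ using the hypothesis $F^\nu(z^\nu)\ge B^\nu_f z^\nu$ and $H^\nu\ge 0$, then invoke the eigenvector/$\epsilon$-box argument from Theorem~\ref{thm:coexist} verbatim and close with Brouwer. The concerns you raise in your last paragraph are not genuine obstacles---the upper bound $M^\nu(z^1,z^2)\le M^\nu(z^{\nu*},0)=z^{\nu*}$ follows directly from the monotonicity you already established (and $z^{\nu*}$ is the fixed point of the \emph{full} single-virus map $M^\nu(\,\cdot\,,0)$, so no mismatch arises), while the lower bound only requires $\hat M\le M$ together with the purely linear computation $\hat M^\nu(\epsilon^\nu y^\nu,z^{(3-\nu)*})>\epsilon^\nu y^\nu$, which uses nothing beyond the spectral hypothesis on $(I-Z^{(3-\nu)*})(D^\nu_f)^{-1}B^\nu_f$.
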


\begin{proof}
First of all, we need to confirm that $z^{1*}$ and $z^{2*}$, the endemic equilibrium of the single-virus counterpart of the first and second virus, always exist under the conditions. Since there must exist $(D^{\nu}_f)^{-1} F^\nu(c)\geq c$ for both viruses $\nu=1,2$, according to the Theorem \ref{thm:end2ab}, $z^{1*}$ and $z^{2*}$ must always exist.

Now, define the map $M$ as follows. Component-wise, it is given by 
\begin{equation}
\begin{aligned}
M_i^1\left(z^1, z^2\right) &=\frac{\left(1-z_i^2\right)\left(\left(D_f^1\right)^{-1}F^1(z^1)+\left(D_f^1\right)^{-1} H^1(z^1)\right)_i}{1+\left(\left(D_f^1\right)^{-1}F^1(z^1) +\left(D_f^1\right)^{-1} H^1(z^1) \right)_i} \\
M_i^2\left(z^1, z^2\right) &=\frac{\left(1-z_i^1\right)\left(\left(D_f^2\right)^{-1}F^2(z^2)+\left(D_f^2\right)^{-1} H^2(z^2)\right)_i}{1+\left(\left(D_f^2\right)^{-1}F^2(z^2)+\left(D_f^2\right)^{-1} H^2(z^2)\right)_i}
\end{aligned}
\end{equation}
for $i\leq n$.
Furthermore, for $i\geq n+1$,
\begin{equation}
\begin{aligned}
M_i^1\left(z^1, z^2\right) &=(D_f^1)^{-1} (F^1(z^1)+H^1(z^1)) \\
M_i^2\left(z^1, z^2\right) &=(D_f^2)^{-1} (F^2(z^2)+H^2(z^2)).
\end{aligned}
\end{equation}

Similarly, by using the property of the monotonicity of $M$, we get $M(z^1,z^2)\leq M(z^{1*},z^{2*})$ if $(z^1,z^2)\leq(z^{1*},z^{2*})$.

Now, we construct another map $\hat{M}$.  Component-wise, it is given by, for $i\leq n$,
\begin{equation}
\begin{aligned}
\hat{M}_i^1\left(z^1, z^2\right) &=\frac{\left(1-z_i^2\right)\left(\left(D_f^1\right)^{-1} B_f^1 z^1\right)_i}{1+\left(\left(D_f^1\right)^{-1} B_f^1 z^1  \right)_i}\leq M_i^1\left(z^1, z^2\right) \\
\hat{M}_i^2\left(z^1, z^2\right) &=\frac{\left(1-z_i^1\right)\left(\left(D_f^2\right)^{-1} B_f^2 z^2\right)_i}{1+\left(\left(D_f^2\right)^{-1} B_f^2 z^2\right)_i}\leq M_i^2\left(z^1, z^2\right);
\end{aligned}
\end{equation}
these inequalities are true because the function $\frac{s}{1+s}$ is increasing as $s$ increases, and furthermore $F^\nu(z^\nu)+H^\nu(z^\nu)\geq F^\nu(z^\nu) \geq B_f^\nu z^\nu$. 

Furthermore, for $i\geq n+1$,
\begin{equation}
\begin{aligned}
\hat{M}_i^1\left(z^1, z^2\right) &=(D_f^1)^{-1} (B_f^1 z^1)\leq M_i^1\left(z^1, z^2\right) \\
\hat{M}_i^2\left(z^1, z^2\right) &=(D_f^2)^{-1} (B_f^2 z^2)\leq M_i^2\left(z^1, z^2\right).
\end{aligned}
\end{equation}

The rest proof remains the same as for Theorem \ref{thm:coexist}.
\end{proof}

\begin{remark}
The condition $F^\nu(z^\nu)\geq B^\nu_f z^\nu$ shows that the pairwise interaction is lower bounded by a linear function. Thus, the rate of the increase of the pairwise interaction must exceed the slope of the linear function. Alternatively, the general higher-order model can be seen as lower-bounded by the conventional SIWS model on a graph, since the linear function denotes the infection process of the conventional SIWS model on a graph.
\end{remark}

\begin{theorem} [Irreducible monotone system]
If Assumptions \ref{ass:f0}-\ref{ass:xinibi} hold, the model \eqref{eq::sys_siws_dt_zbi} is an $(\mathbf{0}_{n+m},\mathbf{1}_{n+m})-$type irreducible monotone system in $\mathbf{{D}}$. Furthermore, if the model \eqref{eq::sys_siws_dt_zbi} has a finite number of equilibria in the closure of $\mathbf{\Bar{D}}$ for a generic choice of parameter, then the model converges to an equilibrium for almost all initial conditions. That is to say, the set of initial conditions, such that the model does not converge to an equilibrium, is a set of Lebesgue measure zero.
\end{theorem}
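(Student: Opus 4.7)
The plan is to mirror the argument used for Theorem \ref{thm:irsbi}, but with the polynomial Jacobians replaced by the abstract Jacobians $J_{F^\nu}$ and $J_{H^\nu}$. The system \eqref{eq::sys_siws_dt_zbi} decouples naturally into two virus-blocks, each depending on its own $z^\nu$ through $F^\nu(z^\nu)+H^\nu(z^\nu)$, with cross-coupling only through the scalar factor $(I-Z^1-Z^2)$. This structure is precisely what makes the sign-reversal trick applicable.

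First, I would compute the Jacobian of \eqref{eq::sys_siws_dt_zbi} block-wise. The diagonal block corresponding to $\partial \dot z^\nu/\partial z^\nu$ is
\[
J^{\nu\nu} \;=\; -D_f^\nu + (I-Z^1-Z^2)\bigl(J_{F^\nu}(z^\nu)+J_{H^\nu}(z^\nu)\bigr)-\hat{I}\,\Dg\!\bigl(F^\nu(z^\nu)+H^\nu(z^\nu)\bigr),
\]
with $\hat{I}=\Dg(\mathbf{1}_n,\mathbf{0}_m)$, while the off-diagonal block $\partial \dot z^\nu/\partial z^{3-\nu}$ equals $-\hat{I}\,\Dg(F^\nu(z^\nu)+H^\nu(z^\nu))$, which is a non-positive diagonal matrix. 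Using Assumptions \ref{ass:monotone}–\ref{ass:irre3} applied to each $(F^\nu,H^\nu)$, the matrix $J_{F^\nu}(z^\nu)+J_{H^\nu}(z^\nu)$ is irreducible and non-negative on $\mathbf{D}$, so $J^{\nu\nu}$ is an irreducible Metzler matrix, exactly as in the single-virus case.

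Next, with $P=\Dg(\mathbf{1}_{n+m},-\mathbf{1}_{n+m})$ the similarity $PJP$ flips the sign of the off-diagonal blocks; the diagonal blocks $J^{\nu\nu}$ remain irreducible Metzler, while $-J^{\nu,3-\nu}$ becomes non-negative diagonal. Hence $PJP$ is an irreducible Metzler matrix on $\mathbf{D}$, which by Lemma 2.2 of \cite{ye2022convergence} identifies \eqref{eq::sys_siws_dt_zbi} as an $(\mathbf{0}_{n+m},\mathbf{1}_{n+m})$-type irreducible monotone system on $\mathbf{D}$. Positive invariance and the co-infection bound $x^1+x^2\le\mathbf 1$ have already been established in Lemma \ref{lem:2}, so the competitive monotonicity is meaningful on the physical domain.

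Finally, by hypothesis the equilibrium set in $\mathbf{\bar D}$ is finite; combining this with the irreducible monotone structure, Lemma 2.3 of \cite{ye2022convergence} yields almost-sure convergence to an equilibrium, with the non-convergent initial set being Lebesgue-null. The main obstacle I anticipate is not computational but structural: verifying that irreducibility of the shifted Metzler matrix $PJP$ holds uniformly on the open set $\mathbf{D}$ and not merely generically, since the off-diagonal coupling $-\hat{I}\Dg(F^\nu+H^\nu)$ can degenerate on the boundary where some $z^\nu_i$ vanish; this is why the monotone system property is stated on $\mathbf{D}$ rather than $\mathbf{\bar D}$, and why positive invariance of $\mathbf{D}$ from Lemma \ref{lem:2} is used to transfer the conclusion to trajectories starting anywhere in $\mathbf{\bar D}\setminus\partial\mathbf{D}$.
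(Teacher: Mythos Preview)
Your proposal is correct and follows essentially the same approach as the paper: compute the block Jacobian, observe that the diagonal blocks are irreducible Metzler while the off-diagonal blocks are non-positive, conjugate by $P=\Dg(\mathbf{1}_{n+m},-\mathbf{1}_{n+m})$ to obtain an irreducible Metzler matrix, and then invoke Lemmas~2.2 and~2.3 of \cite{ye2022convergence}. The paper's own proof is in fact just a two-line reference back to Theorem~\ref{thm:irsbi}, so your write-up is more explicit than what appears there; your remark about restricting to the open set $\mathbf{D}$ to guarantee strict positivity of the cross-coupling entries is also well taken and matches the paper's formulation.
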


\begin{proof}
The proof is similar to the Theorem \ref{thm:irsbi}. Notice that the Jacobian can be permutated into an irreducible Metzler matrix via the permutation matrix $P=\Dg((1,\cdots,1,-1,\cdots,-1)^\top)$. The rest of the proof remains the same. 
\end{proof}

\begin{remark}
If the pairwise and the higher-order interactions are both polynomial functions, then the model \eqref{eq::sys_siws_dt_zbi} has a finite number of equilibria in the closure of $\mathbf{\Bar{D}}$ for a generic choice of parameter. As a consequence, the model converges to an equilibrium for almost all initial conditions.
\end{remark}

\section{Numerical examples}\label{sec:sim}
In this section, we carry out numerical simulations to illustrate our analytical results.

As for the simulation setup, we consider a hypergraph with 5 population nodes and 2 resource nodes. The recovery rate and the decay rate are randomly picked up from $[0,1]$. Furthermore, the infection rate and the contamination rate are randomly picked up from $[0,0.2]$. By this setting, the Assumptions \ref{ass:xini1}-\ref{ass:irre} hold already. We observe that the parameters of this setting usually reasonably provide a reproduction number. Otherwise, for example, if we allow the maximum infection rate and the maximum contamination rate to exceed $0.2$, then it is very likely to observe a super big reproduction number. For simplicity, each figure of the simulation represents the mean value of the infection (or contamination) level of all population (or resource) nodes. In each figure, the short name \emph{IC} denotes initial condition and \emph{V} means virus.

\subsection{Conventional higher-order single- and bi-virus system}
Firstly, we make a simulation on the  single-virus system \eqref{eq::sys_siws_hyp_z} and the bi-virus system \eqref{eq::sys_siws_dt_zbi1}. The simulation results of figure \ref{fig:sig-healthy}-\ref{fig:bi-end3} are in line with the analytical results of this paper. Especially, we see that bi-stability occurs for both single-virus and bi-virus systems. Although we confirm the existence of coexisting equilibrium under some conditions, the simulation results suggest that it is unstable.
\begin{figure}
    \centering
    \includegraphics{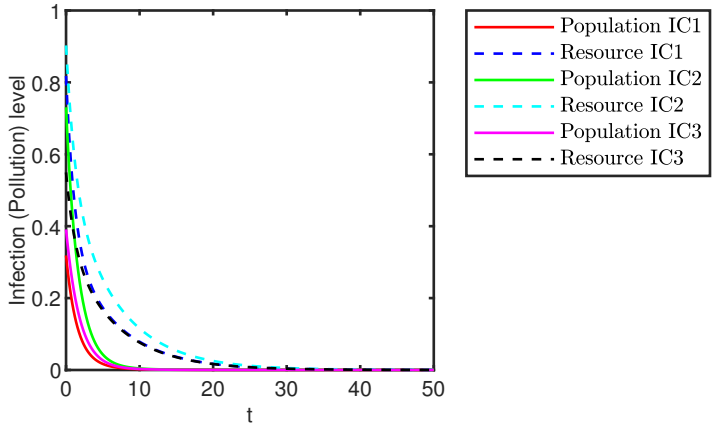}
    \caption{The simulation of the single-virus case when the reproduction number equals $0.0644$. From 3 different random initial conditions, the model always converges to zero.}
    \label{fig:sig-healthy}
\end{figure}

\begin{figure}
    \centering
    \includegraphics{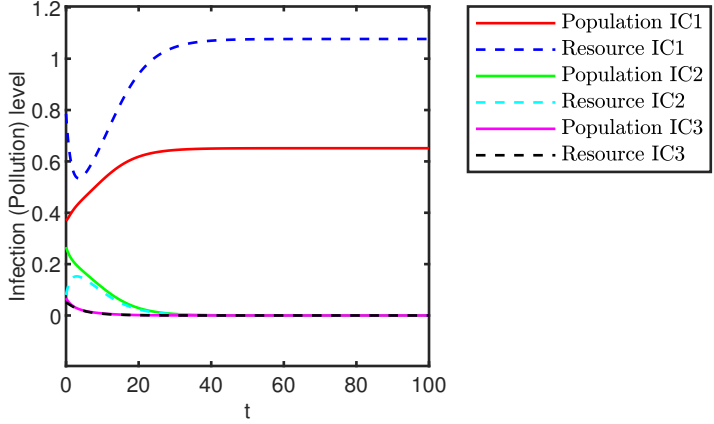}
    \caption{The simulation of the single-virus case when the reproduction number equals $0.4578$. From 3 different random initial conditions, the model either converges to the origin or to an endemic equilibrium.}
    \label{fig:bistability}
\end{figure}

\begin{figure}
    \centering
    \includegraphics{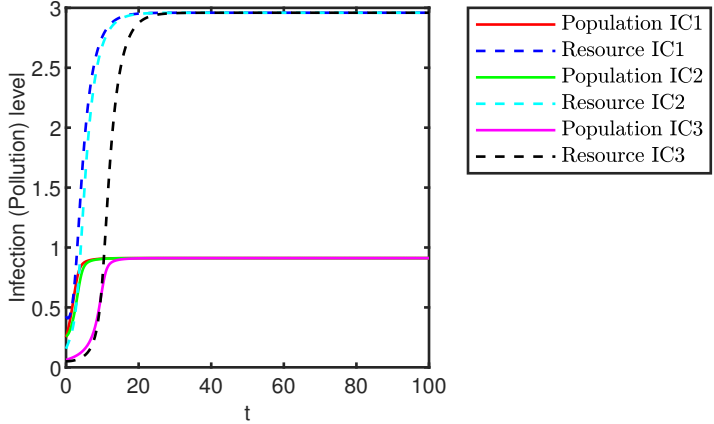}
    \caption{The simulation of the single-virus case when the reproduction number equals $1.7052$. From 3 different random initial conditions, the model always converges to the same endemic equilibrium.}
    \label{fig:sig-end}
\end{figure}

\begin{figure}
    \centering
    \includegraphics{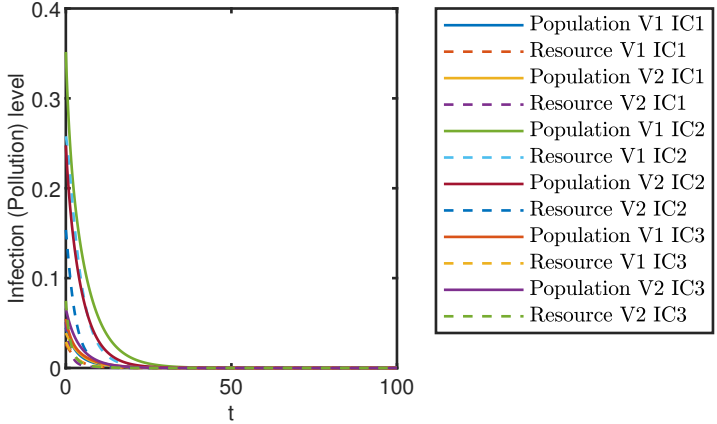}
    \caption{The simulation of the bi-virus case when the reproduction numbers equal $0.3632$ and $0.4563$ respectively. From 3 different random initial conditions, the model always converges to the healthy state.}
    \label{fig:bi-healthy}
\end{figure}

\begin{figure}
    \centering
    \includegraphics{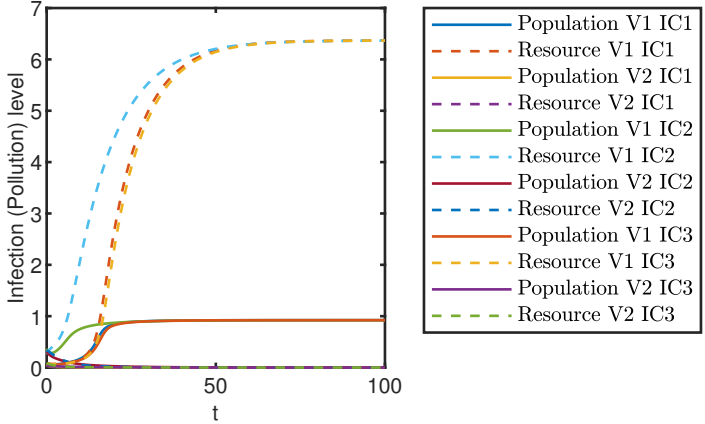}
    \caption{The simulation of the bi-virus case when the reproduction numbers equal $2.3867$ and $0.5647$ respectively. From 3 different random initial conditions, the model always converges to the same dominant endemic equilibrium.}
    \label{fig:bi-end1}
\end{figure}

\begin{figure}
    \centering
    \includegraphics{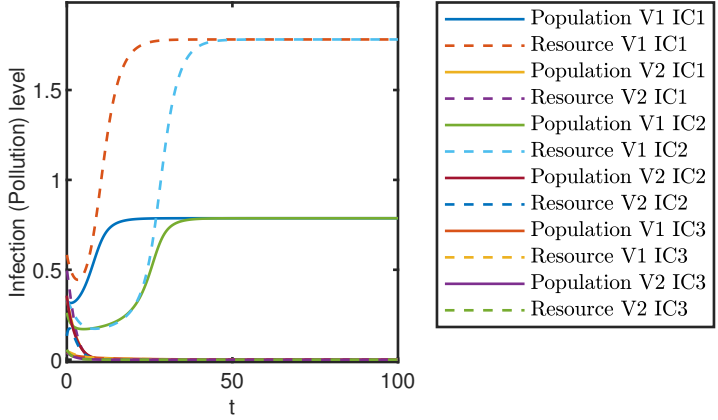}
    \caption{The simulation of the bi-virus case when the reproduction numbers equal $0.6640$ and $0.4685$ respectively. From 3 different random initial conditions, the model either converges to a dominant endemic equilibrium or to the healthy state.}
    \label{fig:bi-end2}
\end{figure}

\begin{figure}
    \centering
    \includegraphics{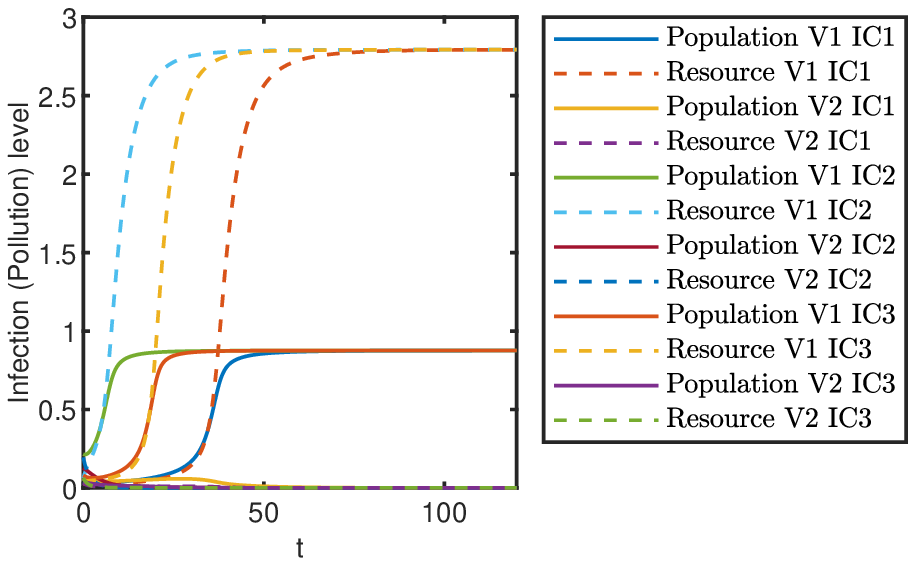}
    \caption{The simulation of the bi-virus case when the reproduction numbers are the same and equal $1.4575$. From 3 different random initial conditions, the model always converges to a dominant endemic equilibrium.}
    \label{fig:bi-end3}
\end{figure}

\subsection{Bi-virus system with HOIs up to 4-body interactions}
Here, we simulate the case when we consider the HOIs up to 4-body interactions. That is to consider bi-virus system \eqref{eq::sys_siws_dt_zbi} with pairwise interactions $F^l(z)=B^l_f z$ and HOIs $h^l_i(z)=\sum_{j,k\in \mathbf{N}_3^{\text{i}}} \beta^l_{i3} A^l_{ijk} z^l_jz^l_k + \sum_{j,k,l\in \mathbf{N}_4^{\text{i}}} \beta_{i4} A_{ijkl} z_jz_kz_l, l=1,2$. From figure \ref{fig:bistability-sig-cub} and \ref{fig:bivirus-cub}, we see the system still processes bistability. These numerical results support our analytical results of a generalized SIWS model. Moreover, the numerical examples further suggest that any coexisting equilibrium is unstable since we don't observe any solution converging to any coexisting equilibrium by calibrating system parameters and initial conditions.

\begin{figure}
    \centering
    \includegraphics{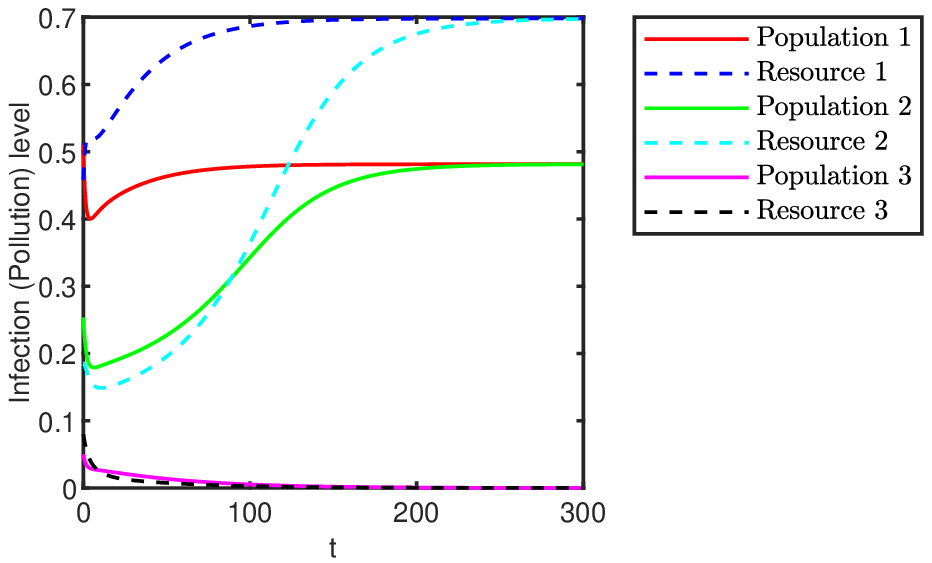}
    \caption{The simulation of the single-virus case. From 3 different random initial conditions, the model either converges to an endemic equilibrium or to the healthy state.}
    \label{fig:bistability-sig-cub}
\end{figure}

\begin{figure}
    \centering
    \includegraphics{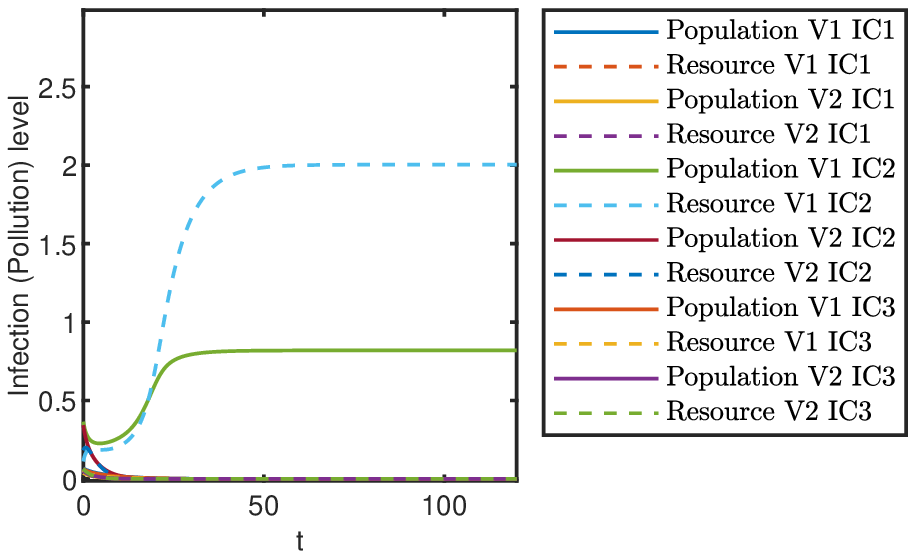}
    \caption{The simulation of the bi-virus case. From 3 different random initial conditions, the model converges either to a dominant endemic equilibrium or to the healthy state.}
    \label{fig:bivirus-cub}
\end{figure}

\subsection{Bi-virus system with non-linear pairwise interactions and polynomial HOIs}
In this subsection, we simulate the case when we consider pairwise interactions are non-linear and the HOIs are polynomial functions. That is to consider bi-virus system \eqref{eq::sys_siws_dt_zbi} with pairwise interactions $f^l_i(z)=\sum_{j,k\in \mathbf{N}_2^{\text{i}}} \beta^l_{i2} A^l_{ij} \log(1+0.2z^l_j)$ and HOIs $h^l_i(z)=\sum_{j,k\in \mathbf{N}_3^{\text{i}}} \beta^l_{i3} A_{ijk} z^l_jz^l_k, l=1,2$. These pairwise interactions are the same as the form suggested by \cite{doshi2022convergence} and the HOIs are the same form as the conventional single-virus system \eqref{eq::sys_siws_hyp_z} and the bi-virus system \eqref{eq::sys_siws_dt_zbi1}.  From figure \ref{fig:bistability-sig-1log} and \ref{fig:biviruslog1}, we see the system still processes bistability. These numerical results again support our analytical results of a generalized SIWS model. Furthermore, by calibrating system parameters and initial conditions, we don't observe that any solution converges to any coexisting equilibrium. Coexisting equilibrium seems always unstable.

\begin{figure}
    \centering
    \includegraphics{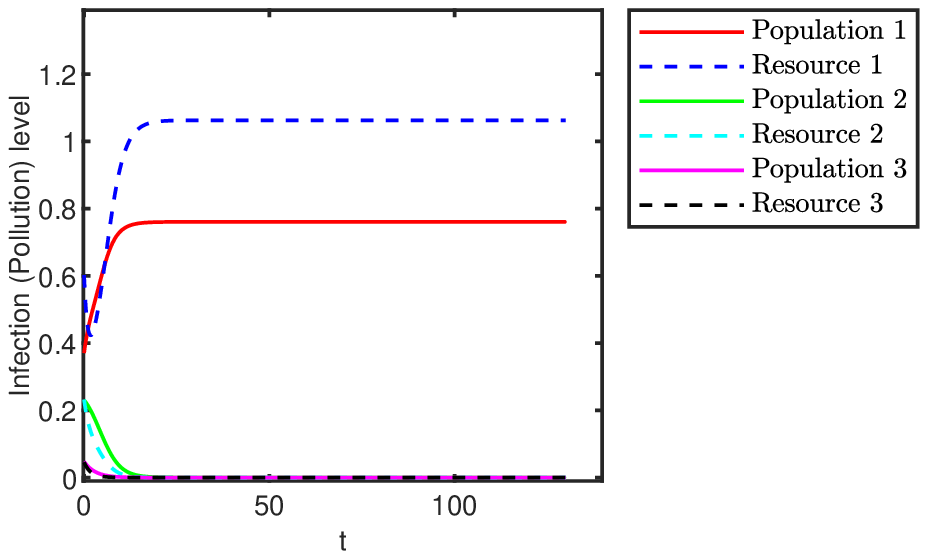}
    \caption{The simulation of the single-virus case. From 3 different random initial conditions, the model either converges to an endemic equilibrium or to the healthy state.}
    \label{fig:bistability-sig-1log}
\end{figure}

\begin{figure}
    \centering
    \includegraphics{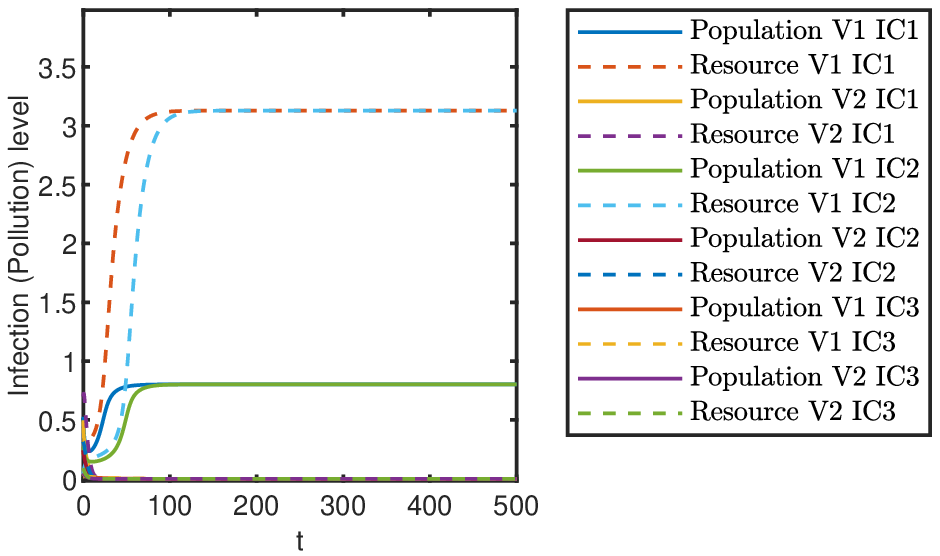}
    \caption{The simulation of the bi-virus case. From 3 different random initial conditions, the model converges either to a dominant endemic equilibrium or to the healthy state.}
    \label{fig:biviruslog1}
\end{figure}

\subsection{Bi-virus system with non-linear pairwise interactions and non-polynomial HOIs}
In this subsection, we simulate the case when we consider pairwise interactions are non-linear and the HOIs are in a similar form. That is to consider bi-virus system \eqref{eq::sys_siws_dt_zbi} with pairwise interactions $f^l_i(z)=\sum_{j,k\in \mathbf{N}_2^{\text{i}}} \beta^l_{i2} A^l_{ij} \log(1+0.2z^l_j)$ and HOIs $h^l_i(z)=\sum_{j,k\in \mathbf{N}_3^{\text{i}}} \beta^l_{i3} A^l_{ijk} z^l_j\log(1+0.2z^l_k), l=1,2$. From figure \ref{fig:multistability-sig-2log} and \ref{fig:biviruslog2}, we see the system processes bistability. These numerical results are still in line with our analytical results of a generalized SIWS model. Furthermore, by calibrating system parameters and initial conditions, we don't observe that any solution converges to any coexisting equilibrium. Coexisting equilibrium seems always unstable.

\begin{figure}
    \centering
    \includegraphics{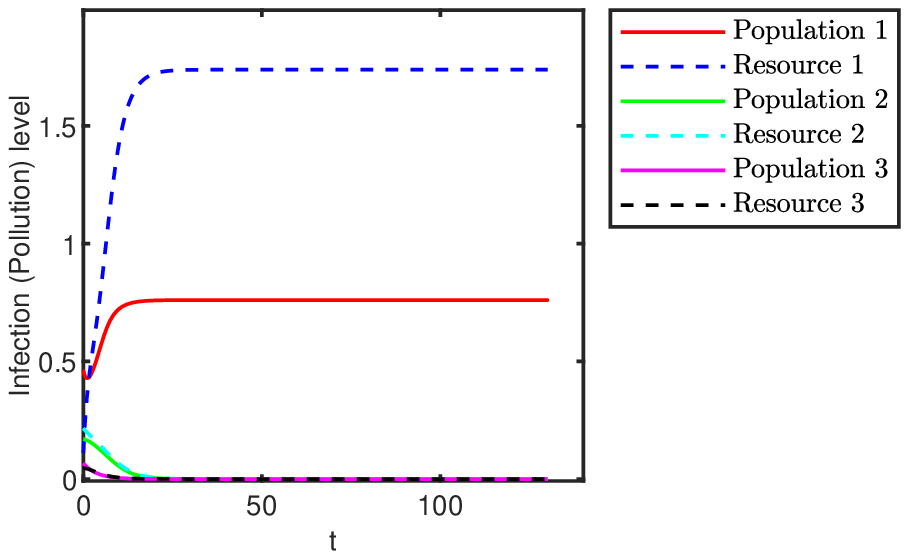}
    \caption{The simulation of the single-virus case. From 3 different random initial conditions, the model either converges to an endemic equilibrium or to the healthy state.}
    \label{fig:multistability-sig-2log}
\end{figure}

\begin{figure}
    \centering
    \includegraphics[height=5cm]{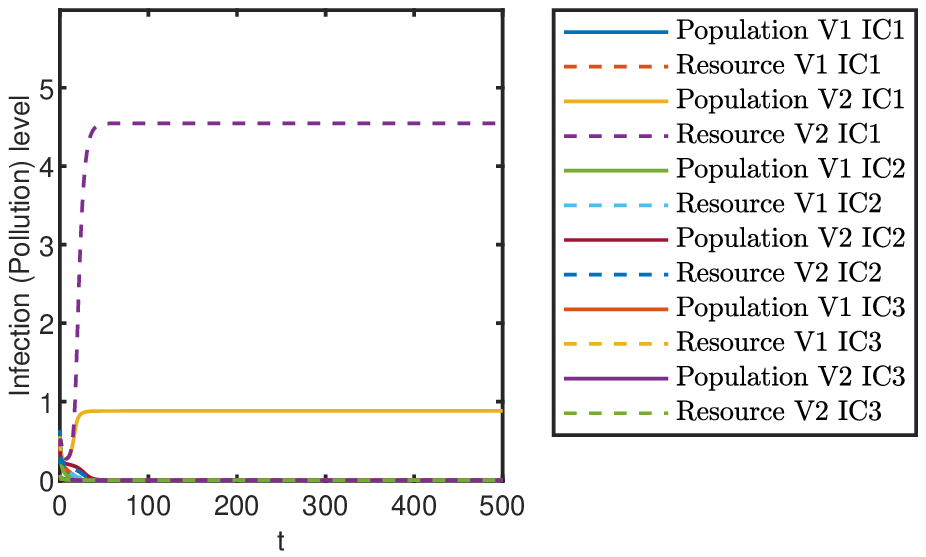}
    \caption{The simulation of the bi-virus case. From 3 different random initial conditions, the model converges either to a dominant endemic equilibrium or to the healthy state.}
    \label{fig:biviruslog2}
\end{figure}

\section{Conclusion and future works}
In this paper, we study a general SIS-type diffusion process with both indirect and direct pathways on a directed hypergraph. Based on a polynomial interaction function, although the interaction among agents is firstly based on a simplicial complex structure, we show that such a model is eventually equivalent to a model on a general hypergraph if we adopt a spreading mechanism of different orders. From a formal level, this has allowed us to study a spreading process on a rather general hypernetwork setting. We further give a full analysis of our proposed model. The system analysis mainly focuses on the healthy-state and the endemic behaviors. Importantly, induced by the higher order interaction, we find out that bistability may even occur when the reproduction number is smaller than $1$. When the reproduction number is greater than $1$, if the higher-order terms are sufficiently small, the unique endemic equilibrium is globally stable. Then, we continue to propose a competing bi-virus  model on a directed hypergraph and the system analysis is analogously carried out. The bistability may also occur in the bi-virus system which is a direct consequence of the bistability of a single-virus system. Particularly, we give a condition for the system having a coexisting equilibrium. However, further numerical studies show that coexisting equilibrium seems always unstable. To further generalize the proposed framework, we consider an abstract interaction function, which in real-life processes may account for interactions that do not follow a polynomial rule. The corresponding single-virus and bi-virus models are proposed and studied. The global system behavior is generally similar to the counterpart with a polynomial interaction function. In addition, we show that all proposed systems in this paper belong to irreducible monotone systems. Under the further assumption of a finite number of equilibria, solutions of these systems always converge to an equilibrium. The theoretical results are finally supported by some numerical examples.

Several future research directions stem from the work presented here; these include, but are not limited to, proving the instability of a coexisting equilibrium, providing further results on the endemic behavior (stability) of the general system, and designing control strategies to regulate the spreading process (e.g. adding or eliminating a node or a hyperedge from a hypergraph). For the last item, we may need to investigate tensor spectral theory \cite{chang2013survey,qi2017tensor}, since a hypergraph can be abstracted as tensors, and manipulation of such a hypergraph may change the eigenvalues of the tensor, which may further influence the system's behavior.




\bibliographystyle{siamplain}
\bibliography{references}
\end{document}